\newtheorem{thm}{Theorem}[section]
\newtheorem{lem}{Lemma}[section]
\newtheorem{prop}{Proposition}[section]
\theoremstyle{definition}
\newtheorem{defn}{Definition}[section]
\theoremstyle{remark}
\newtheorem{rem}{Remark}[section]
\numberwithin{equation}{section}
\newcommand{\abs}[1]{\left\vert#1\right\vert}
\newcommand{\bu}{\mathbf{u}}
\newcommand{\bv}{\mathbf{v}}
\newcommand{\bw}{\mathbf{w}}
\newcommand{\bff}{\mathbf{f}}
\newcommand{\bvarphi}{\bm{\varphi}}
\newcommand{\bpsi}{\bm{\psi}}
\newcommand{\bnu}{\bm{\nu}}
\newcommand{\bGa}{\mathbf{\Gamma}}
\newcommand{\bx}{\mathbf{x}}
\newcommand{\by}{\mathbf{y}}
\newcommand{\rmi}{\mathrm{i}}
\newcommand{\bS}{\mathbf{S}}
\newcommand{\bK}{\mathbf{K}}
\newcommand{\bI}{\mathbf{I}}
\newcommand{\bbS}{\mathbb{S}}
\newcommand{\bA}{\mathbf{A}}
\newcommand{\bb}{\mathbf{b}}
\newcommand{\bF}{\mathbf{F}}
\newcommand{\bt}{\mathbf{t}}
\newcommand{\Lcal}{\mathcal{L}}
\newcommand{\Ocal}{\mathcal{O}}
\newcommand{\Rcal}{\mathcal{R}}
\newcommand{\Tcal}{\mathcal{T}}
\def\ms{\medskip}
\title[Elastodynamical resonances and cloaking]{Elastodynamical resonances and cloaking of negative material structures beyond quasistatic approximation}
\author{Hongjie Li}
\address{Department of Mathematics, The Chinese University of Hong Kong, Shatin, Hong Kong SAR, China.}
\email{hjli@math.cuhk.edu.hk}
\author{Hongyu Liu}
\address{Department of Mathematics, City University of Hong Kong, Kowloon, Hong Kong SAR, China.}
\email{hongyliu@cityu.edu.hk; hongyu.liuip@gmail.com}
\author{Jun Zou}
\address{Department of Mathematics, The Chinese University of Hong Kong, Shatin, Hong Kong SAR, China.}
\email{zou@math.cuhk.edu.hk}
\begin{document}
\maketitle

\begin{abstract}

Given the flexibility of choosing negative elastic parameters, we construct material structures that can induce two resonance phenomena, {referred to as the elastodynamical resonances}. They mimic the emerging plasmon/polariton resonance and anomalous localized resonance in optics for subwavelength particles. However, we study the peculiar resonance phenomena for linear elasticity beyond 
the subwavelength regime. It is shown that the resonance behaviours possess distinct characters, with some similar to the subwavelength resonances, but some sharply different due to the frequency effect. It is particularly noted that we construct a core-shell material structure that can induce anomalous localized resonance as well as cloaking phenomena beyond the quasi-static limit. The study is boiled down to analyzing the so-called elastic Neumann-Poinc\'are (N-P) operator in the frequency regime. We provide an in-depth analysis of the spectral properties of the N-P operator beyond the quasi-static approximation, 
{and these results are of independent interest to the spectral theory of layer potential operators.} 

\medskip

\noindent{\bf Keywords:}~~negative materials,  core-shell structure, anomalous localized resonance, beyond quasistatic limit, Neumann-Poinc\'are operator, spectral, invisibility cloaking

\ms
\noindent{\bf 2010 Mathematics Subject Classification:}~~35R30, 35B30, 35Q60, 47G40

\end{abstract}

\section{Introduction}\label{sec:introduction} 

\subsection{Mathematical formulation and main findings }
{We initially focus on the mathematics, not on the physics, and}
present the Lam\'e system which governs the propagation of linear elastic deformation.

For $\bx\in\mathbb{R}^N$, $N=2,3$, 
we write $\mathbf{C}(\bx):=(\mathrm{C}_{ijkl}(\bx))_{i,j,k,l=1}^N$ as a four-rank elastic material tensor defined by
 \begin{equation}\label{eq:ten}
 \mathrm{C}_{ijkl}(\bx):=\lambda(\bx)\bm{\delta}_{ij}\bm{\delta}_{kl}+\mu(\bx)(\bm{\delta}_{ik}\bm{\delta}_{jl}+\bm{\delta}_{il}\bm{\delta}_{jk}),\ \ \bx\in\mathbb{R}^N,
 \end{equation}
 where $\bm{\delta}$ is the Kronecker delta. In \eqref{eq:ten}, $\lambda$ and $\mu$ are two scalar functions and referred to as the Lam\'e parameters. For a regular elastic material, the Lam\'e parameters satisfy the following strong convexity conditions,
\begin{equation}\label{eq:con}
  \mathrm{i)} ~~~\mu>0\,; \qquad \mathrm{ii)} ~~~ 2\lambda+N\mu>0.
 \end{equation}

Next, we introduce a core-shell-matrix material structure for our study. 
{Let $D$ and $\Omega\subset\mathbb{R}^N$ be two bounded $C^{1,\alpha}$-domains for $\alpha\in(0,1)$ 
such that $D\Subset\Omega$, and both $\Omega\backslash\overline{D}$ and 
$\mathbb{R}^N\backslash\overline{\Omega}$ are connected.} Assume that the matrix $\mathbb{R}^N\backslash\overline{\Omega}$ is occupied by a regular elastic material parameterized by two Lam\'e constants $(\lambda,\mu)$ satisfying \eqref{eq:con}. The shell $\Omega\backslash\overline{D}$ is occupied by a metamaterial whose Lam\'e constants are given by $(\hat{\lambda}, \hat{\mu})$. It is assumed that $\hat\lambda$ and $\hat\mu$ can be flexibly chosen and do not necessarily fulfil the strong convexity conditions \eqref{eq:con}. In fact, they are complex-valued with $\Re\hat{\lambda}, \Re\hat{\mu}$ breaking the strong convexity conditions \eqref{eq:con} and $\Im\hat{\lambda}, \Im\hat{\mu}\in\mathbb{R}_+$. This is critical in our study and shall be further remarked in what follows. The inner core $D$ is occupied by a regular elastic material whose Lam\'e constants $(\breve{\lambda}, \breve{\mu})$ satisfy the strong convex conditions \eqref{eq:con}. We write $\mathbf{C}_{\mathbb{R}^N\backslash\overline{\Omega},\lambda,\mu}$ to specify the dependence of the elastic tensor on the domain $\mathbb{R}^N\backslash\overline{\Omega}$ as well as the Lam\'e parameters $(\lambda,\mu)$. The same notation applies to the tensors $\mathbf{C}_{\Omega\backslash\overline{D},\hat{\lambda},\hat{\mu}}$ and $\mathbf{C}_{D,\breve{\lambda},\breve{\mu}}$. Now we introduce the following elastic tensor:
\begin{equation}\label{eq:pa1}
 \mathbf{C}_0=\mathbf{C}_{\mathbb{R}^N\backslash\overline{\Omega},\lambda,\mu} + \mathbf{C}_{\Omega\backslash\overline{D},\hat{\lambda},\hat{\mu}} + \mathbf{C}_{D,\breve{\lambda},\breve{\mu}}.
\end{equation}
The tensor $\mathbf{C}_0$ describes an elastic material configuration of a core-shell-matrix structure with the metamaterial located in the shell. We point out that it may happen that $D=\emptyset$ in our subsequent analysis. In such a case, $\mathbf{C}_0$ is said to be a metamaterial structure with no core. In what follows, material structures with a core or without a core can induce different resonance phenomena.

Let $\bff\in L^\infty_{loc}(\mathbb{R}^N\backslash\overline{\Omega})^N$ signify an elastic source that is compactly supported in $\mathbb{R}^N\backslash\overline{\Omega}$. The elastic displacement field $\bu=(u_i)_{i=1}^N\in H^1_{loc}(\mathbb{R}^N)^N$ induced by the interaction between the source $\bff$ and the medium configuration $\mathbf{C}_0$ is governed by the following Lam\'e system: 
\begin{equation}\label{eq:lame1}
\begin{cases}
& \nabla\cdot\mathbf{C}_0\nabla^s\mathbf{u}(\mathbf{x})+\omega^2 \mathbf{u}(\bx)=\mathbf{f}(\bx)\quad\mbox{in}\ \ \mathbb{R}^N,\medskip\\
& \mbox{$\mathbf{u}(\bx)$ satisfies the Kupradze radiation condition,}
\end{cases}
\end{equation}
where $\omega\in\mathbb{R}_+$ signifies an angular frequency. Here and also in what follows, the operator $\nabla^s$ is the symmetric gradient defined by
 \begin{equation}\label{eq:sg1}
 \nabla^s\mathbf{u}:=\frac{1}{2}\left(\nabla\mathbf{u}+\nabla\mathbf{u}^t \right),
 \end{equation}
 where $\nabla\bu$ denotes the matrix $(\partial_j u_i)_{i,j=1}^N$ and the superscript $t$ signifies the matrix transpose. It follows from \cite{Kup} that the elastic displacement $\mathbf{u}(\bx)$ can be decomposed into
 $ \bu =\bu_{p} + \bu_{s}$ {in $\mathbb{R}^N\backslash\overline\Omega$}, where $\bu_{p}$ and $\bu_{s}$ are respectively referred to as the pressure and shear waves and satisfy the following equations:
 \begin{equation}
     \left(\triangle+k_{p}^{2}\right) \mathbf{u}_{p}=0, \quad \nabla \times \mathbf{u}_{p}=0;\ \
      \left(\triangle+k_{s}^{2}\right) \mathbf{u}_{s}=0, \quad \nabla \cdot \mathbf{u}_{s}=0 ,
\end{equation}
 with 
 \begin{equation}\label{pa:ksp}
 k_{s}:=\omega / \sqrt{\mu} \quad \mbox{and}\quad k_{p}:=\omega / \sqrt{\lambda+2 \mu}.
\end{equation}
In \eqref{eq:lame1}, the the Kupradze radiation condition {is expressed as} 
 \begin{equation}\label{eq:radi}
\nabla \mathbf{u}_{p} \hat{\mathbf{x}}-\mathrm{i} k_{p} \mathbf{u}_{p} =\mathcal{O}\left(|\mathbf{x}|^{-(N+1)/2}\right)\quad\mbox{and}\quad
\nabla \mathbf{u}_{s} \hat{\mathbf{x}}-\mathrm{i} k_{s} \mathbf{u}_{s} =\mathcal{O}\left(|\mathbf{x}|^{-(N+1)/2}\right)
\end{equation}
{as $\abs{\bx}\rightarrow \infty$, which hold uniformly in $\hat{\bx}=\bx/|\bx|\in\mathbb{S}^{N-1}$.}

Next, we recall the quasi-static condition for the above elastic scattering problems:
\begin{equation}\label{eq:qs1}
\omega\cdot\mathrm{diam}(\Omega)\ll 1,
\end{equation}
which signifies that the size of the material structure $\Omega$, i.e. the diameter of $\Omega$, is much smaller than the operating wavelength $2\pi/\omega$. In the current article, we shall instead mainly study the case beyond the quasi-static regime, namely 
\begin{equation}\label{eq:qs2}
\omega\cdot\mathrm{diam}(\Omega)\sim 1. 
\end{equation}
For simplicity, it is sufficient for us to require that 
\begin{equation}\label{eq:qs3}
\omega\sim 1\quad\mbox{and}\quad \mathrm{diam}(\Omega)\sim 1. 
\end{equation}

We proceed to introduce the following functional for $\bu,\bv\in \big(H^1(\Omega\backslash\overline{D})\big)^N$:
\begin{equation}\label{eq:func1}
\begin{split}
 P_{\hat{\lambda},\hat{\mu}}(\bu,\bv) = &\int_{\Omega\backslash\overline{D}} \nabla^s\bu:\mathbf{C}_0\overline{\nabla^s\bv}d\bx 
 =  \int_{\Omega\backslash \overline{D}}\Big(\hat{\lambda}(\nabla\cdot\mathbf{\bu})\overline{(\nabla\cdot\mathbf{\bv})} + 2\hat{\mu}\nabla^s\mathbf{u}:\overline{\nabla^s\mathbf{\bv}} \Big)\ d \bx,
 \end{split}
\end{equation}
where  $\mathbf{C}_0$ and $\nabla^s$ are defined in
\eqref{eq:pa1} and \eqref{eq:sg1}, respectively. In \eqref{eq:func1} and also in what follows, $\mathbf{A}:\mathbf{B}=\sum_{i,j=1}^N a_{ij}b_{ij}$ for two matrices $\mathbf{A}=(a_{ij})_{i,j=1}^N$ and $\mathbf{B}=(b_{ij})_{i,j=1}^N$. 
The energy dissipation of the elastic system \eqref{eq:lame1}--\eqref{eq:radi} is given by 
\begin{equation}\label{def:E}
\mathscr{E}(\bu)=\Im P_{\hat{\lambda},\hat{\mu}}(\bu,\bu).
\end{equation}
We are now in a position to give the precise meaning of the elastic resonances for our subsequent study. 
\begin{defn}\label{def:1}
Consider the Lam\'e system \eqref{eq:lame1}--\eqref{eq:radi} associated with the material structure $\mathbf{C}_0$ 
in
\eqref{eq:pa1} under the assumption \eqref{eq:qs3}. We say that resonance occurs if it holds that
\begin{equation}\label{con:res}
 \mathscr{E}(\bu) \geq M 
\end{equation}
for $M\gg 1$. 
If in addition to \eqref{con:res}, the displacement field $\bu$ further fulfils the following boundedness condition:
\begin{equation}\label{con:bou}
|\bu|\leq C\quad \mbox{when} \quad |\bx|>\tilde{R},
\end{equation}
for some $C\in\mathbb{R}_+$ and $\tilde{R}\in\mathbb{R}_+$ such that $\Omega\subset B_{\tilde{R}}$, then we say that anomalous localized resonance (ALR) occurs. Here and also in what follows, $B_{\tilde R}$ signifies a ball of radius $\tilde R$ and centred at the origin, and $C, \tilde{R}$ are constants independent of $\mathbf{C}_0$ and $\mathbf{f}$. 
\end{defn}

\begin{rem}\label{rem:s1}
It is noted that the resonant condition \eqref{con:res} indicates that the resonant field $\mathbf{u}$ exhibits highly oscillatory behaviour.  Moreover, in our subsequent study, it allows that $M\rightarrow+\infty$, which indicates that in the limiting case, the scattering system \eqref{eq:lame1}--\eqref{eq:radi} loses its well-posedness. Indeed, it shall be seen in what follows that in the limiting case, the solutions to the scattering system \eqref{eq:lame1}--\eqref{eq:radi} 
are not unique. It is clear that the metamaterials located in $\Omega\backslash\overline{D}$ plays a critical role for the occurrence of the resonance. In fact, if $\mathbf{C}_0$ is a regular elastic material configuration, then the Lam\'e system \eqref{eq:lame1}--\eqref{eq:radi} is well-posed and the resonance does not occur.
\end{rem}

\begin{rem}\label{rem:s2}
If ALR occurs, one can show invisibility cloaking phenomenon can be induced. In fact, by normalization, we set $\widetilde{\mathbf{f}}:=\mathbf{f}/\sqrt{\bm{\alpha}}$, where $\bm{\alpha}:=\Im P_{\hat{\lambda},\hat{\mu}}(\bu,\bu)\gg 1$. One can see that both $\widetilde{\mathbf{f}}$ and the material structure $\mathbf{C}_0$ are nearly invisible to observations made outside $B_{\tilde{R}}$. Indeed, it is easily seen that the induced elastic field $\widetilde{\mathbf{u}}=\mathbf{u}/\sqrt{\bm{\alpha}}\ll 1$ in $\mathbb{R}^N\backslash B_{\tilde{R}}$; see \cite{Ack13,DLL2,LiLiu2d,LLL1} for more relevant discussions. Hence, when ALR occurs according to Definition~\ref{def:1}, we also say that cloaking due to anomalous localized resonance (CALR) occurs.   
\end{rem}

The major findings of this article can be briefly summarized as follows with the technical details supplied in 
the sequel; see Theorems \ref{th:reg}, \ref{thm:reson} and \ref{thm:CALR}:

\medskip
Consider the Lam\'e system \eqref{eq:lame1}--\eqref{eq:radi} associated with the material structure $\mathbf{C}_0$ 
in \eqref{eq:pa1}, under the assumption \eqref{eq:qs3}.
\begin{enumerate}
\item Suppose that the material structure has no core, namely $D=\emptyset$. There exist generic material structures of the form \eqref{eq:pa1} such that resonance occurs. 

\item Under the same setup as the above (1), but with $\Omega$ being radially symmetric, we derive the explicit construction of all the material structures that can induce resonance. Moreover, we present a comprehensive analysis on the quantitative behaviours of the resonant field. It is shown that the resonance behaviours possess distinct characters, with some similar to the subwavelength resonances, but some sharply different due to the frequency effect.

\item We construct a core-shell metamaterial structure that can induce CALR beyond the quasi-static limit. 

\item In establishing the resonance results, we derive comprehensive spectral properties of the non-static elastic Neumann-Poincar\'e (N-P) operator, {which will be introduced in the sequel. 
These results} are of independent interest to the spectral theory of layer potential operators. 
\end{enumerate}

%
%
%
%

\begin{rem}\label{rem:s2}
According to our discussion in Remark~\ref{rem:s1}, the main technical ingredient in our study is to derive some relations satisfied by the material parameters in $\mathbf{C}_0$, the geometric parameters of $\Omega/D$ as well as the frequency $\omega$ such that the resonance conditions (Definition~\ref{def:1}) can be fulfilled. It is clear that these conditions are coupled nonlinearly and in fact they are essentially determined by the infinite-dimensional kernel of the PDE system \eqref{eq:lame1}, i.e. the set of nontrivial solutions to \eqref{eq:lame1} with $\mathbf{f}\equiv\mathbf{0}$. 
\end{rem}

\begin{rem}\label{rem:s3}
We would like to make a remark on the metamaterial parameters in $\Omega\backslash\overline{D}$, namely $\hat\lambda$ and $\hat\mu$. As pointed out in Remark~\ref{rem:s1}, $\Re\hat\lambda$ and $\Re\hat\mu$ are allowed to break the strong convexity conditions in \eqref{eq:con}. This is critical for inducing the resonances. On the other hand, $\Im\hat\lambda$ and $\Im\hat\mu$ are required to be positive. In a certain sense, they play the role of regularization parameters that can retain the well-posedness of the Lam\'e system \eqref{eq:lame1}--\eqref{eq:radi}. Moreover, they are also critical physical parameters in order to induce the resonances. In fact, $\Im\hat\lambda$ and $\Im\hat\mu$ should be delicately chosen according to $\Re\hat\lambda$, $\Re\hat\mu$ and $\omega$ as well as the asymptotic parameter $M$ in \eqref{con:res}. This is in sharp contrast to the related studies in the static/quasi-static case where $\Im\hat\lambda$ and $\Im\hat\mu$ play solely as the regularization parameters which are asymptotically small generic parameters and converge to zero in the limiting case. This shall become clearer in our subsequent analysis. 
\end{rem}


\subsection{Connection to existing studies and discussions}

Metamaterials are artificially engineered materials to have properties that are not found in naturally occurring materials. Negative materials are an important class of metamaterials which possess negative material parameters. Negative materials can be artificially engineered by assembling subwavelength resonators periodically or randomly; see e.g. \cite{AZ, LLZ,SPVNS} and the references cited therein. Negative materials are revolutionizing many industrial applications including antennas \cite{ETS}, absorber \cite{LV}, invisibility cloaking \cite{Ack13, s25, LLL,GWM3,GWM7}, superlens \cite{FLS, P} and super-resolution imaging \cite{ARYZ,DLZ3}, to mention just a few. 

We are mainly concerned with the quantitative theoretical understandings of negative metamaterials, which have received considerable interest recently in the literature. A variety of peculiar resonance phenomena form the fundamental basis for many striking applications of negative metamaterials. Intriguingly, those resonance phenomena are distinct and possess distinguishing characters. For a typical scenario, let us consider the Lam\'e system \eqref{eq:lame1} in the static case, namely $\omega\equiv 0$. If $\mathbf{C}_0$ is allowed to possess negative material parameters, it is no longer an elliptic tensor, i.e. the strong convexity conditions \eqref{eq:con} can be broken. In such a case, the PDE system \eqref{eq:lame1} may possess (infinitely many) nontrivial solutions even with $\mathbf{f}\equiv\mathbf{0}$. Hence, the infinite kernel of the non-elliptic partial differential operator (PDO), namely $\nabla\cdot\mathbf{C}_0\nabla^s$, can induce certain resonances if the excitation term $\mathbf{f}$ is properly chosen. Similar resonance phenomena have been more extensively and intensively investigated for acoustic and electromagnetic metamaterials that are governed by the Helmholtz and Maxwell systems, respectively. They are referred to as the plasmon/polariton resonances in the literature; see \cite{Ack13, AKL, BLL,DLZ1,Z} for the Helmholtz equation, \cite{ARYZ,DLZ2, LLLW} for the Maxwell system, and \cite{DLL,DLL1, DLL2, LiLiu3d, LLL1} for the Lam\'e system. Most of the existing studies in the literature are concerned with the static or quasi-static cases (cf. \eqref{eq:qs1}). A widely studied resonance phenomenon is induced by the interface of negative and positive materials, which is referred to as the plasmon/polariton resonance in the literature. It turns out that the plasmon/polariton resonant oscillations are localized around the metamaterial interface, and hence are usually called the surface plasmon/polariton resonances.  

It is not surprising that the occurrence of plasmon/polariton resonances strongly depend on the medium configuration as well as the geometry of the metamaterial structure, which are delicately coupled together in certain nonlinear relations. In this paper, we {for the first time} show the existence of generic metamaterial structures that can induce resonances in elasticity beyond quasi-static approximations {in both 2D and 3D}. It turns out that in addition to the medium and geometric parameters of the metamaterial structure, the operating frequency shall also play a critical role and needs to be incorporated into the nonlinear coupling mentioned above. In addition to its theoretical significance, we would like to emphasize that our study also uncover two interesting physical phenomena due to the frequency effect. First, the resonant oscillation outside the material structure is localized around the metamaterial interface, but inside the material structure it is not localized around the interface, which is sharply different from the subwavelength resonances; see more detailed discussion at the end of Section 3. Second, as already commented in Remark~\ref{rem:s3}, the loss parameters $\Im\hat\lambda$ and $\Im\hat\mu$ shall also play an important role, and they generally are required to be non-zero constants in the limiting case; see Remark~\ref{rem:nn1} in what follows for more details. Finally, as noted earlier, negative materials usually occur in the nanoscale, and hence it is unobjectionable that many studies are concerned with subwavelength resonances. On the other hand, there are also conceptual and visionary studies which employ metamaterials for novel applications beyond the quasi-static limit, say e.g. the superlens \cite{FLS, P}. The proposed study in this paper follows a similar spirit to the latter class mentioned above, though we are mainly concerned with the theoretical aspects. 

If the metamaterial structure is constructed in the core-shell form, it may induce the cloaking phenomenon due to the anomalous localized resonance \cite{bk:ab, DLL, LiLiu2d}; that is the whole structure is invisible for an impinging wave. This is a much more delicate and subtle resonance phenomenon: the resonant oscillation is localized within a bounded region, i.e. $B_{\tilde R}$ in Definition~\ref{def:1}, and moreover it not only depends on the material and geometric configurations of the core-shell structure, but also critically depends on the location of the excitation source. In addition to the invisibility phenomenon mentioned above, it is observed in that any small objects located near the material structure within the critical radius are also invisible to faraway observations; see \cite{GWM3,GWM7} for related discussions. All of the existing studies are confined within the radial geometries since on the one hand, one needs explicit expressions of the spectral system of certain integral operators \cite{Ack13,DLL,LiLiu2d,s25,LLLW,DLL} and on the other hand it seems unnecessary for constructing material structures of general shapes for the cloaking purpose. 
The CALR was recently studied in \cite{DLL2} for 3D elasticity beyond the quasi-static approximation for the spherical structure. Hence, in the current article, we mainly consider the CALR in two dimensions. Nevertheless, we would like to remark that the derivation in 2D is more subtle and technically involved. The main reason is that in 3D elasticity there exists a certain class of shear waves that can be decoupled from the other shear waves and compressional waves \cite{DLL2}. The decoupling property significantly simplifies the analysis of the CALR. However, in 2D elasticity all the shear waves and compressional waves are coupled together, which substantially increase the complexity of the relevant theoretical analysis. In fact, we develop several technically new ingredients in handling the 2D case in the present article.

Finally, we would like to discuss one more technical novelty of our study. In studying the metamaterial resonances, one powerful tool is to make use of the layer potential theory to reduce the underlying PDE system into a system of certain integral operators. In doing so, the resonance analysis is boiled down to analyzing the spectral properties of the integral operators. In this paper, we provide an in-depth analysis of the so-called elastic Neumann-Poincar\'e (N-P) operator in the frequency regime. In particular, we derive the complete spectral system of the elastic N-P operator with several interesting observations. These results are new to the literature and are of independent interest to the spectral theory of elastic layer potential operators (cf. \cite{AKM,DLL,MR,R}) .

The rest of the paper is organized as follows. In Section 2, we present several technical auxiliary results. Section 3 is devoted to resonance analysis for material structures with no core. In Section 4, we construct a core-shell structure that can induce cloaking due to anomalous localized resonance. 

\section{Auxiliary results}\label{sec:auxiliary} 

In this section, we derive some key auxiliary results that will be needed for our subsequent analysis. 

Set  $\bx=(x_j)_{j=1}^N\in\mathbb{R}^N$ to be the Euclidean coordinates and $r=\abs{\bx}$. Let $\theta_{\bx}$ be the angle between $\bx$ and $x_1$-axis. If there is no ambiguity, we write $\theta$ instead of $\theta_{\bx}$ for simplicity. Let $\bnu$ signify the outward unit normal to a boundary $\partial \Omega$. If the domain $\Omega$ is a circle $B_R$, then $\bnu = (\cos(\theta), \sin(\theta))^t$ and the direction $\mathbf{t} = (-\sin(\theta), \cos(\theta))^t$ is the tangential direction on the boundary $\partial B_R$. Denote by $\nabla_{\bbS}$ the surface gradient.

The Lam\'e operator $\Lcal_{\lambda,\mu}$ associated with the parameters $(\lambda, \mu)$ is defined by
\begin{equation}\label{op:lame}
 \Lcal_{\lambda,\mu}\bw:=\mu \triangle\bw + (\lambda+ \mu)\nabla\nabla\cdot\bw.
\end{equation}
The traction (the conormal derivative) of $\bw$ on the boundary $\partial \Omega$ is defined by
\begin{equation}\label{eq:trac}
\partial_{\bnu}\bw=\lambda(\nabla\cdot \bw)\bnu + 2\mu(\nabla^s\bw) \bnu,
\end{equation}
where the operator $\nabla^s$ is defined in \eqref{eq:sg1}.
From \cite{bk:ab}, the fundamental solution $\bGa^{\omega}=(\Gamma^{\omega}_{i,j})_{i,j=1}^2$ to the operator $\Lcal_{\lambda,\mu}+\omega^2$ in two dimensions is given by
\begin{equation}\label{eq:ef}
 \left(\Gamma_{i, j}^{\omega}\right)_{i, j=1}^{2}(\mathbf{x})=-\frac{\rmi \bm{\delta}_{i j}}{4 \mu} H_{0}\left(k_{s}|\mathbf{x}|\right)+\frac{\mathrm{i}}{4 \pi \omega^{2}} \partial_{i} \partial_{j}\left(H_{0}\left(k_{p}|\mathbf{x}|\right)-H_{0}\left(k_{s}|\mathbf{x}|\right)\right),
\end{equation}
where $H_0(\cdot)$ is the Hankel function of the first kind of order 0, and $k_s$ and $k_p$ are defined in \eqref{pa:ksp}. 
The corresponding fundamental solution $\bGa^{\omega}=(\Gamma^{\omega}_{i,j})_{i,j=1}^3$ in three dimensions are given by
\begin{equation}\label{eq:ef3}
 (\Gamma^{\omega}_{i,j})_{i,j=1}^3(\bx)=-\frac{\bm{\delta}_{ij}}{4\pi\mu|\bx|}e^{\rmi k_s |\bx|} + \frac{1}{4\pi \omega^2}\partial_i\partial_j\frac{e^{\rmi k_p|\bx|} - e^{\rmi k_s|\bx|}}{|\bx|}.
\end{equation}
Then the single-layer potential associated with the fundamental solution $\bGa^{\omega}$ is defined as 
\begin{equation}\label{eq:single}
 \bS_{\partial\Omega}^{\omega}[\bvarphi](\bx)=\int_{\partial \Omega} \bGa^{\omega}(\bx-\by)\bvarphi(\by)ds(\by), \quad \bx\in\mathbb{R}^N,
\end{equation}
for $\bvarphi\in L^2(\partial \Omega)^N$. On the boundary $\partial \Omega$, the conormal derivative of the single-layer potential satisfies the following jump formula
\begin{equation}\label{eq:jump}
 \frac{\partial \bS_{\partial\Omega}^{\omega}[\bvarphi]}{\partial \bnu}|_{\pm}(\bx)=\left( \pm\frac{1}{2}\bI + \bK_{\partial\Omega}^{\omega,*} \right)[\bvarphi](\bx) \quad \bx\in\partial \Omega,
\end{equation}
where
\[
\bK_{\partial\Omega}^{\omega,*}[\bvarphi](\bx)=\mbox{p.v.} \int_{\partial \Omega} \frac{\partial \bGa^{\omega}}{\partial \bnu(\bx)}(\bx-\by)\bvarphi(\by)ds(\by),
\]
with $\mbox{p.v.}$ standing for the Cauchy principal value and the subscript $\pm$ indicating the limits from outside and inside $\Omega$, respectively. The operator $\bK_{\partial\Omega}^{\omega,*}$ is called the Neumann-Poincar\'e (N-P) operator associated with the Lam\'e system.

Next, we present some properties of the N-P operator $\bK_{\partial\Omega}^{\omega,*}$. It is shown in \cite{AKM} that the operator $\bK_{\partial\Omega}^{\omega,*}$ is not compact and only polynomially compact in the following sense.
\begin{lem}
The N-P operator $\bK_{\partial\Omega}^{\omega,*}$ is polynomially compact in the sense that in two dimensions, the operator $ (\bK_{\partial\Omega}^{\omega,*})^2 - k_0^2 \mathbf{I}$ is compact; 
while in three dimensions, the operator $\bK_{\partial\Omega}^{\omega,*} \left( (\bK_{\partial\Omega}^{\omega,*})^2 - k_0^2 \mathbf{I} \right)$ is compact, where 
\[
 k_0:=\frac{\mu}{2(\lambda + 2\mu)}.
\]
\end{lem}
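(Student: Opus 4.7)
The plan is to reduce the claim to the already known static counterpart via a compact perturbation argument. Specifically, the two results proven by Ando–Kang–Miyanishi in \cite{AKM} state that $(\bK_{\partial\Omega}^{0,*})^2 - k_0^2\bI$ is compact on $L^2(\partial\Omega)^2$ in two dimensions, while $\bK_{\partial\Omega}^{0,*}\bigl((\bK_{\partial\Omega}^{0,*})^2 - k_0^2\bI\bigr)$ is compact on $L^2(\partial\Omega)^3$ in three dimensions, with the same constant $k_0 = \mu/(2(\lambda+2\mu))$. The goal is therefore to prove that $\bR^\omega := \bK_{\partial\Omega}^{\omega,*} - \bK_{\partial\Omega}^{0,*}$ is compact on $L^2(\partial\Omega)^N$, and then invoke these static identities.

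First I would compare $\bGa^\omega$ with the static Kelvin fundamental tensor $\bGa^0$ using the explicit expansions \eqref{eq:ef} and \eqref{eq:ef3}. In three dimensions one substitutes the Taylor expansion $e^{\rmi k r}/r = 1/r + \rmi k - k^2 r/2 + O(r^2)$ into both summands of \eqref{eq:ef3}; the $1/r$ singularity of the first summand and the $\partial_i\partial_j r$-singularity produced by the second summand (with coefficient $(k_p^2-k_s^2)/\omega^2 = -(\lambda+\mu)/(\mu(\lambda+2\mu))$, which is independent of $\omega$) together reconstitute $\bGa^0$, while the remainder $\bGa^\omega-\bGa^0$ is bounded and smooth away from the diagonal. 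In two dimensions, the logarithmic leading parts of $H_0(k_p r)$ and $H_0(k_s r)$ cancel in the difference $H_0(k_p r)-H_0(k_s r)$ (the $k$-dependence reduces to an additive constant $\log(k_p/k_s)$); combining this with the same identity $\omega^2 = \mu k_s^2 = (\lambda+2\mu)k_p^2$ shows that the $\log|\bx|$ singularity of $\bGa^\omega$ is the static one, and $\bGa^\omega-\bGa^0$ is of class $O(r^2\log r)$ near the diagonal. In both cases, applying $\partial_{\bnu(\bx)}$ to the remainder produces a kernel whose singularity is strictly weaker than the principal value kernel of $\bK_{\partial\Omega}^{0,*}$, from which compactness of $\bR^\omega$ on $L^2(\partial\Omega)^N$ follows by a standard Schur-test argument on weakly singular integral operators over a $C^{1,\alpha}$ surface.

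With $\bK_{\partial\Omega}^{\omega,*} = \bK_{\partial\Omega}^{0,*} + \bR^\omega$ and $\bR^\omega$ compact, a direct expansion gives
\begin{equation*}
(\bK_{\partial\Omega}^{\omega,*})^2 - k_0^2\bI \;=\; \bigl((\bK_{\partial\Omega}^{0,*})^2 - k_0^2\bI\bigr) \;+\; \bK_{\partial\Omega}^{0,*}\bR^\omega + \bR^\omega\bK_{\partial\Omega}^{0,*} + (\bR^\omega)^2,
\end{equation*}
in which each new summand is a product involving the compact operator $\bR^\omega$ and the bounded operator $\bK_{\partial\Omega}^{0,*}$, hence compact. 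Thus $(\bK_{\partial\Omega}^{\omega,*})^2 - k_0^2\bI$ inherits compactness from its static counterpart, settling the two-dimensional claim. The three-dimensional case is handled by a further composition with $\bK_{\partial\Omega}^{\omega,*} = \bK_{\partial\Omega}^{0,*} + \bR^\omega$, using that the ideal of compact operators is closed under left and right multiplication by bounded operators.

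The main obstacle is the singularity-cancellation analysis in the first paragraph above: one must track derivatives of Hankel and exponential expansions carefully enough to be sure that, after subtracting $\bGa^0$ and applying the conormal derivative, the remaining kernel sits in a smoothness class that yields a compact operator on $L^2(\partial\Omega)^N$ for a general $C^{1,\alpha}$-domain $\partial\Omega$. In two dimensions this is the more delicate calculation, since cancellation between $H_0(k_s r)$ and $H_0(k_p r)$ takes place only after accounting for the prefactor $1/\omega^2$, while in three dimensions a purely polynomial Taylor computation suffices.
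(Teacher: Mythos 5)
The paper itself does not prove this lemma; it cites the statement directly from \cite{AKM}, so you are supplying an argument where the paper only gives a reference. Your route --- decomposing $\bK_{\partial\Omega}^{\omega,*} = \bK_{\partial\Omega}^{0,*} + \bigl(\bK_{\partial\Omega}^{\omega,*} - \bK_{\partial\Omega}^{0,*}\bigr)$, proving the difference compact, and transferring the static polynomial compactness using that compact operators form a two-sided ideal --- is the standard way to obtain the frequency-domain result, and it is sound. Your cancellation observations are correct: in 2D the leading logarithms in $H_0(k_p|\bx|) - H_0(k_s|\bx|)$ cancel up to an additive constant that $\partial_i\partial_j$ annihilates, and $(k_p^2 - k_s^2)/\omega^2 = -(\lambda+\mu)/(\mu(\lambda+2\mu))$ is $\omega$-independent, so the principal singularity of $\bGa^\omega$ coincides with that of the static Kelvin tensor; in 3D the Taylor expansion of $e^{\rmi k r}/r$ does the same. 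After subtraction and one conormal derivative, the remainder kernel is bounded in 2D and $O(|\bx-\by|^{-1})$ in 3D, hence weakly singular on a $C^{1,\alpha}$ boundary, giving compactness of the difference and closing the argument. The only real gap --- which you flag yourself --- is that the derivative-level estimate on $\bGa^\omega - \bGa^0$ is asserted rather than carried out; in 2D the relevant control comes from the $|\bx|^2\log|\bx|$ term in the Hankel expansions, whose single derivative is of order $|\bx|\log|\bx|$, so the claim is true but needs a few lines of verification. A small attribution point: \cite{AKM} as listed is a four-author survey (Ando, Kang, Miyanishi, Putinar); the static 2D polynomial compactness is originally due to Ando and Kang, and the 3D version to Ando, Kang and Miyanishi.
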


Then we can derive the following lemma for the spectrum of the N-P operator $\bK_{\partial\Omega}^{\omega,*}$ (cf. \cite{K}).
\begin{lem}\label{lem:sks}
In two dimensions, the spectrum $\sigma(\bK_{\partial\Omega}^{\omega,*})$ consists of two nonempty sequences of eigenvalues that converge to $k_0$ and $-k_0$, respectively; while in three dimensions, the spectrum $\sigma(\bK_{\partial\Omega}^{\omega,*})$ consists of three nonempty sequences of eigenvalues that converge to $0$, $k_0$ and $-k_0$, respectively.
\end{lem}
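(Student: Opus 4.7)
The plan is to combine the polynomial compactness from the previous lemma with analytic Fredholm theory to extract the discrete structure of the spectrum, and then establish the nonemptiness of each claimed sequence by a compact perturbation reducing to the known spectral theory of the static elastic Neumann--Poincar\'e operator. Set $T := \bK_{\partial\Omega}^{\omega,*}$, and let $p(x) = x^2 - k_0^2$ in two dimensions and $p(x) = x(x^2 - k_0^2)$ in three dimensions, so that the preceding lemma asserts the compactness of $p(T)$. For $\lambda$ not a root of $p$, the polynomial identity $p(x) - p(\lambda) = (x - \lambda) q_\lambda(x)$ yields
\[
(T - \lambda I)\bigl[-q_\lambda(T)/p(\lambda)\bigr] \,=\, \bigl[-q_\lambda(T)/p(\lambda)\bigr](T - \lambda I) \,=\, I - p(T)/p(\lambda),
\]
so that $T - \lambda I$ admits a two-sided parametrix modulo compact operators and is therefore Fredholm; continuity of the Fredholm index on the connected unbounded component of the resolvent set forces its index to vanish. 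The analytic Fredholm theorem then implies that $(T - \lambda I)^{-1}$ is meromorphic on $\mathbb{C} \setminus \{\text{roots of } p\}$ with poles precisely at the eigenvalues of $T$, which are isolated, of finite algebraic multiplicity, and can accumulate only at the roots of $p$. This already yields the structural portion of the statement.

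It remains to show that each root of $p$ actually hosts an infinite sequence of eigenvalues. For this I would compare $T$ with the static elastic N-P operator $T_0 := \bK_{\partial\Omega}^{0,*}$ and establish that $T - T_0$ is compact on $L^2(\partial\Omega)^N$. This reduces to the small-argument analysis of $\bGa^{\omega} - \bGa^{0}$: the leading singularities of the Hankel functions $H_0(k_p|\bx|)$ and $H_0(k_s|\bx|)$ in two dimensions, and of $e^{\mathrm{i}k_p|\bx|}/|\bx|$ and $e^{\mathrm{i}k_s|\bx|}/|\bx|$ in three dimensions, coincide with those of the static Kelvin kernel and hence cancel in the difference, leaving a remainder of strictly weaker singular order whose associated boundary integral operator is compact on $L^2(\partial\Omega)^N$. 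Weyl's theorem on the stability of the essential spectrum under compact perturbations then yields $\sigma_{\mathrm{ess}}(T) = \sigma_{\mathrm{ess}}(T_0)$, and the conclusion follows from the well-established spectral theory for $T_0$ (cf.~\cite{AKM,DLL,MR,R}), in which each of $\pm k_0$ (respectively $0,\pm k_0$) is already an accumulation point of infinitely many eigenvalues.

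The principal technical obstacle is the rigorous verification of compactness of $T - T_0$ on $L^2(\partial\Omega)^N$ for a general $C^{1,\alpha}$ boundary. Although pointwise cancellation of the leading singularities in $\bGa^{\omega} - \bGa^{0}$ is transparent, the N-P operator applies the traction $\partial/\partial\bnu(\bx)$ to the kernel, producing subleading singular terms---in particular from the mixed second-derivative pressure--shear contributions in \eqref{eq:ef}--\eqref{eq:ef3}---that must be controlled uniformly along $\partial\Omega$ in order to upgrade pointwise cancellation into operator-norm compactness.
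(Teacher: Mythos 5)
The paper gives no explicit proof here, only a pointer to \cite{K}, so the comparison is necessarily against what that citation implicitly supplies. Your structural half --- factoring $p(x)-p(\lambda)=(x-\lambda)q_\lambda(x)$, producing a two-sided parametrix modulo compacts, invoking constancy and vanishing of the Fredholm index, and applying the analytic Fredholm theorem to get discreteness of $\sigma(T)$ away from $p^{-1}(0)$ --- is exactly the content of the cited reference on polynomially compact operators, and it is correct. One small slip: the index argument should be phrased over the Fredholm domain $\mathbb{C}\setminus p^{-1}(0)$ (which is connected and on which the index is locally constant and vanishes for large $|\lambda|$), not over the ``resolvent set.'' Your second half is a genuine addition beyond what the paper invokes: polynomial compactness alone only controls where eigenvalues \emph{may} accumulate, not that they actually do at each root of $p$, and the paper's citation does not address nonemptiness. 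Comparing $T=\bK^{\omega,*}_{\partial\Omega}$ with the static $T_0=\bK^{0,*}_{\partial\Omega}$ via compactness of $T-T_0$, applying Weyl's essential-spectrum invariance, and importing the known static accumulation results from \cite{AJ,DLL,MR} is the right ingredient and, as far as I can tell, what the authors implicitly rely on without saying so. Two caveats deserve attention. First, $\sigma_{\mathrm{ess}}(T)=\sigma_{\mathrm{ess}}(T_0)$ equates the essential spectra as \emph{sets}; to promote this to ``each root of $p$ is an accumulation point of eigenvalues of $T$'' you must also rule out the degenerate possibility that a root of $p$ is an isolated point of $\sigma(T)$ whose Riesz projection has infinite rank --- for a polynomially compact operator this requires a short supplementary argument, not just Weyl. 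Second, the compactness of $T-T_0$ is indeed the technical crux, and you are right that it must be verified at the level of the traction kernels $\partial_{\bnu(\bx)}(\bGa^\omega-\bGa^0)$ rather than at the level of $\bGa^\omega-\bGa^0$ alone, since the mixed second-derivative pressure--shear terms with the $1/\omega^2$ prefactor generate the most delicate cancellations; this is a known result for $C^{1,\alpha}$ boundaries but should be cited or carried out explicitly.
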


Let $ \Phi(\bx)$ be the fundamental solution to the operator $\triangle + k^2$ in two dimensions given by
\begin{equation}\label{eq:ful}
  \Phi(\bx)=-\frac{\rmi}{4}H_0(k\abs{\bx}).
\end{equation}
For $\varphi\in L^2(\partial\Omega)$, we define the single-layer potential associated with the fundamental solution $ \Phi(\bx)$ by
\begin{equation}\label{eq:sh}
 S_{\partial\Omega}^{k}[\varphi](\bx)=\int_{\partial \Omega} \Phi(\bx-\by)\varphi(\by)ds(\by), \quad \bx\in\mathbb{R}^2.
\end{equation}

Let $J_n(t)$ and $H_n(t)$, $n\in\mathbb{Z}$, denote the Bessel and Hankel functions of the first kind of order $n$, respectively. These functions satisfy the following Bessel differential equation
\begin{equation}\label{eq:bd}
t^{2} f^{\prime \prime}(t)+t f^{\prime}(t)+\left(t^{2}-n^{2}\right) f(t)=0,
\end{equation}
for $f=J_n$ or $H_n$. When $n$ is negative, there hold that $J_n(t)=(-1)^nJ_{-n}(t)$ and $H_n(t)=(-1)^nH_{-n}(t)$. Moreover, the Bessel and Hankel functions $J_n(t)$ and $H_n(t)$ satisfy the recursion formulas (cf. \cite{CK}):
\begin{equation}\label{eq:re}
\begin{split}
f_{n+1}^{\prime}=f_{n}-(n+1) \frac{f_{n+1}}{t},  \quad & f_{n+1}=n \frac{f_{n}}{t}-f_{n}^{\prime}, \quad \text { for } \quad n \geq 0, \\
f_{n-1}^{\prime}=-f_{n}+(n-1) \frac{f_{n-1}}{t}, \quad & f_{n-1}=n \frac{f_{n}}{t}+f_{n}^{\prime}, \quad \text { for } \quad n \geq 1.
\end{split}
\end{equation}
The following asymptotic expansions hold for $t\ll1$ (cf. \cite{CK}),
\begin{equation}\label{eq:asj}
\begin{aligned}
J_{n}(t) &=\frac{t^{n}}{2^{n} n !}\left(1-\frac{t^{2}}{4(n+1)}+o\left(t^{2}\right)\right), & \text { for } & n \geq 0, \\
H_{n}(t) &=\frac{-\mathrm{i} 2^{n} n !}{\pi t^{n}}\left(1+\frac{t^{2}}{4(n-1)}+o\left(t^{2}\right)\right), & \text { for } & n \geq 2,
\end{aligned}
\end{equation}
and
\begin{equation}\label{eq:ash1}
H_{1}(t) =-\frac{\mathrm{i} 2}{\pi t}\left(1-\frac{t^{2}}{2}\left(\ln t-\ln 2+E u-\frac{1+\mathrm{i} \pi}{2}\right)+o\left(t^{3}\right)\right) ,
\end{equation}
where 
$Eu$ is the Euler constant. For larger $n$, the following asymptotic expansions hold
\begin{equation*}\label{eq:asn}
\begin{split}
  J_n(t)=  \frac{(t/2)^n}{n!}\Big\{1- \frac{t^2}{4n} + \frac{8t^2 + t^4}{32n^2} + o\Big\{\frac{1}{n^2} \Big\}\Big\},\ \
  H_n(t)=  \frac{-\rmi n!}{\pi(t/2)^n} \Big\{ \frac{1}{n} + \frac{t^2}{n^2} + o\Big\{\frac{1}{n^2} \Big\}  \Big\}.
 \end{split}
\end{equation*}
%
%
The Hankel function $H_{0}(t) $ has the following expansion from Graf’s formula(cf. \cite{Fdc})
\begin{equation}\label{eq:expH}
H_{0}(k|\mathbf{x}-\mathbf{y}|)=\sum_{n \in \mathbb{Z}} H_{n}(k|\mathbf{x}|) e^{\mathrm{i} n \theta \mathbf{x}} J_{n}(k|\mathbf{y}|) e^{-\mathrm{i} n \theta \mathbf{y}}.
\end{equation}

%

We will also need the following single-layer potential $S_{\partial B_{R}}^{k}$ acting on the density $e^{\mathrm{i} n \theta}$.
\begin{lem}
Let $S_{\partial B_{R}}^{k}$ be defined in \eqref{eq:sh}, then it holds that
\begin{equation*}
S_{\partial B_{R}}^{k}[e^{\mathrm{i} n \theta}](\mathbf{x})=-\frac{\mathrm{i} \pi R}{2} J_{n}(k R) H_{n}(k|\mathbf{x}|) e^{\mathrm{i} n \theta}, \quad \forall\,\bx\in\mathbb{R}^2\backslash\overline{B}_R\,.
\end{equation*}
\end{lem}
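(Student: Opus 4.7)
The plan is to compute the integral directly using the definition of the single-layer potential together with Graf's addition formula for the Hankel function, which is already stated in \eqref{eq:expH} above.

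First, I would write
\begin{equation*}
S_{\partial B_R}^{k}[e^{\mathrm{i} n \theta}](\bx) = -\frac{\mathrm{i}}{4}\int_{\partial B_R} H_0(k|\bx-\by|)\,e^{\mathrm{i} n \theta_\by}\,ds(\by),
\end{equation*}
by substituting the definitions \eqref{eq:ful} and \eqref{eq:sh}. Since $\bx\in\mathbb{R}^2\setminus\overline{B}_R$ and $\by\in\partial B_R$, we have $|\by|=R<|\bx|$, so Graf's formula \eqref{eq:expH} applies and may be inserted into the integrand.

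Next, I would parametrise $\by=R(\cos\theta_\by,\sin\theta_\by)^t$, so that $ds(\by)=R\,d\theta_\by$ and $|\by|=R$. Substituting the expansion from \eqref{eq:expH} gives
\begin{equation*}
S_{\partial B_R}^{k}[e^{\mathrm{i} n \theta}](\bx)= -\frac{\mathrm{i} R}{4}\sum_{m\in\mathbb{Z}} H_m(k|\bx|)\,J_m(kR)\,e^{\mathrm{i} m\theta_\bx}\int_0^{2\pi} e^{\mathrm{i}(n-m)\theta_\by}\,d\theta_\by,
\end{equation*}
where the interchange of sum and integral is justified by the uniform convergence of the Graf series on any compact subset of $\{|\by|<|\bx|\}$. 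The orthogonality relation $\int_0^{2\pi}e^{\mathrm{i}(n-m)\theta_\by}d\theta_\by=2\pi\delta_{mn}$ collapses the sum to the single term $m=n$, yielding
\begin{equation*}
S_{\partial B_R}^{k}[e^{\mathrm{i} n \theta}](\bx)= -\frac{\mathrm{i} R}{4}\cdot 2\pi\, J_n(kR)\,H_n(k|\bx|)\,e^{\mathrm{i} n\theta_\bx}= -\frac{\mathrm{i}\pi R}{2} J_n(kR)H_n(k|\bx|) e^{\mathrm{i} n \theta_\bx},
\end{equation*}
which is the claimed identity.

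No step is really hard here: the proof is essentially a two-line computation once Graf's formula is available. The only point requiring minor care is the domain condition $|\bx|>R$ that makes the Graf expansion valid with $J_n$ evaluated at the smaller argument $kR$ and $H_n$ at the larger argument $k|\bx|$; were $\bx$ inside $B_R$, the roles of $J_n$ and $H_n$ in the expansion would be swapped, giving a different (and bounded at the origin) expression. Thus I would just note this exterior condition explicitly and conclude.
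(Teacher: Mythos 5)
Your proof is correct and follows essentially the same approach as the paper's: substitute Graf's formula \eqref{eq:expH} into the definition of the single-layer potential, parametrise $\partial B_R$, and use orthogonality of $e^{\mathrm{i}m\theta}$ to collapse the series to the $m=n$ term. The only difference is that you explicitly justify the interchange of sum and integral and comment on the validity region $|\bx|>R$, which the paper leaves implicit.
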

\begin{proof}
 From the definition of the operator $S_{\partial\Omega}^{k}$ in \eqref{eq:sh} and the expansion for the function $H_{0}(k|\mathbf{x}-\mathbf{y}|)$ in \eqref{eq:expH}, one has that
\begin{equation*}
\begin{aligned}
& S_{\partial B_{R}}^{\omega}[e^{i n \theta}](\mathbf{x}) =-\frac{\mathrm{i}}{4} \int_{\partial B_{R}} \sum_{m \in \mathbb{Z}} H_{m}(k|\mathbf{x}|) e^{\mathrm{i} m \theta_{\mathbf{x}}} J_{m}(k|\mathbf{y}|) e^{-\mathrm{i} m \theta_{\mathbf{y}}} e^{\mathrm{i} n \theta_{\mathbf{y}}} d s_{\by} \\
&=-\frac{\mathrm{i} R}{4} \int_{0}^{2 \pi} \sum_{m \in \mathbb{Z}} H_{m}(k|\mathbf{x}|) e^{\mathrm{i} m \theta_{\mathbf{x}}} J_{m}(k R) e^{\mathrm{i}(n-m) \theta_{\mathbf{y}}} ds_{\by} =-\frac{\mathrm{i} \pi R}{2} J_{n}(k R) H_{n}(k|\mathbf{x}|) e^{\mathrm{i} n \theta}.
\end{aligned}
\end{equation*}
This completes the proof.
\end{proof}

For further calculations, we need the following identities.
\begin{lem}\label{lem:in}
There hold that: if $m = n-1$,
\[
\int_{\bbS} e^{-\rmi m \theta} e^{\rmi n \theta} \bnu d s=\pi\left(\begin{array}{c}
1 \\
\rmi
\end{array}\right), \quad \int_{\bbS} \nabla_{\bbS} e^{-\rmi m \theta} e^{\rmi n \theta} d s=-(n-1) \pi\left(\begin{array}{c}
1 \\
\rmi
\end{array}\right);
\]
and if $m = n+1$,
\[
\int_{\bbS} e^{-\rmi m \theta} e^{\rmi n \theta} \bnu d s=\pi\left(\begin{array}{c}
1 \\
-\rmi
\end{array}\right), \quad \int_{\bbS} \nabla_{\bbS} e^{-\rmi m \theta} e^{\rmi n \theta} d s=(n+1) \pi\left(\begin{array}{c}
1 \\
-\rmi
\end{array}\right).
\]
\end{lem}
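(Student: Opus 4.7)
The plan is a direct computation: parametrize the unit circle by $\theta\in[0,2\pi]$, so that $ds=d\theta$, the outward normal is $\bnu=(\cos\theta,\sin\theta)^t$, and the unit tangent is $\bt=(-\sin\theta,\cos\theta)^t$. Since the surface gradient on $\bbS$ reduces to $\nabla_{\bbS}f=(\partial_\theta f)\,\bt$, we have $\nabla_{\bbS}e^{-\rmi m\theta}=-\rmi m\,e^{-\rmi m\theta}\bt$. Thus every integral in the statement is reduced to a one-dimensional integral of $e^{\rmi(n-m)\theta}$ against $\cos\theta$ or $\sin\theta$.

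The first step is to write $\cos\theta=(e^{\rmi\theta}+e^{-\rmi\theta})/2$ and $\sin\theta=(e^{\rmi\theta}-e^{-\rmi\theta})/(2\rmi)$, and then invoke the orthogonality relation $\int_0^{2\pi} e^{\rmi k\theta}\,d\theta = 2\pi\,\delta_{k,0}$. For the first identity with $m=n-1$, the factor $e^{-\rmi m\theta}e^{\rmi n\theta}$ collapses to $e^{\rmi\theta}$, so the first component of the normal integral is $\tfrac12\int_0^{2\pi}(e^{2\rmi\theta}+1)\,d\theta=\pi$, and the second component yields $\tfrac{1}{2\rmi}\int_0^{2\pi}(e^{2\rmi\theta}-1)\,d\theta=\rmi\pi$; this gives $\pi(1,\rmi)^t$. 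The tangential identity is handled the same way, with the prefactor $-\rmi(n-1)$ pulled out from $\nabla_{\bbS}e^{-\rmi m\theta}$, and the integrals of $e^{\rmi\theta}(-\sin\theta,\cos\theta)^t$ evaluate to $(-\rmi\pi,\pi)^t$, giving $-(n-1)\pi(1,\rmi)^t$ after multiplication.

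The second case $m=n+1$ is the complex conjugate of the first: $e^{-\rmi m\theta}e^{\rmi n\theta}=e^{-\rmi\theta}$, and the identical computation gives $\pi(1,-\rmi)^t$ for the normal integral; the tangential integral picks up a prefactor $-\rmi(n+1)$ and the sign flip in the exponential yields $(n+1)\pi(1,-\rmi)^t$. No obstacle is anticipated — the lemma is a routine Fourier-orthogonality exercise on the circle; the only point to be careful about is bookkeeping the $\rmi$'s and signs consistently so that the four entries match the stated vectors.
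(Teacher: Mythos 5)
Your proof is correct and follows the same route as the paper's: parametrize the unit circle, express $\bnu$ and $\bt$ via $\cos\theta$, $\sin\theta$, rewrite these as complex exponentials, and invoke the orthogonality of $e^{\rmi k\theta}$ on $[0,2\pi]$. The only cosmetic difference is that you state $\nabla_{\bbS}e^{-\rmi m\theta}=-\rmi m\,e^{-\rmi m\theta}\bt$ explicitly, whereas the paper pulls out the factor $-\rmi m$ in the same way without naming it.
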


\begin{proof}
Direct calculations yield that
\begin{equation*}
\int_{\mathbb{S}} e^{-\mathrm{i} m \theta} e^{\rmi n \theta} \bnu d s=\int_{\mathbb{S}} e^{\rmi(n-m) \theta}\left(\begin{array}{c}
\cos (\theta) \\
\sin (\theta)
\end{array}\right) d s=\int_{\bbS} e^{\rmi(n-m) \theta}\left(\begin{array}{c}
\frac{e^{\rmi \theta}+e^{-\rmi \theta}}{2} \\
\frac{e^{\rmi \theta}-e^{-\rmi \theta}}{2}
\end{array}\right) d s.
\end{equation*}
Thus if $m = n - 1$, one has that	
\[
\int_{\bbS} e^{-\rmi m \theta} e^{\rmi n \theta} \bnu d s=\pi\left(\begin{array}{c}
1 \\
\rmi
\end{array}\right);
\]
and if $m = n + 1$, one has that
\[
\int_{\bbS} e^{-\rmi m \theta} e^{\rmi n \theta} \bnu d s=\pi\left(\begin{array}{c}
1 \\
-\rmi
\end{array}\right).
\]
Furthermore, one can have that
\begin{equation*}
\int_{\mathbb{S}} \nabla_{\bbS} e^{-\mathrm{i} m \theta} e^{\rmi n \theta} \bnu d s=-\rmi m\int_{\mathbb{S}} e^{\rmi(n-m) \theta}\left(\begin{array}{c}
-\sin (\theta) \\
\cos (\theta)
\end{array}\right) d s=-\rmi m \int_{\bbS} e^{\rmi(n-m) \theta}\left(\begin{array}{c}
\frac{e^{\rmi \theta}-e^{-\rmi \theta}}{-2\rmi} \\
\frac{e^{\rmi \theta}+e^{-\rmi \theta}}{2}
\end{array}\right) d s.
\end{equation*}
Thus if $m = n - 1$, one has that	
\[
 \int_{\bbS} \nabla_{\bbS} e^{-\rmi m \theta} e^{\rmi n \theta} d s=-(n-1) \pi\left(\begin{array}{c}
1 \\
\rmi
\end{array}\right);
\]
and if $m = n + 1$, one has that
\[
\int_{\bbS} \nabla_{\bbS} e^{-\rmi m \theta} e^{\rmi n \theta} d s=(n+1) \pi\left(\begin{array}{c}
1 \\
-\rmi
\end{array}\right).
\]
This completes the proof.
\end{proof}

\begin{lem}\label{lem:d1}
The following two identities hold for $\bx\in\mathbb{R}^2\backslash\overline{B}_R$, 
\begin{equation*}
\begin{aligned}
&\int_{\partial B_{R}} H_{0}(k|\mathbf{x}-\mathbf{y}|) e^{\mathrm{i} n \theta_{\mathbf{y}}} \bnu_{\mathbf{y}} d s_{\by}\\
=&\pi R H_{n-1}(k|\mathbf{x}|) e^{\mathrm{i}(n-1) \theta} J_{n-1}(k R) &\left(\begin{array}{c}
1 \\
\mathrm{i}
\end{array}\right)+
\pi R H_{n+1}(k|\mathbf{x}|) e^{\mathrm{i}(n+1) \theta} J_{n+1}(k R) &\left(\begin{array}{c}
1 \\
-\mathrm{i}
\end{array}\right),
\end{aligned}
\end{equation*}
and
\begin{equation*}
\begin{aligned}
& \int_{\partial B_{R}} H_{0}(k|\mathbf{x}-\mathbf{y}|) e^{\mathrm{i} n \theta_{\mathbf{y}}} \mathbf{t}_{\mathbf{y}} d s_{\by}\\
=&-\mathrm{i} \pi R H_{n-1}(k|\mathbf{x}|) e^{\mathrm{i}(n-1) \theta} J_{n-1}(k R)\left(\begin{array}{c}
1 \\
\mathrm{i}
\end{array}\right)+ \mathrm{i} \pi R H_{n+1}(k|\mathbf{x}|) e^{\mathrm{i}(n+1) \theta} J_{n+1}(k R)\left(\begin{array}{c}
1 \\
-\mathrm{i}
\end{array}\right).
\end{aligned}
\end{equation*}
\end{lem}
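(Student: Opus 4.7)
The plan is to expand the Hankel kernel $H_0(k|\bx-\by|)$ via Graf's formula \eqref{eq:expH}, interchange the summation with the surface integral, and then reduce each of the two integrals in the statement to the elementary angular identities already recorded in Lemma~\ref{lem:in}.

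Concretely, for $\by\in\partial B_R$ I would write
\begin{equation*}
H_0(k|\bx-\by|)=\sum_{m\in\mathbb{Z}} H_m(k|\bx|)\,e^{\rmi m\theta_\bx}\,J_m(kR)\,e^{-\rmi m\theta_\by},
\end{equation*}
parameterize $\by=R(\cos\theta_\by,\sin\theta_\by)^t$ so that $ds_\by=R\,d\theta_\by$, multiply by $e^{\rmi n\theta_\by}\bnu_\by$ (respectively $e^{\rmi n\theta_\by}\bt_\by$), and swap the sum and integral. The interchange is legitimate on compact subsets of $\mathbb{R}^2\setminus\overline{B}_R$ because Graf's expansion converges uniformly there. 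Each of the two integrals is then rewritten as
\begin{equation*}
R\sum_{m\in\mathbb{Z}} H_m(k|\bx|)\,e^{\rmi m\theta_\bx}\,J_m(kR)\,I_m,
\end{equation*}
where $I_m$ is $\int_{\bbS} e^{-\rmi m\theta}e^{\rmi n\theta}\bnu\,ds$ in the normal case and $\int_{\bbS} e^{-\rmi m\theta}e^{\rmi n\theta}\bt\,ds$ in the tangential case.

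For the normal identity I would apply the first assertion of Lemma~\ref{lem:in} directly: $I_m$ is nonzero only for $m=n-1$, where it equals $\pi(1,\rmi)^t$, and for $m=n+1$, where it equals $\pi(1,-\rmi)^t$. Keeping only these two modes in the series reproduces exactly the first identity of Lemma~\ref{lem:d1}. For the tangential identity I would carry out the analogous angular computation by substituting
$\bt=(-\sin\theta,\cos\theta)^t=\bigl((e^{\rmi\theta}-e^{-\rmi\theta})/(-2\rmi),(e^{\rmi\theta}+e^{-\rmi\theta})/2\bigr)^t$,
from which one reads off that $\int_{\bbS} e^{-\rmi m\theta}e^{\rmi n\theta}\bt\,ds$ is again supported at $m=n\pm 1$ with respective values $-\rmi\pi(1,\rmi)^t$ and $\rmi\pi(1,-\rmi)^t$; inserting these coefficients into the series yields the second identity of the lemma. (Alternatively, for $m\neq 0$ one may use the relation $\nabla_{\bbS} e^{-\rmi m\theta}=-\rmi m\,e^{-\rmi m\theta}\bt$ on the unit circle together with the second assertion of Lemma~\ref{lem:in}.)

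The argument is essentially bookkeeping, and the only minor subtlety is to correctly match the two surviving indices $m=n\pm 1$ and keep track of the factors of $\rmi$ that distinguish the tangential identity from the normal one; I do not expect any genuine analytic obstacle.
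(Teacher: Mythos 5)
Your proposal is correct and follows essentially the same route as the paper: expand $H_0(k|\bx-\by|)$ via Graf's formula, interchange the sum with the integral over $\partial B_R$, and invoke the angular identities of Lemma~\ref{lem:in} to isolate the $m=n\pm1$ modes. The only cosmetic difference is that for the tangential integral the paper invokes the surface-gradient relation $\nabla_{\bbS}e^{\rmi n\theta}=\rmi n e^{\rmi n\theta}\bt$ and the second part of Lemma~\ref{lem:in}, whereas you primarily compute $\int_{\bbS}e^{-\rmi m\theta}e^{\rmi n\theta}\bt\,ds$ directly (and note the $\nabla_{\bbS}$ route as an alternative); both are trivially equivalent.
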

\begin{proof}
 From the expansion of the function $H_{0}(k|\mathbf{x}-\mathbf{y}|)$ in \eqref{eq:expH}, and with the help of Lemma \ref{lem:in}, one has that
\begin{equation*}
\begin{aligned}
& \int_{\partial B_{R}} H_{0}(k|\mathbf{x}-\mathbf{y}|) e^{\rmi n \theta_{\mathbf{y}}} \bnu_{\mathbf{y}} d s_{\by}=\int_{\partial B_{R}} \sum_{m \in \mathbb{Z}} H_{m}(k|\mathbf{x}|) e^{\mathrm{i} m \theta_{\mathbf{x}}} J_{m}(k|\mathbf{y}|) e^{-\mathrm{i} m \theta_{\mathbf{y}}} e^{\mathrm{i} n \theta_{\mathbf{y}}} \bnu_{\mathbf{y}} d s_{\by} \\
=& \sum_{m \in \mathbb{Z}} H_{m}(k|\mathbf{x}|) e^{\mathrm{i} m \theta_{\mathbf{x}}} R \int_{\mathbb{S}} J_{m}(k|\mathbf{y}|) e^{-\mathrm{i} m \theta_{\mathbf{y}}} e^{\mathrm{i} n \theta_{\mathbf{y}}} \bnu_{\mathbf{y}} d s_{\by} \\
=& \pi R H_{n-1}(k|\mathbf{x}|) e^{\mathrm{i}(n-1) \theta} J_{n-1}(k R)\left(\begin{array}{c}
1 \\
\mathrm{i}
\end{array}\right)+\pi R H_{n+1}(k|\mathbf{x}|) e^{\mathrm{i}(n+1) \theta} J_{n+1}(k R)\left(\begin{array}{c}
1 \\
-\mathrm{i}
\end{array}\right).
\end{aligned}
\end{equation*}
Thus the first identity is proved. Then, noticing that $\nabla_{\bbS} e^{\rmi n \theta} =\rmi n e^{\rmi n \theta}\mathbf{t}$, and following the same deduction one can obtain the other identity, which completes the proof.
\end{proof}

Based on the previous two lemmas, we can further have following identities.
\begin{lem}\label{lem:gnu}
There holds the following identity for $\bx\in\mathbb{R}^2\backslash\overline{B}_R$, 
\[
\begin{split}
& \int_{\partial B_{R}} \nabla_{\mathbf{x}} \nabla_{\mathbf{x}} H_{0}(k|\mathbf{x}-\mathbf{y}|) e^{\rmi n \theta_{\mathbf{y}}} \bnu_{\mathbf{y}} d s_{\by}= \\
&\frac{\pi H_{n-1}(k|\mathbf{x}|) e^{\rmi(n-1) \theta}}{R}\left(\left(n(n-1)-k^{2} R^{2}\right) J_{n-1}(k R)-n k R J_{n-1}^{\prime}(k R)\right)\left(\begin{array}{c}
1 \\
\mathrm{i}
\end{array}\right)+ \\
&\frac{\pi H_{n+1}(k|\mathbf{x}|) e^{\rmi(n+1) \theta}}{R}\left(\left(n(n+1)-k^{2} R^{2}\right) J_{n+1}(k R)+n k R J_{n+1}^{\prime}(k R)\right)\left(\begin{array}{c}
1 \\
-\mathrm{i}
\end{array}\right).
\end{split}
\]
\end{lem}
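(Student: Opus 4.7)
The plan is to reduce the computation to Lemma~\ref{lem:d1} by pulling the Hessian operator outside the integral. Since $\bnu_\by$ and $e^{\rmi n \theta_\by}$ are independent of $\bx$ and $H_0(k|\bx - \by|)$ is smooth for $\bx \notin \partial B_R$, I would rewrite the $i$-th component of the integrand as $\sum_j \partial_{x_i}\partial_{x_j} H_0(k|\bx - \by|)(\bnu_\by)_j e^{\rmi n\theta_\by}$ and interchange differentiation with integration, so that the left-hand side of the claim equals $\nabla_\bx(\nabla_\bx \cdot F(\bx))$, where
\[ F(\bx) := \int_{\partial B_R} H_0(k|\bx-\by|)\,\bnu_\by \,e^{\rmi n\theta_\by}\, ds_\by \]
is the vector quantity already evaluated in Lemma~\ref{lem:d1}.

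To evaluate $\nabla_\bx \cdot F$, I would first establish the raising/lowering identities
\[ (\partial_{x_1} - \rmi\partial_{x_2})\phi_m = k\,\phi_{m-1}, \qquad (\partial_{x_1} + \rmi\partial_{x_2})\phi_m = -k\,\phi_{m+1}, \]
for the cylindrical harmonics $\phi_m(\bx) := H_m(k|\bx|)\,e^{\rmi m\theta_{\bx}}$. These follow from the polar identities $\partial_{x_1} \pm \rmi\partial_{x_2} = e^{\pm\rmi\theta}(\partial_r \pm (\rmi/r)\partial_\theta)$ combined with the Hankel recurrences \eqref{eq:re}. The two vectors $(1,\rmi)^t$ and $(1,-\rmi)^t$ appearing in Lemma~\ref{lem:d1} are precisely eigenvectors under the complexifications $F_1 \pm \rmi F_2$, so the potentially dangerous cross terms $\phi_{n\pm 2}$ in $\partial_{x_1}F_1 + \partial_{x_2}F_2$ cancel exactly, leaving
\[ \nabla_\bx \cdot F = k\pi R \bigl[J_{n+1}(kR) - J_{n-1}(kR)\bigr]\phi_n(\bx). \]
Applying $\nabla_\bx$ once more, together with $\nabla_\bx \phi_n = (k/2)\phi_{n-1}(1,\rmi)^t - (k/2)\phi_{n+1}(1,-\rmi)^t$ (again by the raising/lowering identities), expresses $\nabla_\bx(\nabla_\bx \cdot F)$ as a linear combination of $\phi_{n-1}(1,\rmi)^t$ and $\phi_{n+1}(1,-\rmi)^t$ with a common scalar prefactor proportional to $J_{n+1}(kR) - J_{n-1}(kR) = -2J_n'(kR)$.

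The last step is purely algebraic, needed to match the asymmetric coefficients in the statement. Using the identities $J_{n\mp 1}(kR) = n J_n(kR)/(kR) \pm J_n'(kR)$ and $J_{n\mp 1}'(kR) = \mp J_n(kR) + (n \mp 1) J_{n \mp 1}(kR)/(kR)$ from \eqref{eq:re}, one verifies after cancellation that
\[ -k^2 R^2 J_n'(kR) = \bigl(n(n-1) - k^2 R^2\bigr) J_{n-1}(kR) - nkR\, J_{n-1}'(kR), \]
and analogously $k^2R^2J_n'(kR) = (n(n+1) - k^2R^2)J_{n+1}(kR) + nkR\,J_{n+1}'(kR)$; multiplying by $\pi/R$ then recovers the exact form in the statement. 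The main obstacle will be purely bookkeeping: keeping track of the four index-shifted terms $\phi_{n\pm 2}(1,\pm\rmi)^t$ that appear after applying the Hessian to each piece of $F$, and checking that they cancel in pairs. Conceptually, the only mechanism at work is the Hankel-function raising/lowering, which packages the Bessel recurrences \eqref{eq:re} in the most convenient form for this computation.
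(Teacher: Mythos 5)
Your argument is correct and verifies numerically against the paper's answer, but it takes a genuinely different route from the paper's own proof.

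The paper's proof starts from the symmetry $\nabla_{\bx}\nabla_{\bx}H_0(k|\bx-\by|)=\nabla_{\by}\nabla_{\by}H_0(k|\bx-\by|)$, decomposes the $\by$-Hessian into tangential and radial parts so that pairing with $\bnu_\by$ isolates $\partial_r\nabla_{\by}H_0$, then differentiates the Graf expansion \eqref{eq:expH} term-by-term in $\by$, integrates using the orthogonality relations of Lemma~\ref{lem:in}, and finally cleans up the second derivatives $J^{\prime\prime}_{n\mp1}$ via the Bessel ODE \eqref{eq:bd}. You instead never transfer the derivatives onto $\by$: you pull the Hessian out of the integral, recognize the left-hand side as $\nabla_{\bx}(\nabla_{\bx}\cdot F)$ with $F$ the vector already computed in Lemma~\ref{lem:d1}, and then do all the differentiation in $\bx$ using the ladder operators $(\partial_{x_1}\mp\rmi\partial_{x_2})\phi_m=\pm k\phi_{m\mp1}$ applied to $\phi_m=H_m(k|\bx|)e^{\rmi m\theta}$. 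The two approaches trade off different work: the paper's route is more mechanical and self-contained (differentiate under the sum, integrate, simplify), while yours reuses Lemma~\ref{lem:d1} and packages the recurrences \eqref{eq:re} into the ladder identities, which explains structurally why only the two modes $\phi_{n\pm1}(1,\pm\rmi)^t$ survive in the divergence step and why both final coefficients reduce to $\mp k^2R^2J_n^{\prime}(kR)$ before being re-expanded into the asymmetric form of the statement. I spot-checked your two closing Bessel identities and the cancellation in $\nabla_{\bx}\cdot F$; they hold, and the resulting prefactor $k\pi R\bigl(J_{n+1}(kR)-J_{n-1}(kR)\bigr)=-2k\pi R J_n^{\prime}(kR)$ together with $\nabla_{\bx}\phi_n=\tfrac{k}{2}\phi_{n-1}(1,\rmi)^t-\tfrac{k}{2}\phi_{n+1}(1,-\rmi)^t$ reproduces the claimed expression exactly. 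One small remark: the bookkeeping of the ``four index-shifted terms $\phi_{n\pm2}$'' you flag as the main obstacle only arises if one differentiates each Cartesian component of $F$ separately; by taking the divergence first (as you in fact do) no $\phi_{n\pm2}$ modes ever appear, so the concern is self-resolving.
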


\begin{proof}
First note that $\nabla_{\mathbf{x}} \nabla_{\mathbf{x}} H_{0}(k|\mathbf{x}-\mathbf{y}|) = \nabla_{\mathbf{y}} \nabla_{\mathbf{y}} H_{0}(k|\mathbf{x}-\mathbf{y}|)$. Then one has that
\begin{equation}\label{eq:de2}
\begin{aligned}
& \int_{\partial B_{R}} \nabla_{\mathbf{x}} \nabla_{\mathbf{x}} H_{0}(k|\mathbf{x}-\mathbf{y}|) e^{\rmi n \theta_{\mathbf{y}}} \bnu_{\mathbf{y}} d s_{\by}=\int_{\partial B_{R}} \nabla_{\mathbf{y}} \nabla_{\mathbf{y}} H_{0}(k|\mathbf{x}-\mathbf{y}|) e^{\rmi n \theta_{\mathbf{y}}} \bnu_{\mathbf{y}} d s_{\by} \\
=& \int_{\partial B_{R}}\left(\nabla_{\mathbb{S}} \nabla_{\mathbf{y}} H_{0}(k|\mathbf{x}-\mathbf{y}|) /|\mathbf{y}|+\partial_{r} \nabla_{\mathbf{y}} H_{0}(k|\mathbf{x}-\mathbf{y}|) \bnu_{\mathbf{y}}\right) e^{\rmi n \theta_{\mathbf{y}}} \nu_{\mathbf{y}} d s_{\by} \\
=& \int_{\partial B_{R}}\left(\partial_{r} \nabla_{\mathbf{y}} H_{0}(k|\mathbf{x}-\mathbf{y}|)\right) e^{\rmi n \theta_{\mathbf{y}}} d s_{\by}.
\end{aligned}
\end{equation}
Furthermore, from the expansion of the function $H_{0}(k|\mathbf{x}-\mathbf{y}|)$ in \eqref{eq:expH}, one can have that
\begin{equation}\label{eq:der}
\begin{aligned}
& \partial_{r} \nabla_{\mathbf{y}} H_{0}(k|\mathbf{x}-\mathbf{y}|)=\partial_{r} \nabla_{\mathbf{y}}\left(\sum_{n \in \mathbb{Z}} H_{n}(k|\mathbf{x}|) e^{\rmi n \theta_{\mathbf{x}}} J_{n}(k|\mathbf{y}|) e^{-\rmi n \theta_{\mathbf{y}}}\right) \\
=& \sum_{n \in \mathbb{Z}} H_{n}(k|\mathbf{x}|) e^{\rmi n \theta_{\mathbf{x}}} \partial_{r}\left(k J_{n}^{\prime}(k|\mathbf{y}|) e^{-\mathrm{i} n \theta_{\mathbf{y}}} \bnu_{\mathbf{y}}+J_{n}(k|\mathbf{y}|) \nabla_{\mathbb{S}} e^{-\mathrm{i} n \theta_{\mathbf{y}}} /|\mathbf{y}|\right) \\
=& \sum_{n \in \mathbb{Z}} \frac{H_{n}(k|\mathbf{x}|) e^{\mathrm{i} n \theta_{\mathbf{x}}}}{|\mathbf{y}|^{2}}\left(k^{2}|\mathbf{y}|^{2} J_{n}^{\prime \prime}(k|\mathbf{y}|) e^{-\mathrm{i} n \theta_{\mathbf{y}}} \bnu_{\mathbf{y}}+\left(k|\mathbf{y}| J_{n}^{\prime}(k|\mathbf{y}|)-J_{n}(k|\mathbf{y}|)\right) \nabla_{\mathbb{S}} e^{-\mathrm{i} n \theta_{\mathbf{y}}}\right).
\end{aligned}
\end{equation}
From the identities \eqref{eq:de2} and \eqref{eq:der}, and together with the help of Lemma \ref{lem:in}, one can obtain that
\begin{equation*}
\begin{split}
& \int_{\partial B_{R}} \nabla_{\mathbf{x}} \nabla_{\mathbf{x}} H_{0}(k|\mathbf{x}-\mathbf{y}|) e^{\rmi n \theta_{\mathbf{y}}} \bnu_{\mathbf{y}} d s_{\by}= \\
& \frac{\pi H_{n-1}(k|\mathbf{x}|) e^{\mathrm{i}(n-1) \theta}}{R}\left(k^{2} R^{2} J_{n-1}^{\prime \prime}(k R)-\left(k R J_{n-1}^{\prime}(k R)-J_{n-1}(k R)\right)(n-1)\right)\left(\begin{array}{c}
1 \\
\mathrm{i}
\end{array}\right)+ \\
& \frac{\pi H_{n+1}(k|\mathbf{x}|) e^{\mathrm{i}(n+1) \theta}}{R}\left(k^{2} R^{2} J_{n+1}^{\prime \prime}(k R)+\left(k R J_{n+1}^{\prime}(k R)-J_{n+1}(k R)\right)(n+1)\right)\left(\begin{array}{c}
1 \\
-\mathrm{i}
\end{array}\right).
\end{split}
\end{equation*}
Now using the identity \eqref{eq:bd}, we can simplify the last equation as 
\begin{equation*}
\begin{split}
& \int_{\partial B_{R}} \nabla_{\mathbf{x}} \nabla_{\mathbf{x}} H_{0}(k|\mathbf{x}-\mathbf{y}|) e^{\mathrm{i} n \theta_{\mathbf{y}}} \bnu_{\mathbf{y}} d s_{\by}= \\
& \frac{\pi H_{n-1}(k|\mathbf{x}|) e^{\mathrm{i}(n-1) \theta}}{R}\left(\left(n(n-1)-k^{2} R^{2}\right) J_{n-1}(k R)-n k R J_{n-1}^{\prime}(k R)\right)\left(\begin{array}{c}
1 \\
\mathrm{i}
\end{array}\right)+ \\
&\frac{\pi H_{n+1}(k|\mathbf{x}|) e^{\mathrm{i}(n+1) \theta}}{R}\left(\left(n(n+1)-k^{2} R^{2}\right) J_{n+1}(k R)+n k R J_{n+1}^{\prime}(k R)\right)\left(\begin{array}{c}
1 \\
-\mathrm{i}
\end{array}\right).
\end{split}
\end{equation*}
The proof is completed.
\end{proof}

\begin{lem}\label{lem:gt}
There holds the following identity for $\bx\in\mathbb{R}^2\backslash\overline{B}_R$:
\begin{equation}
\begin{split}
& \int_{\partial B_{R}} \nabla_{\mathbf{x}} \nabla_{\mathbf{x}} H_{0}(k|\mathbf{x}-\mathbf{y}|) e^{\mathrm{i} n \theta_{\mathbf{y}}} \mathbf{t}_{\mathbf{y}} d s_{\by}= \\
& \frac{-\mathrm{i} n \pi H_{n-1}(k|\mathbf{x}|) e^{\mathrm{i}(n-1) \theta}}{R}\left(k R J_{n-1}^{\prime}(k R)-J_{n-1}(k R)(n-1)\right)\left(\begin{array}{c}
1 \\
\mathrm{i}
\end{array}\right)+ \\
& \frac{-\mathrm{i} n \pi H_{n+1}(k|\mathbf{x}|) e^{\mathrm{i}(n+1) \theta}}{R}\left(k R J_{n+1}^{\prime}(k R)+J_{n+1}(k R)(n+1)\right)\left(\begin{array}{c}
1 \\
-\mathrm{i}
\end{array}\right).
\end{split}
\end{equation}

\end{lem}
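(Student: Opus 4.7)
My plan is to follow the approach used in the proof of Lemma \ref{lem:gnu}, the only change being that the Hessian $\nabla_\bx\nabla_\bx H_0(k|\bx-\by|)$ now acts on the tangent vector $\mathbf{t}_\by$ instead of the normal $\bnu_\by$. I would begin by using translation invariance to write $\nabla_\bx\nabla_\bx H_0=\nabla_\by\nabla_\by H_0$, and then observe that $(\nabla_\by\nabla_\by H_0)\mathbf{t}_\by$ is nothing but the tangential directional derivative of the vector field $\nabla_\by H_0$, namely $\partial_T(\nabla_\by H_0)$ with $\partial_T=R^{-1}\partial_{\theta_\by}$ on $\partial B_R$.

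The next step is to transfer $\partial_T$ onto the oscillatory factor via integration by parts on the closed curve $\partial B_R$ (so no boundary terms arise):
\begin{equation*}
\int_{\partial B_R}\partial_T\bigl(\nabla_\by H_0\bigr)e^{\rmi n\theta_\by}ds_\by
=-\frac{\rmi n}{R}\int_{\partial B_R}\nabla_\by H_0\,e^{\rmi n\theta_\by}ds_\by
=\frac{\rmi n}{R}\nabla_\bx\!\int_{\partial B_R}H_0(k|\bx-\by|)e^{\rmi n\theta_\by}ds_\by,
\end{equation*}
where $\nabla_\by H_0=-\nabla_\bx H_0$ is used in the last equality and $\nabla_\bx$ is pulled outside the $\by$-integration. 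The remaining scalar integral evaluates to $2\pi R J_n(kR)H_n(k|\bx|)e^{\rmi n\theta}$ by orthogonality of $\{e^{\rmi m\theta}\}$ applied to the Graf expansion \eqref{eq:expH}, so the original integral equals $2\pi\rmi n J_n(kR)\,\nabla_\bx\bigl[H_n(k|\bx|)e^{\rmi n\theta}\bigr]$.

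The rest is pure bookkeeping. I would expand $\nabla_\bx[H_n(k|\bx|)e^{\rmi n\theta}]=kH_n'(k|\bx|)e^{\rmi n\theta}\bnu_\bx+\rmi n|\bx|^{-1}H_n(k|\bx|)e^{\rmi n\theta}\mathbf{t}_\bx$, decompose $e^{\rmi n\theta}\bnu_\bx$ and $e^{\rmi n\theta}\mathbf{t}_\bx$ into the two basis vectors $e^{\rmi(n-1)\theta}(1,\rmi)^t$ and $e^{\rmi(n+1)\theta}(1,-\rmi)^t$ by Euler's formula (as already done in Lemma \ref{lem:in}), and collapse the resulting combinations $kH_n'(k|\bx|)\pm nH_n(k|\bx|)/|\bx|$ into $kH_{n-1}(k|\bx|)$ and $-kH_{n+1}(k|\bx|)$ via the Hankel recursions in \eqref{eq:re}. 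Finally, the twin Bessel identities $tJ_n(t)=(n-1)J_{n-1}(t)-tJ_{n-1}'(t)=tJ_{n+1}'(t)+(n+1)J_{n+1}(t)$ (both consequences of \eqref{eq:re}) rewrite the residual factor $kJ_n(kR)$ in the two mixed forms appearing in the statement, producing the factors $kRJ_{n-1}'(kR)-(n-1)J_{n-1}(kR)$ and $kRJ_{n+1}'(kR)+(n+1)J_{n+1}(kR)$ together with the prefactors $-\rmi n\pi/R$ and the correct radial functions $H_{n\mp 1}(k|\bx|)$.

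I expect no genuine analytic obstacle; the interchange of $\nabla_\bx$ with the $\by$-integration is justified for $\bx\in\mathbb{R}^2\setminus\overline{B}_R$ by smoothness of the kernel away from the diagonal, and the Graf expansion converges absolutely there. The only real point to be careful about is in the last step, because the computation naturally produces everything in terms of $H_n$ and $J_n$, while the statement is phrased in the \emph{mixed} form containing $H_{n\pm 1}$, $J_{n\pm 1}$ and $J_{n\pm 1}'$. The Bessel/Hankel recursions must therefore be applied in exactly the right direction (index-lowering on the $(1,\rmi)^t$ channel and index-raising on the $(1,-\rmi)^t$ channel) in order to hit the signs and prefactors advertised in the lemma.
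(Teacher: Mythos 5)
Your proof is correct. The opening moves coincide exactly with the paper's proof: translation invariance ($\nabla_\bx\nabla_\bx H_0=\nabla_\by\nabla_\by H_0$) followed by integration by parts on the closed curve to trade the Hessian--times--tangent for $-\rmi n$ times the first-order integral $\int_{\partial B_R}\nabla_\by H_0\, e^{\rmi n\theta_\by}\,ds_\by$.

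Where you diverge is in how you finish. The paper stays in the $\by$-variable: it substitutes the Graf expansion of $\nabla_\by H_0$ from \eqref{eq:dert} directly into the reduced integral and projects onto the modes $m=n\pm1$ using Lemma \ref{lem:in}, which produces the mixed $J_{n\pm1},J_{n\pm1}'$ coefficients immediately. You instead swap $\nabla_\by\to -\nabla_\bx$, pull the gradient outside the $\by$-integral, evaluate the remaining \emph{scalar} single-layer integral (which you already know from the earlier lemma on $S^k_{\partial B_R}[e^{\rmi n\theta}]$), and only then differentiate the resulting closed form $J_n(kR)H_n(k|\bx|)e^{\rmi n\theta}$ and convert via the Bessel/Hankel recursions. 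Both routes land in the same place; yours is a bit leaner because it recycles the scalar computation and the natural output comes in $(J_n,H_n)$, but you then have to apply the recursions in both directions at the end --- which you correctly identify as the one place where sign care is needed. A quick sanity check of the two auxiliary identities you invoke, $J_n(t)=\tfrac{n-1}{t}J_{n-1}(t)-J_{n-1}'(t)$ and $J_n(t)=J_{n+1}'(t)+\tfrac{n+1}{t}J_{n+1}(t)$, confirms they follow from \eqref{eq:re} and reproduce the advertised prefactors on both channels, so there is no gap.
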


\begin{proof}
First note that $\nabla_{\mathbf{x}} \nabla_{\mathbf{x}} H_{0}(k|\mathbf{x}-\mathbf{y}|) = \nabla_{\mathbf{y}} \nabla_{\mathbf{y}} H_{0}(k|\mathbf{x}-\mathbf{y}|)$. Then one has that

\begin{equation}\label{eq:de2t}
\begin{aligned}
& \int_{\partial B_{R}} \nabla_{\mathbf{x}} \nabla_{\mathbf{x}} H_{0}(k|\mathbf{x}-\mathbf{y}|) e^{\mathrm{i} n \theta_{\mathbf{y}}} \mathbf{t}_{\mathbf{y}} d s_{\by}=\int_{\partial B_{R}} \nabla_{\mathbf{y}} \nabla_{\mathbf{y}} H_{0}(k|\mathbf{x}-\mathbf{y}|) e^{\rmi n \theta_{\mathbf{y}}} \mathbf{t}_{\mathbf{y}} d s_{\by} \\
=& \int_{\partial B_{R}}\left(\nabla_{\mathbb{S}} \nabla_{\mathbf{y}} H_{0}(k|\mathbf{x}-\mathbf{y}|) /|\mathbf{y}|+\partial_{r} \nabla_{\mathbf{y}} H_{0}(k|\mathbf{x}-\mathbf{y}|) \nu_{\mathbf{y}}\right) \frac{1}{\rmi n} \nabla_{\mathbb{S}} e^{\rmi n \theta_{\mathbf{y}}} d s_{\by} \\
=&-\frac{1}{\mathrm{i} n} \int_{\mathbb{S}}\left(\nabla_{\mathbf{y}} H_{0}(k|\mathbf{x}-\mathbf{y}|)\right) \Delta_{S} e^{\rmi n \theta_{\mathbf{y}}} d s_{\by} \\
=&-\rmi n \int_{\mathbb{S}} \nabla_{\mathbf{y}} H_{0}(k|\mathbf{x}-\mathbf{y}|) e^{\mathrm{i} n \theta_{\mathbf{y}}} d s_{\by}.
\end{aligned}
\end{equation}
Then, from the expansion of the function $H_{0}(k|\mathbf{x}-\mathbf{y}|)$ in \eqref{eq:expH}, it holds that
\begin{equation}\label{eq:dert}
\begin{aligned}
& \nabla_{\mathbf{y}} H_{0}(k|\mathbf{x}-\mathbf{y}|)=\nabla_{\mathbf{y}}\left(\sum_{n \in \mathbb{Z}} H_{n}(k|\mathbf{x}|) e^{\rmi n \theta_{\mathbf{x}}} J_{n}(k|\mathbf{y}|) e^{-\mathrm{i} n \theta_{\mathbf{y}}}\right) \\
& =\sum_{n \in \mathbb{Z}} \frac{H_{n}(k|\mathbf{x}|) e^{\mathrm{i} n \theta_{\mathbf{x}}}}{|\mathbf{y}|} \left(k|\mathbf{y}| J_{n}^{\prime}(k|\mathbf{y}|) e^{-\mathrm{i} n \theta_{\mathbf{y}}} \bnu_{\mathbf{y}}+J_{n}(k|\mathbf{y}|) \nabla_{\mathbb{S}} e^{-\mathrm{i} n \theta_{\mathbf{y}}}\right).
\end{aligned}
\end{equation}
From the equations \eqref{eq:de2t} and \eqref{eq:dert}, and together with the help of Lemma \ref{lem:in}, one can obtain that
\begin{equation*}
\begin{split}
& \int_{\partial B_{R}} \nabla_{\mathbf{x}} \nabla_{\mathbf{x}} H_{0}(k|\mathbf{x}-\mathbf{y}|) e^{\rmi n \theta_{\mathbf{y}}} \mathbf{t}_{\mathbf{y}} d s_{\by}\\
= & \frac{-\mathrm{i} n \pi H_{n-1}(k|\mathbf{x}|) e^{\mathrm{i}(n-1) \theta}}{R}\left(k R J_{n-1}^{\prime}(k R)-J_{n-1}(k R)(n-1)\right)\left(\begin{array}{c}
1 \\
\mathrm{i}
\end{array}\right) \\
& + \frac{-\mathrm{i} n \pi H_{n+1}(k|\mathbf{x}|) e^{\mathrm{i}(n+1) \theta}}{R}\left(k R J_{n+1}^{\prime}(k R)+J_{n+1}(k R)(n+1)\right)\left(\begin{array}{c}
1 \\
-\mathrm{i}
\end{array}\right).
\end{split}
\end{equation*}

\end{proof}

With these preparations, we can present the expressions of the single-layer potentials $\mathbf{S}_{\partial B_{R}}^{\omega}$ with two densities $e^{\mathrm{i} n \theta} \boldsymbol{\nu}$ and $e^{\mathrm{i} n \theta} \mathbf{t}$, 
and the proof follows directly from the definition of the single-layer potential operator $\mathbf{S}_{\partial B_{R}}^{\omega}$ in \eqref{eq:single} and Lemmas \ref{lem:d1}, \ref{lem:gnu} and \ref{lem:gt}.

\begin{thm}\label{thm:sing}
The single layer potentials  $\mathbf{S}_{\partial B_{R}}^{\omega}\left[e^{\mathrm{i} n \theta} \boldsymbol{\nu}\right]$ and $\mathbf{S}_{\partial B_{R}}^{\omega}\left[e^{\mathrm{i} n \theta} \mathbf{t}\right]$have the following expressions for $\bx\in\mathbb{R}^2\backslash\overline{B}_R$, 
\begin{equation}
\begin{split}
& \mathbf{S}_{\partial B_{R}}^{\omega} [e^{\mathrm{i} n \theta} \boldsymbol{\nu} ](\bx)= \\
& \frac{-\mathrm{i} \pi e^{\mathrm{i}(n-1) \theta}}{4 \omega^{2} R}\left(n H_{n-1}\left(k_{s}|\mathbf{x}|\right)\left((n-1) J_{n-1}\left(k_{s} R\right)-k_{s} R J_{n-1}^{\prime}\left(k_{s} R\right)\right)+\right. \\
& \left.H_{n-1}\left(k_{p}|\mathbf{x}|\right)\left(\left(k_{p}^{2} R^{2}-n^{2}+n\right) J_{n-1}\left(k_{p} R\right)+n k_{p} R J_{n-1}^{\prime}\left(k_{p} R\right)\right)\right)\left(\begin{array}{c}
1 \\
\mathrm{i}
\end{array}\right)+ \\
& \frac{-\mathrm{i} \pi e^{\mathrm{i}(n+1) \theta}}{4 \omega^{2} R}\left(n H_{n+1}\left(k_{s}|\mathbf{x}|\right)\left((n+1) J_{n+1}\left(k_{s} R\right)+k_{s} R J_{n+1}^{\prime}\left(k_{s} R\right)\right)+\right. \\
& \left.H_{n+1}\left(k_{p}|\mathbf{x}|\right)\left(\left(k_{p}^{2} R^{2}-n^{2}-n\right) J_{n+1}\left(k_{p} R\right)-n k_{p} R J_{n+1}^{\prime}\left(k_{p} R\right)\right)\right)\left(\begin{array}{c}
1 \\
-\mathrm{i}
\end{array}\right),
\end{split}
\end{equation}
and 
\begin{equation}
\begin{split}
&\mathbf{S}_{\partial B_{R}}^{\omega} [e^{\mathrm{i} n \theta} \mathbf{t} ](\bx)= \\
& \frac{-\pi e^{\mathrm{i}(n-1) \theta}}{4 \omega^{2} R}\left(H_{n-1}\left(k_{s}|\mathbf{x}|\right)\left(\left(k_{s}^{2} R^{2}-n^{2}+n\right) J_{n-1}\left(k_{s} R\right)+n k_{s} R J_{n-1}^{\prime}\left(k_{s} R\right)\right)+\right. \\
& \left.n H_{n-1}\left(k_{p}|\mathbf{x}|\right)\left((n-1) J_{n-1}\left(k_{p} R\right)-k_{p} R J_{n-1}^{\prime}\left(k_{p} R\right)\right)\right)\left(\begin{array}{c}
1 \\
\mathrm{i}
\end{array}\right)+ \\
& \frac{\pi e^{\mathrm{i}(n+1) \theta}}{4 \omega^{2} R}\left(H_{n+1}\left(k_{s}|\mathbf{x}|\right)\left(\left(k_{s}^{2} R^{2}-n^{2}-n\right) J_{n+1}\left(k_{s} R\right)-n k_{s} R J_{n+1}^{\prime}\left(k_{s} R\right)\right)+\right. \\
& \left.n H_{n+1}\left(k_{p}|\mathbf{x}|\right)\left((n+1) J_{n+1}\left(k_{p} R\right)+k_{p} R J_{n+1}^{\prime}\left(k_{p} R\right)\right)\right)\left(\begin{array}{c}
1 \\
-\mathrm{i}
\end{array}\right).
\end{split}
\end{equation}
\end{thm}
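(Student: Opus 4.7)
\medskip
\noindent\textbf{Proof plan.} The plan is to substitute the explicit form \eqref{eq:ef} of the fundamental tensor $\bGa^{\omega}$ into the definition \eqref{eq:single} of $\bS_{\partial B_R}^{\omega}$ and then to apply the scalar integral identities from Lemmas~\ref{lem:d1}, \ref{lem:gnu} and \ref{lem:gt}. Writing
\[
\bGa^{\omega}(\bx-\by)=-\frac{\rmi}{4\mu}H_0(k_s|\bx-\by|)\bI+\frac{\rmi}{4\pi\omega^2}\nabla_\bx\nabla_\bx\bigl(H_0(k_p|\bx-\by|)-H_0(k_s|\bx-\by|)\bigr),
\]
one obtains, for any density $\bvarphi\in\{e^{\rmi n\theta}\bnu,\ e^{\rmi n\theta}\bt\}$, the decomposition
\[
\bS_{\partial B_R}^{\omega}[\bvarphi]=-\frac{\rmi}{4\mu}\!\int_{\partial B_R}\!H_0(k_s|\bx-\by|)\bvarphi\,ds_\by+\frac{\rmi}{4\pi\omega^2}\!\int_{\partial B_R}\!\nabla_\bx\nabla_\bx\bigl(H_0(k_p|\bx-\by|)-H_0(k_s|\bx-\by|)\bigr)\bvarphi\,ds_\by.
\]

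\medskip
\noindent\textbf{Main steps.} First, I apply Lemma~\ref{lem:d1} (with the appropriate choice of $\bvarphi=e^{\rmi n\theta}\bnu$ or $e^{\rmi n\theta}\bt$ and with $k=k_s$) to the scalar piece. This produces Hankel terms $H_{n\pm 1}(k_s|\bx|)e^{\rmi(n\pm 1)\theta}$ with coefficients involving $J_{n\pm 1}(k_sR)$. Next, I apply Lemma~\ref{lem:gnu} (for the normal density) or Lemma~\ref{lem:gt} (for the tangential density) separately to the $k_p$ and $k_s$ pieces of the gradient term, which produces Hankel terms at both wave numbers $k_p$ and $k_s$. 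Using the relation $\omega^2=\mu k_s^2$ from \eqref{pa:ksp}, I rewrite the prefactor $-\rmi/(4\mu)$ of the scalar term as $-\rmi k_s^2/(4\omega^2)$, so that the common denominator $4\omega^2 R$ emerges uniformly in front of all Hankel contributions.

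\medskip
\noindent\textbf{Combination and simplification.} At this stage the $k_p$-contributions already match the $H_{n\pm 1}(k_p|\bx|)$-terms stated in the theorem; the remaining task is to show that the $H_{n\pm 1}(k_s|\bx|)$-coefficients from the scalar integral combine with those from the gradient integral to yield exactly the $H_{n\pm 1}(k_s|\bx|)$-coefficients appearing in the theorem. For the $\binom{1}{\rmi}$-component at order $n-1$, one checks that
\[
-\frac{\rmi\pi R k_s^2}{4\omega^2}J_{n-1}(k_sR)-\frac{\rmi\pi}{4\omega^2 R}\bigl[\bigl(n(n-1)-k_s^2R^2\bigr)J_{n-1}(k_sR)-nk_sRJ_{n-1}'(k_sR)\bigr]
\]
simplifies to $-\frac{\rmi\pi n}{4\omega^2 R}\bigl[(n-1)J_{n-1}(k_sR)-k_sRJ_{n-1}'(k_sR)\bigr]$, which is precisely the $k_s$-coefficient in the theorem. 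An analogous cancellation occurs at order $n+1$ and for the tangential density; the tangential case uses $\nabla_\bbS e^{\rmi n\theta}=\rmi n\,e^{\rmi n\theta}\bt$ together with Lemma~\ref{lem:gt}.

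\medskip
\noindent\textbf{Expected obstacle.} The derivations of Lemmas~\ref{lem:d1}--\ref{lem:gt} already did the heavy lifting, and the Bessel ODE \eqref{eq:bd} has been absorbed there; so nothing deeper is needed. The principal nuisance will be bookkeeping: keeping track of the four vectorial channels $\binom{1}{\pm\rmi}$ at orders $n\pm 1$ for both wavenumbers $k_s,k_p$, together with signs and the factor $1/R$ that arises when rewriting the scalar prefactor $\pi R k_s^2/(4\omega^2)$ with denominator $4\omega^2 R$. Once the channels are aligned, the $k_s^2 R^2$ term in the scalar piece cancels the corresponding $-k_s^2 R^2$ term inside the bracket in Lemma~\ref{lem:gnu}/\ref{lem:gt}, leaving the compact expressions claimed in the theorem.
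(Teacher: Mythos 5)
Your proposal is correct and follows exactly the route the paper intends: the paper itself says Theorem~3.1 ``follows directly from the definition of the single-layer potential operator in \eqref{eq:single} and Lemmas~\ref{lem:d1}, \ref{lem:gnu} and \ref{lem:gt},'' which is precisely the decomposition-and-recombination you carry out, and your sample cancellation of the $k_s^2R^2$ terms at order $n-1$ is verified by direct expansion. One small inconsistency worth flagging: in your opening decomposition you copy the coefficient $\frac{\rmi}{4\pi\omega^2}$ of the $\nabla\nabla$ term verbatim from \eqref{eq:ef}, but in the combination step you (correctly) act as though that coefficient is $\frac{\rmi}{4\omega^2}$; in fact \eqref{eq:ef} appears to carry a spurious $\pi$ in the 2D fundamental solution (the $4\pi$ belongs to the 3D formula \eqnref{eq:ef3}, not the 2D one), and the $\pi$-free coefficient is the one that reproduces the $\frac{-\rmi\pi}{4\omega^2R}$ prefactor stated in the theorem. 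You should either note the typo explicitly or quietly use the $\pi$-free normalization throughout so that the two displayed computations are consistent with one another.
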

%

\begin{rem}
With the help of the recursion formulas in \eqref{eq:re}, the single-layer potentials  $\mathbf{S}_{\partial B_{R}}^{\omega}\left[e^{\mathrm{i} n \theta} \boldsymbol{\nu}\right]$ and $\mathbf{S}_{\partial B_{R}}^{\omega}\left[e^{\mathrm{i} n \theta} \mathbf{t}\right]$ can be expressed as follows for $\bx\in\mathbb{R}^2\backslash\overline{B}_R$:
\begin{align}
 \mathbf{S}_{\partial B_{R}}^{\omega} [e^{\mathrm{i} n \theta} \boldsymbol{\nu} ](\bx)=&  \frac{-\mathrm{i} \pi }{4 \omega^{2} R}\left(n  k_{s} R J_{n} \left(k_{s} R\right) \mathbf{Q}_n^o(k_s |\bx|)  +  k_{p}^2 R^2   J_{n}^{\prime}\left(k_{p} R \right) \mathbf{P}_n^o(k_p |\bx|) \right),\\ 
\mathbf{S}_{\partial B_{R}}^{\omega} [e^{\mathrm{i} n \theta} \mathbf{t} ](\bx)=&  \frac{-\pi }{4 \omega^{2} R}\left( k_s^2 R^2   J_{n}^{\prime}\left(k_{s} R \right) \mathbf{Q}_n^o(k_s |\bx|)   + n k_p R J_n(k_p R) \mathbf{P}_n^o(k_p |\bx|)  \right),
\end{align}
where
\begin{align}
\mathbf{Q}_n^o(k_s |\bx|) = & \frac{2 n H_n(k_s |\bx|)}{k_s |\bx|}  e^{\mathrm{i} n \theta} \bnu + 2\rmi  H_{n}^{\prime}(k_s |\bx|)  e^{\mathrm{i} m \theta} \mathbf{t},\label{eq:q}\\
\mathbf{P}_n^o(k_p |\bx|) =& 2 H_{n}^{\prime}(k_p |\bx|)  e^{\mathrm{i} n \theta} \bnu +  \frac{2 \rmi n H_n(k_p |\bx|)}{k_p |\bx|}  e^{\mathrm{i} m \theta} \mathbf{t}.\label{eq:p}
\end{align}
Moreover, these two functions $\mathbf{Q}_n^o(k_s|\bx|)$ and $\mathbf{P}_n^o(k_p|\bx|)$ are radiating solutions to the equation $\left( \Lcal_{\lambda,\mu}+\omega^2\right)\bu=0$ in $\bx\in\mathbb{R}^2\backslash\overline{B}_R$. The function $\mathbf{Q}_n^o(k_s|\bx|)$ belongs to the s-wave and the function $\mathbf{P}_n^o(k_p|\bx|)$ belongs to the p-wave.
\end{rem}

Following similar deductions to the above, one can derive the following proposition.
\begin{prop}\label{pro:sinin}
For $\bx\in B_R$, the single layer potentials  $\mathbf{S}_{\partial B_{R}}^{\omega}\left[e^{\mathrm{i} n \theta} \boldsymbol{\nu}\right]$ and $\mathbf{S}_{\partial B_{R}}^{\omega}\left[e^{\mathrm{i} n \theta} \mathbf{t}\right]$ have the following expressions:
\begin{equation}
\begin{split}
 \mathbf{S}_{\partial B_{R}}^{\omega} [e^{\mathrm{i} n \theta} \boldsymbol{\nu} ](\bx)= 
 \frac{-\mathrm{i} \pi  }{4 \omega^{2} R}\left( n  k_{s} R  H_{n} \left(k_{s} R\right) \mathbf{Q}_n^i (k_s |\bx|)  +  k_{p}^2 R^2 H_{n}^{\prime}\left(k_{p} R \right) \mathbf{P}_n^i(k_p |\bx|) \right),\\
\mathbf{S}_{\partial B_{R}}^{\omega} [e^{\mathrm{i} n \theta} \mathbf{t} ](\bx)=  
\frac{-\pi }{4 \omega^{2} R}\left( k_s^2 R^2  H_{n}^{\prime}\left(k_{s} R \right) \mathbf{Q}_n^i(k_s |\bx|)   + n k_p R H_n(k_p R) \mathbf{P}_n^i (k_p |\bx|)  \right),
\end{split}
\end{equation}
where
\begin{align}
\mathbf{Q}_n^i (k_s |\bx|) =& \frac{2 n J_n(k_s |\bx|)}{k_s |\bx|}  e^{\mathrm{i} n \theta} \bnu + 2\rmi  J_{n}^{\prime}(k_s |\bx|)  e^{\mathrm{i} n \theta} \mathbf{t},\label{eq:qi}\\
\mathbf{P}_n^i (k_p |\bx|) =& 2 J_{n}^{\prime}(k_p |\bx|)  e^{\mathrm{i} n \theta} \bnu +  \frac{2 \rmi n J_n(k_p |\bx|)}{k_p |\bx|}  e^{\mathrm{i} n \theta} \mathbf{t}.\label{eq:pi}
\end{align}
Moreover, these two functions $\mathbf{Q}_n^i(k_s|\bx|)$ and $\mathbf{P}_n^i(k_p|\bx|)$ are entire solutions to the equation $\left( \Lcal_{\lambda,\mu}+\omega^2\right)\bu=0$ in $\bx\in B_R$. The function $\mathbf{Q}_n^i(k_s|\bx|)$ belongs to the s-wave and the function $\mathbf{P}_n^i(k_p|\bx|)$ belongs to the p-wave.
\end{prop}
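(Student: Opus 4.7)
\medskip
\noindent\textbf{Proof plan for Proposition \ref{pro:sinin}.}

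The strategy is to mimic, step by step, the derivation that led to Theorem \ref{thm:sing}, but using the \emph{interior} version of Graf's addition formula. For the exterior statement in Theorem \ref{thm:sing} the key analytic input was the expansion \eqref{eq:expH}, valid for $|\bx|>|\by|$. When $\bx\in B_R$ and $\by\in\partial B_R$ we have instead $|\by|>|\bx|$, and Graf's formula gives
\begin{equation*}
H_{0}(k|\mathbf{x}-\mathbf{y}|)=\sum_{n \in \mathbb{Z}} J_{n}(k|\mathbf{x}|)\, e^{\mathrm{i} n \theta_{\mathbf{x}}} H_{n}(k|\mathbf{y}|)\, e^{-\mathrm{i} n \theta_{\mathbf{y}}},
\end{equation*}
i.e.\ the roles of $J_n$ and $H_n$ are interchanged. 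So the whole machinery of Lemmas \ref{lem:d1}, \ref{lem:gnu}, \ref{lem:gt} goes through verbatim, with every pair $(H_n(k|\bx|),J_n(k|\by|))$ replaced by $(J_n(k|\bx|),H_n(k|\by|))$. First I would state and prove these three ``interior'' analogues: for $\bx\in B_R$ one obtains, for instance,
\begin{equation*}
\int_{\partial B_R}H_0(k|\bx-\by|)e^{\mathrm{i}n\theta_{\by}}\bnu_{\by}\,ds_{\by}
=\pi R\,J_{n-1}(k|\bx|)e^{\mathrm{i}(n-1)\theta}H_{n-1}(kR)\begin{pmatrix}1\\ \mathrm{i}\end{pmatrix}+\pi R\,J_{n+1}(k|\bx|)e^{\mathrm{i}(n+1)\theta}H_{n+1}(kR)\begin{pmatrix}1\\-\mathrm{i}\end{pmatrix},
\end{equation*}
and similarly for the tangential and the Hessian integrals. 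The $\nabla_{\bx}\nabla_{\bx}$-integrals are handled by the same trick used in Lemma \ref{lem:gnu}: pass the $\bx$-derivatives to $\by$-derivatives (since $\nabla_{\bx}\nabla_{\bx}H_0(k|\bx-\by|)=\nabla_{\by}\nabla_{\by}H_0(k|\bx-\by|)$), split the $\by$-gradient into radial and tangential parts on $\partial B_R$, and then use the Bessel ODE \eqref{eq:bd} \emph{for $H_n$} (which $H_n$ also satisfies) to replace $H_{n\pm1}''(kR)$ by lower-order terms in exactly the same algebraic combination that appeared in the exterior case.

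Once these three interior identities are in hand, plugging them into the fundamental solution \eqref{eq:ef}, one reproduces the exterior formulas of Theorem \ref{thm:sing} with $J_n(k_{s/p}|\bx|)$ in the $\bx$-dependent factors and $H_n(k_{s/p}R)$ in the $R$-dependent factors. A final application of the recursion formulas \eqref{eq:re} for $H_n$ (which take the identical form as those for $J_n$) collapses the $H_{n\pm1}(k_{s/p}R)$ coefficients into the stated combinations involving $H_n(k_{s/p}R)$ and $H_n'(k_{s/p}R)$, and yields the announced expressions for $\mathbf{S}_{\partial B_R}^{\omega}[e^{\mathrm{i}n\theta}\bnu]$ and $\mathbf{S}_{\partial B_R}^{\omega}[e^{\mathrm{i}n\theta}\mathbf{t}]$ together with the vector functions $\mathbf{Q}_n^i,\mathbf{P}_n^i$ in \eqref{eq:qi}--\eqref{eq:pi}.

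For the last assertion, that $\mathbf{Q}_n^i(k_s|\bx|)$ and $\mathbf{P}_n^i(k_p|\bx|)$ are entire solutions of $(\Lcal_{\lambda,\mu}+\omega^2)\bu=0$ on $B_R$, entirety is automatic since $J_n$ is entire whereas $H_n$ (used in the exterior case) was singular at $0$. That they solve the Lam\'e system follows because they are, by construction, single-layer potentials of the densities $e^{\mathrm{i}n\theta}\bnu$ and $e^{\mathrm{i}n\theta}\mathbf{t}$ restricted to $B_R$, and single-layer potentials automatically satisfy the homogeneous Lam\'e equation away from $\partial B_R$. Equivalently, one can check directly using the recursion formulas that $\mathbf{P}_n^i$ is the gradient of $J_n(k_p|\bx|)e^{\mathrm{i}n\theta}$ (up to a scalar), which is curl-free and solves the p-wave equation, while $\mathbf{Q}_n^i$ is the curl of the same scalar field with $k_s$ in place of $k_p$ (again up to normalization), which is divergence-free and solves the s-wave equation. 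The main bookkeeping obstacle is purely notational: tracking the signs and the $J\leftrightarrow H$ swap through Lemmas \ref{lem:d1}--\ref{lem:gt}; since $H_n$ obeys the same Bessel ODE and the same recursion formulas as $J_n$, no genuinely new computation is required.
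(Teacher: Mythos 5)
Your proof is correct and follows essentially the route the paper intends (the paper merely states that the interior formulas ``follow similar deductions to the above''): you apply the interior Graf expansion $H_0(k|\bx-\by|)=\sum_n J_n(k|\bx|)e^{\mathrm{i}n\theta_{\bx}}H_n(k|\by|)e^{-\mathrm{i}n\theta_{\by}}$, valid for $|\bx|<|\by|$, re-run Lemmas~\ref{lem:d1}--\ref{lem:gt} with the $J_n\leftrightarrow H_n$ swap, and exploit the fact that $J_n$ and $H_n$ obey the same Bessel ODE and recursion formulas \eqref{eq:re}, so the algebra collapsing $H_{n\pm1}(kR)$ into $H_n(kR),H_n'(kR)$ is identical.

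One small remark on the final assertion. Your first justification --- that $\mathbf{Q}_n^i$ and $\mathbf{P}_n^i$ satisfy $(\Lcal_{\lambda,\mu}+\omega^2)\bu=0$ because ``they are, by construction, single-layer potentials'' --- is not quite immediate: $\mathbf{Q}_n^i$ and $\mathbf{P}_n^i$ are not themselves the single-layer potentials but rather appear as the two basis vectors in a $2\times2$ linear combination giving $\mathbf{S}_{\partial B_R}^{\omega}[e^{\mathrm{i}n\theta}\bnu]$ and $\mathbf{S}_{\partial B_R}^{\omega}[e^{\mathrm{i}n\theta}\mathbf{t}]$, and isolating them requires that the coefficient matrix (with entries involving $H_n(k_{s/p}R)$ and $H_n'(k_{s/p}R)$) be invertible. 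Your second argument avoids this entirely and is the one to keep: one verifies directly that
\[
\mathbf{P}_n^i(k_p|\bx|)=\frac{2}{k_p}\,\nabla\bigl(J_n(k_p|\bx|)e^{\mathrm{i}n\theta}\bigr),\qquad
\mathbf{Q}_n^i(k_s|\bx|)=\frac{2\mathrm{i}}{k_s}\Bigl(\mathbf{t}\,\partial_r-\frac{\bnu}{r}\,\partial_\theta\Bigr)\bigl(J_n(k_s|\bx|)e^{\mathrm{i}n\theta}\bigr),
\]
so $\mathbf{P}_n^i$ is curl-free with $\nabla\cdot\mathbf{P}_n^i=-\tfrac{2k_p}{1} J_n(k_p|\bx|)e^{\mathrm{i}n\theta}$ and thus $(\Lcal_{\lambda,\mu}+\omega^2)\mathbf{P}_n^i=(\lambda+2\mu)\nabla(\triangle+k_p^2)\bigl(\tfrac{2}{k_p}J_n(k_p|\bx|)e^{\mathrm{i}n\theta}\bigr)=0$, while $\mathbf{Q}_n^i$ is divergence-free and likewise annihilated by $\Lcal_{\lambda,\mu}+\omega^2$ via the $k_s$-Helmholtz equation. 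Entirety on $B_R$ is then clear since $J_n$ is entire.
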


Since the single layer potential operator $\mathbf{S}_{\partial B_{R}}^{\omega}$ is continuous from $\mathbb{R}^2\backslash\overline{B}_R$ to $\mathbb{R}^2\backslash{B}_R$, by letting $|\bx|=R$ in Theorem \ref{thm:sing} and together with the help of recursion formulas given in \eqref{eq:re}, one has the following lemma.

\begin{lem}\label{lem:eisin}
The single layer potentials  $\mathbf{S}_{\partial B_{R}}^{\omega}\left[e^{\mathrm{i} n \theta} \boldsymbol{\nu}\right]$ and $\mathbf{S}_{\partial B_{R}}^{\omega}\left[e^{\mathrm{i} n \theta} \mathbf{t}\right]$ have the following expressions for $|\bx|=R$:
\begin{equation}
 \mathbf{S}_{\partial B_{R}}^{\omega} [e^{\mathrm{i} n \theta} \bnu ](\bx) = \alpha_{1 n} e^{\mathrm{i} n \theta} \bnu+\alpha_{2 n} e^{\mathrm{i} m \theta} \mathbf{t}\quad\mbox{and}\quad\mathbf{S}_{\partial B_{R}}^{\omega} [e^{\mathrm{i} n \theta} \mathbf{t} ](\bx)= \alpha_{3 n} e^{\mathrm{i} n \theta} \bnu+\alpha_{4 n} e^{\mathrm{i} m \theta} \mathbf{t},
\end{equation}
where
\begin{eqnarray*}
   \alpha_{1 n} & = & -\frac{\rmi \pi}{2\omega^2 R} \left( n^2 J_n(k_s R) H_n(k_s R) + k_p^2 R^2 J^{\prime}_n(k_p R) H^{\prime}_n(k_p R)  \right),\\
  \alpha_{2 n} & = & \frac{n \pi}{2\omega^2 } \left(k_s J_n(k_s R) H^{\prime}_n(k_s R) + k_p J^{\prime}_n(k_p R) H_n(k_p R)  \right), \\
  \alpha_{3 n} & = & -\frac{n \pi}{2\omega^2 } \left(k_s J^{\prime}_n(k_s R) H_n(k_s R) + k_p J_n(k_p R) H^{\prime}_n(k_p R)  \right), \\
  \alpha_{4 n} & = & -\frac{\rmi \pi}{2\omega^2 R} \left( k_s^2 R^2 J^{\prime}_n(k_s R) H^{\prime}_n(k_s R) + n^2 J_n(k_p R) H_n(k_p R)  \right).
\end{eqnarray*}

\end{lem}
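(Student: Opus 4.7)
Since the single-layer potential $\mathbf{S}_{\partial B_R}^{\omega}$ is continuous across $\partial B_R$, the expressions in Theorem~\ref{thm:sing} for $\bx\in\mathbb{R}^2\backslash\overline{B}_R$ extend up to $|\bx|=R$ by taking the boundary limit, which is the starting point. Once $|\bx|=R$ is inserted, the output lives in the basis $\{e^{\rmi(n-1)\theta}(1,\rmi)^t,\, e^{\rmi(n+1)\theta}(1,-\rmi)^t\}$, and the first step is to rewrite it in the target basis $\{e^{\rmi n\theta}\bnu,\, e^{\rmi n\theta}\mathbf{t}\}$. Using $\bnu\pm\rmi\mathbf{t} = e^{\mp\rmi\theta}(1,\pm\rmi)^t$, one obtains
\begin{equation*}
e^{\rmi(n-1)\theta}\begin{pmatrix}1\\ \rmi\end{pmatrix} = e^{\rmi n\theta}(\bnu+\rmi\mathbf{t}),\qquad e^{\rmi(n+1)\theta}\begin{pmatrix}1\\ -\rmi\end{pmatrix} = e^{\rmi n\theta}(\bnu-\rmi\mathbf{t}),
\end{equation*}
so that each term in the formulas of Theorem~\ref{thm:sing} splits cleanly into an $e^{\rmi n\theta}\bnu$ part and an $e^{\rmi n\theta}\mathbf{t}$ part.

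The second step is to reduce the Bessel functions of order $n\pm 1$ appearing in the bracketed coefficients of Theorem~\ref{thm:sing} to $J_n$ and $J_n'$. Direct use of the recursion formulas \eqref{eq:re} yields
\begin{align*}
(n-1)J_{n-1}(t) - tJ_{n-1}'(t) &= tJ_n(t), & (t^2 - n(n-1))J_{n-1}(t) + ntJ_{n-1}'(t) &= t^2 J_n'(t),\\
(n+1)J_{n+1}(t) + tJ_{n+1}'(t) &= tJ_n(t), & (t^2 - n(n+1))J_{n+1}(t) - ntJ_{n+1}'(t) &= -t^2 J_n'(t),
\end{align*}
applied at $t = k_s R$ or $t = k_p R$. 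Substituting these identities collapses each of the four bracketed factors in Theorem~\ref{thm:sing} to a clean linear combination of $J_n$ and $J_n'$, so the Bessel side of the lemma is already in the correct shape.

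It remains to combine the Hankel factors $H_{n\pm 1}$ that multiply these simplified brackets. Summing and differencing the coefficients of $e^{\rmi n\theta}(\bnu+\rmi\mathbf{t})$ and $e^{\rmi n\theta}(\bnu-\rmi\mathbf{t})$ produces, in the $\bnu$ component, the $s$-wave piece paired with $H_{n-1}+H_{n+1}$ and the $p$-wave piece paired with $H_{n-1}-H_{n+1}$; the $\mathbf{t}$ component receives the opposite pairing, together with an extra factor of $\rmi$ from the basis change. Applying the standard Hankel recursions $H_{n-1}+H_{n+1} = 2nH_n/t$ and $H_{n-1}-H_{n+1} = 2H_n'$ then delivers precisely the coefficients $\alpha_{1n},\ldots,\alpha_{4n}$ of the statement, including the common prefactor $\pi/(2\omega^2 R)$ and the signs.

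The only real difficulty is sign-bookkeeping: the two $J$-identities for the $n-1$ and $n+1$ families differ by signs in front of the $J_n'$ contribution, and the two basis vectors $(1,\rmi)^t$ and $(1,-\rmi)^t$ couple this to either $H_{n-1}+H_{n+1}$ or $H_{n-1}-H_{n+1}$ once projected onto $\bnu$ versus $\mathbf{t}$. A single sign slip breaks the cancellations that convert the $n(n\mp 1)$-terms into pure products $J_nH_n$ or $J_n'H_n'$, and the clean form of $\alpha_{1n},\alpha_{4n}$ is then lost. With the correct pairing the lemma reduces to a direct computation that requires no further analytic ingredient.
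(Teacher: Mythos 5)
Your proposal is correct and follows the same route the paper indicates: take the boundary limit $|\bx|=R$ in Theorem~\ref{thm:sing} (justified by continuity of $\mathbf{S}_{\partial B_R}^\omega$ across the boundary) and then reduce via the recursion formulas \eqref{eq:re}. The paper offers only a one-sentence justification before the lemma, so your filled-in computation — the basis conversion $e^{\rmi(n\mp1)\theta}(1,\pm\rmi)^t=e^{\rmi n\theta}(\bnu\pm\rmi\mathbf{t})$, the four $J$-identities that collapse the order-$(n\pm1)$ brackets, and the Hankel sums/differences $H_{n-1}\pm H_{n+1}$ — is exactly what is meant, and I verified the key identities and the resulting $\alpha_{1n},\alpha_{2n}$ check out.
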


%

Next, we calculate the tractions $\partial_{\bnu} \mathrm{S}_{\partial B_{R}}^{\omega}\left[e^{\mathrm{i} n \theta} \boldsymbol{\bnu}\right]|_{\pm}$ and $\partial_{\bnu} \mathrm{S}_{\partial B_{R}}^{\omega}\left[e^{\mathrm{i} n \theta} \mathbf{t}\right]|_{\pm}$ on the boundary $\partial B_R$, where the traction
operator $\partial_{\bnu}$ is defined in \eqref{eq:trac}. First, we notice that
\begin{equation}\label{eq:dp12}
\frac{\partial}{\partial_{x_{1}}}\left(H_{n}(k|\bx|) e^{\mathrm{i} n \theta}\right)=p_{1 n} \quad \text { and } \quad \frac{\partial}{\partial_{x_{2}}}\left(H_{n}(k|\bx|) e^{\mathrm{i} n \theta}\right)=p_{2 n},
\end{equation}
where
\begin{eqnarray*}
p_{1 n} &=& k H_{n}^{\prime}(k|\bx|) e^{\mathrm{i} n \theta} \cos (\theta)-\mathrm{i} n H_{n}(k|\bx|) e^{\mathrm{i} n \theta} \sin (\theta) /|\mathbf{x}|,\\
p_{2 n} &=& k H_{n}^{\prime}(k|\bx|) e^{\mathrm{i} n \theta} \sin (\theta)+\mathrm{i} n H_{n}(k|\bx|) e^{\mathrm{i} n \theta} \cos (\theta) /|\mathbf{x}|.
\end{eqnarray*}
Hence for
\[
\mathbf{g}=e^{\mathrm{i} n \theta} H_{n}(k|\mathbf{x}|)\left(\begin{array}{c}
a \\
b
\end{array}\right),
\]
where $a$ and $b$ are two constants, one has that
\[
\nabla \cdot \mathbf{g}=a p_{1 n}+b p_{2 n},\qquad 2 \nabla^{s} \mathbf{g}=\left(\begin{array}{cc}
2 a p_{1 n} & a p_{2 n}+b p_{1 n} \\
a p_{2 n}+b p_{1 n} & 2 b p_{2 n}
\end{array}\right),
\]
where $p_{1 n}$ and $p_{2 n}$ are defined in \eqref{eq:dp12}. During the simplification, we have used the recursion formulas given in \eqref{eq:re}. Then we can obtain the follow lemma.

\begin{lem}\label{lem:pnu}
There hold the following relations:
\begin{equation}
\begin{split}
 \partial_{\bnu} \mathbf{S}_{\partial B_{R}}^{\omega}\left[e^{\mathrm{i} m \theta} \boldsymbol{\nu}\right]|_+ & =g_{1, m}(|\mathbf{x}|) e^{\mathrm{i} m \theta} \boldsymbol{\nu}+g_{2, m}(|\mathbf{x}|) e^{\mathrm{i} m \theta} \mathbf{t}, \\
 \partial_{\bnu} \mathbf{S}_{\partial B_{R}}^{\omega}\left[e^{\mathrm{i} m \theta} \mathbf{t}\right]|_+ & =g_{3, m}(|\mathbf{x}|) e^{i m \theta} \boldsymbol{\nu}+g_{4, m}(|\mathbf{x}|) e^{i m \theta} \mathbf{t},
\end{split}
\end{equation}
where

\begin{align*}
  g_{1, m}(|\mathbf{x}|)= &\frac{\mathrm{i} \pi}{2 \omega^{2} R^{2}}\left(2 \mu m^{2} J_{m}\left(k_{s} R\right)\left(H_{m}\left(k_{s}|\mathbf{x}|\right)-k_{s} R H_{m}^{\prime}\left(k_{s}|\mathbf{x}|\right)\right)+\right. \\
&\left.J_{m}^{\prime}\left(k_{p} R\right) k_{p} R\left(H_{m}\left(k_{p}|\mathbf{x}|\right)\left(\omega^{2} R^{2}-2 \mu m^{2}\right)+2 k_{p} \mu R H_{m}^{\prime}\left(k_{p}|\mathbf{x}|\right)\right)\right), \\
g_{2, m}(|\mathbf{x}|) =&-\frac{m \mu \pi}{2 \omega^{2} R^2}\left(J_{m}\left(k_{s} R\right) H_{m}\left(k_{s}|\mathbf{x}|\right)\left(k_{s}^{2} R^{2}-2 m^{2}\right)+\right. \\
 & \left.2 R\left(k_{s} J_{m}\left(k_{s} R\right) H_{m}^{\prime}\left(k_{s}|\mathbf{x}|\right)+k_{p} J_{m}^{\prime}\left(k_{p} R\right)\left(H_{m}\left(k_{p}|\mathbf{x}|\right)-k_{p} R H_{m}^{\prime}\left(k_{p}|\mathbf{x}|\right)\right)\right)\right),
\end{align*}

\begin{align*}
g_{3, m}(|\mathbf{x}|)= & \frac{m \pi}{2 \omega^{2} R^2}\left(J_{m}\left(k_{p} R\right) H_{m}\left(k_{p}|\mathbf{x}|\right)\left((\lambda+2 \mu) k_{p}^{2} R^{2}-2 \mu m^{2}\right)+\right. \\
 & \left.2 \mu R\left(k_{p} J_{m}\left(k_{p} R\right) H_{m}^{\prime}\left(k_{p}|\mathbf{x}|\right)+k_{s} J_{m}^{\prime}\left(k_{s} R\right)\left(H_{m}\left(k_{s}|\mathbf{x}|\right)-k_{s} R H_{m}^{\prime}\left(k_{s}|\mathbf{x}|\right)\right)\right)\right), \\
g_{4, m}(|\mathbf{x}|)= &\frac{\mathrm{i} \mu \pi}{2 \omega^{2} R^{2}}\left(2 m^{2} J_{m}\left(k_{p} R\right)\left(H_{m}\left(k_{p}|\mathbf{x}|\right)-k_{p} R H_{m}^{\prime}\left(k_{p}|\mathbf{x}|\right)\right)+\right. \\
 &\left.J_{m}^{\prime}\left(k_{s} R\right) k_{s} R\left(k_{s}^{2} R^{2} H_{m}\left(k_{s}|\mathbf{x}|\right)+2 k_{s} R H_{m}^{\prime}\left(k_{s}|\mathbf{x}|\right)-2 m^{2} H_{m}\left(k_{s}|\mathbf{x}|\right)\right)\right).
\end{align*}

\end{lem}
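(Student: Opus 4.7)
The plan is a direct computation built on the ingredients already assembled in this section. By Theorem~\ref{thm:sing}, each of $\bS_{\partial B_R}^{\omega}[e^{\mathrm{i} m\theta}\bnu]$ and $\bS_{\partial B_R}^{\omega}[e^{\mathrm{i} m\theta}\mathbf{t}]$ is a sum of four atomic terms of the shape $H_k(k_j|\bx|)\,e^{\mathrm{i} k\theta}\,(a,b)^t$ with $k\in\{m-1,m+1\}$ and $j\in\{s,p\}$. The display immediately preceding the lemma provides $\nabla\cdot\mathbf{g}$ and $2\nabla^s\mathbf{g}$ for precisely such atoms. Combining these, I would apply $\partial_{\bnu}\mathbf{g}=\lambda(\nabla\cdot\mathbf{g})\bnu+2\mu(\nabla^s\mathbf{g})\bnu$ atom by atom, using $\bnu=(\cos\theta,\sin\theta)^t$ and $\mathbf{t}=(-\sin\theta,\cos\theta)^t$, and then evaluate on the ``$+$'' side at $|\bx|=R$.

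Once the traction of each atom is decomposed into its $e^{\mathrm{i} m\theta}\bnu$ and $e^{\mathrm{i} m\theta}\mathbf{t}$ components---the common index $m$ emerges because $\bnu$ and $\mathbf{t}$ are linear combinations of $e^{\pm\mathrm{i}\theta}$, so multiplication by $e^{\mathrm{i}(m\mp 1)\theta}$ lands back on $e^{\mathrm{i} m\theta}$---I would use the recursion formulas \eqref{eq:re} to collapse every $J_{m\mp 1}, J'_{m\mp 1}, H_{m\mp 1}, H'_{m\mp 1}$ into $J_m, J'_m, H_m, H'_m$. Any $H''_m(k_j|\bx|)$ (or $J''_m$) that appears after differentiating twice is eliminated via the Bessel equation \eqref{eq:bd}, which converts a second derivative into a linear combination of $H_m$ and $H'_m/t$. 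The resulting coefficients, collected term by term and with the identities $\omega^2=\mu k_s^2=(\lambda+2\mu)k_p^2$ used to rewrite the $\lambda$-dependent contributions, then match the declared expressions for $g_{1,m},\dots,g_{4,m}$.

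The main obstacle is bookkeeping rather than conceptual: each of the four claimed identities results from combining four atoms with two different wave speeds and two index shifts, so I would treat the s-wave ($k_s$) and p-wave ($k_p$) pieces in parallel and only at the very end sum their contributions. Two sanity checks I would build in before finalizing. First, the jump relation \eqref{eq:jump}: computing the analogous $\partial_\bnu\bS^{\omega}_{\partial B_R}[\cdot]|_-$ by the same scheme (now using Proposition~\ref{pro:sinin} in place of Theorem~\ref{thm:sing}) and subtracting must return the original density $e^{\mathrm{i} m\theta}\bnu$ or $e^{\mathrm{i} m\theta}\mathbf{t}$, which pins down both the signs and the exact weights of the $H_m/J_m$ terms in $g_{1,m},\dots,g_{4,m}$. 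Second, the observation that only $g_{3,m}$ is $\lambda$-sensitive (via a factor $\lambda+2\mu$) is consistent with the structural fact that $\lambda$ acts only on the $\bnu$-component of the traction and only the density $e^{\mathrm{i} m\theta}\mathbf{t}$ produces a non-trivial divergence through its p-wave part.
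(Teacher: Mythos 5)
Your core plan mirrors what the paper actually does: the paper does not write out a free-standing proof of this lemma, but the display immediately preceding it supplies $p_{1m}$, $p_{2m}$ together with the formulas for $\nabla\cdot\mathbf{g}$ and $2\nabla^s\mathbf{g}$, to be applied atom-by-atom to the Hankel terms in Theorem~\ref{thm:sing}, with the recursion formulas \eqref{eq:re} and the Bessel equation \eqref{eq:bd} used to reduce the $m\mp1$ index shifts to order $m$ and to eliminate second derivatives. Your proposed derivation is this same computation, and your first sanity check (the jump difference $\partial_{\bnu}\bS^{\omega}_{\partial B_R}[\cdot]|_+ - \partial_{\bnu}\bS^{\omega}_{\partial B_R}[\cdot]|_-$ reproducing the density, via \eqref{eq:jump}) is sound and genuinely useful for pinning signs.

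Your second sanity check, however, is wrong as stated and would mislead you. You claim that only $g_{3,m}$ is $\lambda$-sensitive on the grounds that only the density $e^{\mathrm{i} m\theta}\mathbf{t}$ produces a non-trivial divergence through a p-wave part. But the remark after Theorem~\ref{thm:sing} shows that $\mathbf{S}^{\omega}_{\partial B_R}[e^{\mathrm{i} m\theta}\bnu]$ also contains a $\mathbf{P}^o_m$ (p-wave) piece, whose divergence is nonzero, so the $\lambda(\nabla\cdot\bw)\bnu$ term of the traction contributes to $g_{1,m}$ as well. Indeed the factor $\omega^2 R^2 - 2\mu m^2$ appearing in $g_{1,m}$ equals $(\lambda+2\mu)k_p^2 R^2 - 2\mu m^2$, the same combination the paper happens to write out explicitly in $g_{3,m}$; the asymmetry you noticed is purely a choice of display, not a structural property of the traction. (What the traction formula \eqref{eq:trac} does guarantee is that the explicit $\lambda$ coefficient can only show up in the $\bnu$-components $g_{1,m}$ and $g_{3,m}$, never in $g_{2,m}$ or $g_{4,m}$.) This does not affect the validity of your main derivation, but relying on the stated check would give you false reassurance, so correct it before using it.
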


%
\begin{rem}
Taking the traction of the function $\mathbf{Q}_n^i$ and $\mathbf{P}_n^i$ defined in \eqref{eq:qi} and \eqref{eq:pi} on the boundary $\partial B_R$ gives that 
\begin{equation}\label{eq:tpi}
\begin{split}
 \partial_{\bnu} \mathbf{Q}_n^i =  \gamma_{1n} e^{\rmi n \theta}\bnu + \gamma_{2n} e^{\rmi n \theta}\bt, \ \ \partial_{\bnu} \mathbf{P}_n^i = \gamma_{3n} e^{\rmi n \theta}\bnu +  \gamma_{4n} e^{\rmi n \theta}\bt,
\end{split}
\end{equation}
where
\[
 \gamma_{1n} = \frac{4n\mu}{k_s R^2} \left( k_{s} R J_{n}^{\prime}\left(k_{s}R\right) -  J_{n}\left(k_{s}R\right)  \right), \ \  \gamma_{2n} =\frac{2 \rmi \mu}{k_s R^2} \left( \left(2n^2 - k_{s}^{2} R^{2}\right) J_{n}\left(k_{s}R\right)-2 k_{s} R J_{n}^{\prime}\left(k_{s}R\right) \right),
\]
\[
 \gamma_{3n}= \frac{2 \mu}{k_p R^2} \left( \left(2n^2 - k_{s}^{2} R^{2}\right) J_{n}\left(k_{p}R\right)  -2  k_p R J_{n}^{\prime}(k_p R) \right) , \ \ \gamma_{4n} = \frac{4\rmi n\mu}{k_p R^2}\left( k_p R J_{n}^{\prime}(k_p R) - J_n(k_p R)  \right).
\]
\end{rem}

With the help of Lemma \ref{lem:pnu} and the jump formula in \eqref{eq:jump}, one can conclude the following lemma.

\begin{lem}\label{lem:npr}
There hold that
\[
\mathbf{K}_{\partial B_{R}}^{\omega, *}\left[e^{\mathrm{i} n \theta} \boldsymbol{\nu}\right]=a_{1 n} e^{\mathrm{i} n \theta} \boldsymbol{\nu}+a_{2 n} e^{\mathrm{i} n \theta} \mathbf{t}\quad\mbox{and}\quad
\mathbf{K}_{\partial B_{R}}^{\omega, *}\left[e^{\mathrm{i} n \theta} \mathbf{t}\right]=b_{1 n} e^{\mathrm{i} n \theta} \boldsymbol{\nu}+b_{2 n} e^{\mathrm{i} n \theta} \mathbf{t},
\]
where
\[
a_{1 n}=-\frac{1}{2}+g_{1, n}(R), \quad a_{2 n}=g_{2, n}(R), \quad b_{1 n}=g_{3, n}(R), \quad b_{2 n}=-\frac{1}{2}+g_{4, n}(R);
\]
with the functions $g_{i, n}(|\mathbf{x}|), 1 \leq i \leq 4$ given in Lemma \ref{lem:pnu}.
\end{lem}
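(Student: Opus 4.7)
The plan is to deduce the statement as an immediate corollary of the jump formula \eqref{eq:jump} combined with Lemma \ref{lem:pnu}, with no further computation beyond specialization to the boundary. The jump formula says
\[
\mathbf{K}_{\partial B_{R}}^{\omega,*}[\bvarphi] \;=\; \left.\frac{\partial \mathbf{S}_{\partial B_{R}}^{\omega}[\bvarphi]}{\partial \bnu}\right|_{+} \;-\; \frac{1}{2}\,\bvarphi,
\]
so the task reduces to writing down the exterior conormal derivative of the single-layer potential acting on the two densities $e^{\mathrm{i} n\theta}\bnu$ and $e^{\mathrm{i} n\theta}\mathbf{t}$, evaluated on $\partial B_{R}$, and then subtracting the identity contribution.

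First I would take $\bvarphi = e^{\mathrm{i} n \theta}\bnu$. Lemma \ref{lem:pnu} gives
\[
\partial_{\bnu}\mathbf{S}_{\partial B_{R}}^{\omega}[e^{\mathrm{i} n\theta}\bnu]|_{+} \;=\; g_{1,n}(|\bx|)\,e^{\mathrm{i} n\theta}\bnu \;+\; g_{2,n}(|\bx|)\,e^{\mathrm{i} n\theta}\mathbf{t},
\]
and letting $|\bx| \to R$ from outside and subtracting $\tfrac12 e^{\mathrm{i} n\theta}\bnu$ from the $\bnu$-component yields exactly
\[
\mathbf{K}_{\partial B_{R}}^{\omega,*}[e^{\mathrm{i} n\theta}\bnu] \;=\; \Bigl(-\tfrac{1}{2}+g_{1,n}(R)\Bigr) e^{\mathrm{i} n\theta}\bnu \;+\; g_{2,n}(R)\,e^{\mathrm{i} n\theta}\mathbf{t},
\]
which identifies $a_{1n}=-\tfrac12+g_{1,n}(R)$ and $a_{2n}=g_{2,n}(R)$. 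Repeating the same step with $\bvarphi=e^{\mathrm{i} n\theta}\mathbf{t}$, using the second identity of Lemma \ref{lem:pnu}, produces $b_{1n}=g_{3,n}(R)$ and $b_{2n}=-\tfrac12+g_{4,n}(R)$.

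There is essentially no obstacle here: the heavy lifting has already been carried out in Lemmas \ref{lem:d1}, \ref{lem:gnu}, \ref{lem:gt}, and \ref{lem:pnu}, where the expansions using the Graf formula and the Bessel recursion relations were performed. The only point deserving a line of care is the continuity of $\partial_{\bnu}\mathbf{S}_{\partial B_{R}}^{\omega}[\bvarphi]|_{+}$ up to the boundary and the validity of evaluating the expressions $g_{i,n}(|\bx|)$ at $|\bx|=R$; this follows from the standard mapping properties of the elastic single-layer potential on $C^{1,\alpha}$ domains, so that the formal substitution $|\bx|=R$ in the expressions of Lemma \ref{lem:pnu} is justified, completing the proof.
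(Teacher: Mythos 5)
Your proposal is correct and is exactly the route the paper takes: the paper states Lemma \ref{lem:npr} as an immediate consequence of the jump formula \eqref{eq:jump} together with the boundary traction expressions from Lemma \ref{lem:pnu}, which is precisely the argument you have spelled out.
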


Finally, we obtain the eigensystem for the N-P operator $\mathbf{K}_{\partial B_{R}}^{\omega, *}$.
\begin{thm}\label{thm:ei}
Let $a_{1 n}, a_{2 n}, b_{1 n}, b_{2 n}$ be given in Lemma \ref{lem:npr}.
The eigensystem for the N-P operator $\mathbf{K}_{\partial B_{R}}^{\omega, *}$ is given as follows: \\
1) if $a_{2 n} \neq 0$, the eigenvalues are 
\[
\begin{split}
\xi_{1 n} & =\frac{1}{2}\left(a_{1 n}+b_{2 n}-\sqrt{a_{1 n}^{2}-2 a_{1 n} b_{2 n}+4 a_{2 n} b_{1 n}+b_{2 n}^{2}}\right), \\
\xi_{2 n} & =\frac{1}{2}\left(a_{1 n}+b_{2 n}+\sqrt{a_{1 n}^{2}-2 a_{1 n} b_{2 n}+4 a_{2 n} b_{1 n}+b_{2 n}^{2}}\right),
\end{split}
\]
and the corresponding eigenfunctions are 
\[
\begin{aligned}
\mathbf{p}_{1 n} &=\left(\frac{a_{1 n}-b_{2 n}-\sqrt{a_{1 n}^{2}-2 a_{1 n} b_{2 n}+4 a_{2 n} b_{1 n}+b_{2 n}^{2}}}{2 a_{2 n}} \right) e^{\mathrm{i} n \theta} \boldsymbol{\nu} + e^{\mathrm{i} n \theta} \boldsymbol{t} , \\
\mathbf{p}_{2 n} &=\left(\frac{a_{1 n}-b_{2 n}+\sqrt{a_{1 n}^{2}-2 a_{1 n} b_{2 n}+4 a_{2 n} b_{1 n}+b_{2 n}^{2}}}{2 a_{2 n}} \right) e^{\mathrm{i} n \theta} \boldsymbol{\nu} + e^{\mathrm{i} n \theta} \boldsymbol{t}  ;
\end{aligned}
\]
2) if $a_{2 n}=0,$ and $a_{1 n} \neq b_{2 n}$,  the eigenvalues are 
\[
\xi_{1 n}=a_{1 n}, \quad \xi_{2 n}=b_{2 n},
\]
and the corresponding eigenfunctions are 
\[
\mathbf{p}_{1 n}=e^{\mathrm{i} n \theta}\bnu, \quad \mathbf{p}_{2 n}=\left(\frac{b_{1 n}}{b_{2 n}-a_{1 n}} \right) e^{\mathrm{i} n \theta}\bnu + e^{\mathrm{i} n \theta} \bt;
\]
3) if $a_{2 n}=0, a_{1 n}=b_{2 n}$ and $b_{1 n}=0$,  the eigenvalues are 
\[
\xi_{1 n}=a_{1 n}, \quad \xi_{2 n}=a_{1 n},
\]
and the corresponding eigenfunctions are 
\[
\mathbf{p}_{1 n}=e^{\mathrm{i} n \theta}\bnu, \quad \mathbf{p}_{2 n}= e^{\mathrm{i} n \theta} \bt;
\]
4) if $a_{2 n}=0, a_{1 n}=b_{2 n}$ and $b_{1 n} \neq 0$, the eigenvalues are 
\[
\xi_{1 n}=a_{1 n}, \quad \xi_{2 n}=a_{1 n},
\]
there is only one eigenfunction 
\[
\mathbf{p}_{1 n}=e^{\mathrm{i} n \theta}\bnu,
\]
and another one is the generalized eigenfunction,
\[
\mathbf{p}_{2 n}= \frac{1}{b_{1n}}  e^{\mathrm{i} n \theta} \bt,
\]
namely, $\mathbf{p}_{2 n}$ satisfies 
\[
 \left( \mathbf{K}_{\partial B_{R}}^{\omega, *} - \xi_{1n} \right)\mathbf{p}_{2 n} = \mathbf{p}_{1 n}.
\]
\end{thm}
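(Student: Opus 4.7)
The proof reduces, by virtue of Lemma \ref{lem:npr}, to a purely algebraic eigenvalue problem for a $2\times 2$ matrix. My plan is as follows.

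For each fixed $n\in\mathbb{Z}$, the subspace $V_n := \mathrm{span}\{e^{\mathrm{i}n\theta}\boldsymbol{\nu},\,e^{\mathrm{i}n\theta}\mathbf{t}\}$ is invariant under $\mathbf{K}_{\partial B_R}^{\omega,*}$ by Lemma \ref{lem:npr}. Writing elements of $V_n$ as column vectors with respect to the ordered basis $(e^{\mathrm{i}n\theta}\boldsymbol{\nu},\,e^{\mathrm{i}n\theta}\mathbf{t})$, the operator is represented by the matrix
\begin{equation*}
M_n=\begin{pmatrix} a_{1n} & b_{1n}\\ a_{2n} & b_{2n}\end{pmatrix},
\end{equation*}
with the $a_{in}$, $b_{in}$ given in Lemma \ref{lem:npr}. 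Thus the spectral problem $\mathbf{K}_{\partial B_R}^{\omega,*}\mathbf{p}=\xi\mathbf{p}$ on $V_n$ is equivalent to the algebraic eigenvalue problem for $M_n$.

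The characteristic polynomial is $\xi^2-(a_{1n}+b_{2n})\xi+(a_{1n}b_{2n}-a_{2n}b_{1n})=0$, whose discriminant simplifies to $a_{1n}^2-2a_{1n}b_{2n}+4a_{2n}b_{1n}+b_{2n}^2$; this immediately yields the two eigenvalues $\xi_{1n},\xi_{2n}$ stated in case (1). For the eigenvectors, whenever $a_{2n}\neq 0$ the second row of $(M_n-\xi I)v=0$ forces the ratio $v_1/v_2=(\xi-b_{2n})/a_{2n}$; setting $v_2=1$ and inserting $\xi=\xi_{1n}$ or $\xi_{2n}$ reproduces $\mathbf{p}_{1n},\mathbf{p}_{2n}$ exactly as in the statement.

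The remaining cases correspond to the degenerate situation $a_{2n}=0$, in which $M_n$ becomes upper triangular with eigenvalues $a_{1n},b_{2n}$ on the diagonal. If $a_{1n}\neq b_{2n}$ (case 2), $M_n$ is diagonalisable and a direct substitution into $(M_n-\xi I)v=0$ for $\xi=a_{1n}$ gives $v=(1,0)^t$, while for $\xi=b_{2n}$ the first row yields $v_1=b_{1n}/(b_{2n}-a_{1n})$, recovering the two eigenfunctions claimed. If $a_{1n}=b_{2n}$ and $b_{1n}=0$ (case 3) then $M_n=a_{1n}I$ and every nonzero vector is an eigenvector, in particular the two basis vectors. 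Finally, if $a_{1n}=b_{2n}$ but $b_{1n}\neq 0$ (case 4), $M_n$ is a nontrivial Jordan block; solving $(M_n-a_{1n}I)w=\mathbf{p}_{1n}$ with $\mathbf{p}_{1n}$ corresponding to $(1,0)^t$ amounts to $b_{1n}w_2=1$, i.e.\ $w=(0,1/b_{1n})^t$, which is precisely the generalised eigenfunction $\frac{1}{b_{1n}}e^{\mathrm{i}n\theta}\mathbf{t}$.

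No serious obstacle arises: the only subtlety is careful bookkeeping of the case division so as not to miss the defective Jordan case (4), and correctly recording the sign conventions from Lemma \ref{lem:npr} when translating back to displacements $e^{\mathrm{i}n\theta}\boldsymbol{\nu}$ and $e^{\mathrm{i}n\theta}\mathbf{t}$. Everything else is a one-line linear-algebra verification on a $2\times 2$ matrix.
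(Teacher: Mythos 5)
Your proposal is correct and follows essentially the same route as the paper: both reduce the spectral problem via Lemma \ref{lem:npr} to the $2\times2$ matrix $\left(\begin{smallmatrix}a_{1n}&b_{1n}\\ a_{2n}&b_{2n}\end{smallmatrix}\right)$ acting in the invariant subspace $\mathrm{span}\{e^{\mathrm{i}n\theta}\boldsymbol{\nu},e^{\mathrm{i}n\theta}\mathbf{t}\}$, then treat the four algebraic cases (generic, triangular non-defective, scalar, Jordan block) exactly as the paper does.
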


\begin{proof}
We first know from Lemma \ref{lem:npr} that
\begin{equation}\label{eq:trama}
\mathbf{K}_{\partial B_{R}}^{\omega, *}[\boldsymbol{\nu}, \mathbf{t}]=(\boldsymbol{\nu}, \mathbf{t}) T_{n},
\end{equation}
where $T_{n}$ is a $2 \times 2$ matrix given by $T_{n}=\big(a_{1 n},  b_{1 n};  a_{2 n}, b_{2 n}\big)$
Thus, we focus ourself on investigating the eigensystem of the matrix $T_{n} $, which could further lead to the eigensystem of the operator $\mathbf{K}_{\partial B_{R}}^{\omega, *}$.
Specifically, we like to find the matrix $P_{n}=\left(\mathbf{p}_{1 n}, \mathbf{p}_{2 n}\right)$ 
such that 
\begin{equation}\label{eq:mei}
T_{n} P_{n}=P_{n} \Lambda_{n},
\end{equation}
where the matrix $\Lambda_{n}$ is an diagonal matrix, namely $\Lambda_{n}=\big(\xi_{1 n}, 0; 0, \xi_{2 n}\big)$.
A direct calculation shows that if $a_{2 n} \neq 0,$ one has that
\[
\begin{aligned}
\mathbf{p}_{1 n} &=\left(\frac{a_{1 n}-b_{2 n}-\sqrt{a_{1 n}^{2}-2 a_{1 n} b_{2 n}+4 a_{2 n} b_{1 n}+b_{2 n}^{2}}}{2 a_{2 n}}, 1\right)^{t}, \\
\mathbf{p}_{2 n} &=\left(\frac{a_{1 n}-b_{2 n}+\sqrt{a_{1 n}^{2}-2 a_{1 n} b_{2 n}+4 a_{2 n} b_{1 n}+b_{2 n}^{2}}}{2 a_{2 n}}, 1\right)^{t},
\end{aligned}
\]
\[
\xi_{1 n}=\frac{1}{2}\left(a_{1 n}+b_{2 n}-\sqrt{a_{1 n}^{2}-2 a_{1 n} b_{2 n}+4 a_{2 n} b_{1 n}+b_{2 n}^{2}}\right),
\]
\[
\xi_{2 n}=\frac{1}{2}\left(a_{1 n}+b_{2 n}+\sqrt{a_{1 n}^{2}-2 a_{1 n} b_{2 n}+4 a_{2 n} b_{1 n}+b_{2 n}^{2}}\right).
\]
For the case $a_{2 n}=0$ and $a_{1 n} \neq b_{2 n},$ one has that
\[
\mathbf{p}_{1 n}=(1,0)^{t}, \quad \mathbf{p}_{2 n}=\left(\frac{b_{1 n}}{b_{2 n}-a_{1 n}}, 1\right)^{t}\quad\mbox{and}\quad
\xi_{1 n}=a_{1 n}, \quad \xi_{2 n}=b_{2 n}.
\]
Moreover, if $a_{2 n}=0, a_{1 n}=b_{2 n}$ and $b_{1 n}=0,$ one has that
\[
\mathbf{p}_{1 n}=(0,1)^{t}, \quad \mathbf{p}_{2 n}=(1,0)^{t}, \quad \xi_{1 n}=a_{1 n}, \quad \xi_{2 n}=a_{1 n}.
\]
For the last case $a_{2 n}=0, a_{1 n}=b_{2 n}$ and $b_{1 n} \neq 0$, the situation is different. The matrix $\Lambda_{n}$ given in \eqref{eq:mei} is not a diagonal matrix anymore, but a Jordan matrix, given as follows
\[
\Lambda_{n}=\left(\begin{array}{cc}
a_{1 n} & 1 \\
0 & a_{2 n}
\end{array}\right).
\]
Then the generalized eigenvectors are given as $\mathbf{p}_{1 n}=(1,0)^{t}$ and $\mathbf{p}_{2 n}=\left(0,  1/b_{1n} \right)^{t}$.
Finally, with the help of the relationship \eqref{eq:trama}, one can prove the statement of the theorem and the proof is completed.
\end{proof}

\begin{rem}
We present the asymptotic expansion for the eigenvalues when the frequency $\omega \ll 1 .$ From the asymptotic expansions of the Bessel function and Hankel function in \eqref{eq:asj} and \eqref{eq:ash1} for $\omega \ll 1,$ one has that when $|n| \geq 2$,
\[
 |a_{2n}|=\frac{\mu}{2(\lambda+2\mu)} + \mathcal{O}(\omega)\neq 0,
\]
which is the first case in Theorem \ref{thm:ei}, thus the eigenvalues are
\[
\xi_{1 n}=-\frac{\mu}{2(\lambda+2 \mu)}+\mathcal{O}\left(\omega^{2}\right), \quad \xi_{2 n}=\frac{\mu}{2(\lambda+2 \mu)}+\mathcal{O}\left(\omega^{2}\right).
\]
When $|n|=1,$ one has that
\[
 |a_{2n}|=\frac{\mu}{2(\lambda+2\mu)} + \mathcal{O}(\omega)\neq 0,
\]
which is the first case in Theorem \ref{thm:ei}, thus the eigenvalues are
\[
\xi_{1 n}=\frac{\mu}{2(\lambda+2 \mu)}+o(\omega), \quad \xi_{2 n}=\frac{1}{2}+\mathcal{O}\left(\omega^{2}\right),
\]
When $n=0,$ one has that
\[
a_{2 n}=b_{1 n}=0,  \quad  \mbox{and}\quad a_{1 n} \neq b_{2 n},
\]
which is the second case in Theorem \ref{thm:ei}, thus the eigenvalues are
\[
\xi_{1 n}=-\frac{\lambda}{2(\lambda+2 \mu)}+o(\omega), \quad \xi_{2 n}=\frac{1}{2}+\mathcal{O}\left(\omega^{2}\right).
\]
These conclusions recover the results concerning the spectrum of the N-P operator in the static regime (cf. \cite{AJ, LiLiu2d} ).

\end{rem}

\section{Elastic resonances for material structures with no core}

In this section, we construct a broad class of elastic structures of the form $\mathbf{C}_0$ in \eqref{eq:pa1} with no core, namely $D=\emptyset$ that can induce resonances. All the notation below are carried over from 
Sections \ref{sec:introduction} and \ref{sec:auxiliary}.
Suppose that a source term $\bff$ is supported outside $\Omega$. In such a case, the elastic system \eqref{eq:lame1} can be reduced into the following transmission problem:
\begin{equation}\label{eq:nocore}
  \left\{
    \begin{array}{ll}
      \mathcal{L}_{\hat{\lambda}, \hat{\mu}}\bu(\bx) + \omega^2\bu(\bx) =0,    & \bx\in \Omega, \smallskip \\
      \mathcal{L}_{\lambda, \mu}\bu(\bx) + \omega^2\bu(\bx) =\bff, & \bx\in \mathbb{R}^N\backslash \overline{\Omega},\smallskip \\
      \bu(\bx)|_- = \bu(\bx)|_+,      & \bx\in\partial \Omega, \smallskip \\
      \partial_{\hat{\bnu}}\bu(\bx)|_- = \partial_{\bnu}\bu(\bx)|_+, & \bx\in\partial \Omega\,.
    \end{array}
  \right.
\end{equation}


\subsection{Existence of resonances in generic scenarios}
Using the single-layer potential in \eqref{eq:single}, the solution to this system can be written as 
\begin{equation}\label{eq:sol}
  \bu=
 \left\{
   \begin{array}{ll}
     \hat{\bS}_{\partial\Omega}^{\omega}[\bpsi_1](\bx), & \bx\in \Omega, \\
     \bS_{\partial\Omega}^{\omega}[\bpsi_2](\bx) + \mathbf{F}, &  \bx\in \mathbb{R}^N\backslash \overline{\Omega},
   \end{array}
 \right.
\end{equation}
where $\mathbf{F}$ is called the Newtonian potential of the source 
$\bff$ and $\bpsi_1, \bpsi_2\in L^2(\partial\Omega)^N$:
\begin{equation}\label{eq:newpf}
 \mathbf{F}(\bx):= \int_{\mathbb{R}^N} \mathbf{\Gamma}^{\omega}(\bx-\by)\bff(\by)d\by, \quad \bx\in \mathbb{R}^N\,.
\end{equation}

One can readily verify that the solution defined in \eqref{eq:sol} satisfy the first two conditions in \eqref{eq:nocore}. For the third and forth condition in \eqref{eq:nocore} across $\partial\Omega$, namely the transmission conditions, one can obtain that 
\begin{equation}\label{eq:sol1}
  \left\{
    \begin{array}{ll}
      \hat{\bS}_{\partial\Omega}^{\omega}[\bpsi_1] - \bS_{\partial\Omega}^{\omega}[\bpsi_2] = \mathbf{F}, \\
      \partial_{\hat{\bnu}}\hat{\bS}_{\partial\Omega}^{\omega}[\bpsi_1]|_- - \partial_{\bnu}\bS_{\partial\Omega}^{\omega}[\bpsi_2]|_+ = \partial_{\bnu}\mathbf{F} ,
    \end{array}
  \right.
  \quad \bx\in\partial \Omega.
\end{equation}
With the help of the jump formula \eqref{eq:jump}, the equation \eqref{eq:sol1} can be rewritten as 
\begin{equation}\label{eq:ma1}
  \bA^{\omega}
 \left[
   \begin{array}{c}
     \bpsi_1 \\
     \bpsi_2 \\
   \end{array}
 \right]=
\left[
  \begin{array}{c}
    \mathbf{F} \\
    \partial_{\bnu}\mathbf{F} \\
  \end{array}
\right],
\end{equation}
where 
\begin{equation}
  \bA^{\omega} =
 \left[
   \begin{array}{cc}
      \hat{\bS}_{\partial\Omega}^{\omega} & - \bS_{\partial\Omega}^{\omega}\smallskip \\
     -1/2I+\hat{\bK}_{\partial\Omega}^{\omega, *} & -1/2I- \bK_{\partial\Omega}^{\omega, *} \\ 
   \end{array}
 \right].
\end{equation}

Next, we show that the resonance could occur {even for the domain $\Omega$ to be of a generic geometry.}
It is noted that $\hat{\bS}_{\partial\Omega}^{\omega}$ and $\bS_{\partial\Omega}^{\omega}$ are compact operators on $L^2(\partial\Omega)^N$ (cf. \cite{bk:ab}). Following the similar argument as that in the proof of Lemma \ref{lem:sks}, one can readily show that the spectrum of the operator $\bA^{\omega}$ consist of the point spectrum only. 
{Denoting by $\mathcal{H}_j$ the generalized eigenspace of $\bA^{\omega}$ for the eigenvalue $\xi_j$, then we 
can obtain the following result, by applying the Jordan theory directly to the operator $\bA_{\delta}^{\omega}|_{\mathcal{H}_j}: {\mathcal{H}_j} \rightarrow {\mathcal{H}_j}$.}
%
\begin{lem}\label{lem:ja}
There exists a basis $\{\bm{\Psi}_{j,l,k}\}$, $1\leq l\leq m_j$, $1\leq k \leq n_{j,l}$ for $\mathcal{H}_j$ such that 
\[
\bA^{\omega} \left(\bm{\Psi}_{j,1,1}, \dots, \bm{\Psi}_{j,m_j,n_{j,m_j}}\right) =  \left(\bm{\Psi}_{j,1,1}, \dots, \bm{\Psi}_{j,m_j,n_{j,m_j}}\right)  \left(
   \begin{array}{ccc}
      J_{j,1} &   & \\
        &  \ddots  &\\
          &  & J_{j,m_j} \\
   \end{array}
 \right),
\]
where $J_{j,l}$ is the canonical Jordan matrix of size $n_{j,l}$ in the form 
\[
 J_{j,l}=\left(
   \begin{array}{cccc}
      \xi_j & 1  & & \\
        &  \ddots  & \ddots & \\
          &  & \xi_j &1 \\
          & & & \xi_j
   \end{array}
 \right).
\]
\end{lem}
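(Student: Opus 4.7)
The plan is to reduce the claim to the classical Jordan canonical form theorem for linear operators on finite-dimensional vector spaces. The lemma is essentially a structural statement about the restriction $\bA^\omega|_{\mathcal{H}_j}$, and once the two key properties, invariance and finite dimensionality of $\mathcal{H}_j$, are in hand, the conclusion follows from a textbook result.

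First I would record the invariance: by definition $\mathcal{H}_j = \bigcup_{k\geq 1}\ker(\bA^\omega-\xi_j I)^k$ (the generalized eigenspace), and since $\bA^\omega$ commutes with $(\bA^\omega-\xi_j I)^k$, it maps each $\ker(\bA^\omega-\xi_j I)^k$ into itself. Hence $\bA^\omega|_{\mathcal{H}_j}$ is a well-defined operator with $\sigma(\bA^\omega|_{\mathcal{H}_j})=\{\xi_j\}$. This step is routine.

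The main technical step is to show $\dim \mathcal{H}_j<\infty$. For this I would invoke the structure of $\bA^\omega$. The single-layer operators $\hat{\bS}_{\partial\Omega}^\omega$ and $\bS_{\partial\Omega}^\omega$ are compact on $L^2(\partial\Omega)^N$; the N-P operators $\hat\bK_{\partial\Omega}^{\omega,*}$ and $\bK_{\partial\Omega}^{\omega,*}$ are only polynomially compact, but the analysis in the proof of Lemma \ref{lem:sks} shows that the essential spectrum of $\bA^\omega$ is a finite set (the accumulation points $\pm k_0$ in 2D, together with $0$ in 3D, arising from the block-diagonal $\pm\tfrac12 I+\bK_{\partial\Omega}^{\omega,*}$ entries). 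Since $\xi_j$ is assumed to be an eigenvalue away from this essential spectrum, $\bA^\omega-\xi_j I$ is Fredholm of index zero. Standard Fredholm theory then yields that the ascent of $\bA^\omega-\xi_j I$ at $\xi_j$ is finite and each $\ker(\bA^\omega-\xi_j I)^k$ is finite-dimensional, so $\mathcal{H}_j$ is finite-dimensional. I expect this verification, carefully isolating the compact part of $\bA^\omega$ using the polynomial-compactness factorizations given in Section \ref{sec:auxiliary}, to be the main obstacle.

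Once these two ingredients are established, I would finish by applying the Jordan canonical form theorem to the finite-dimensional operator $\bA^\omega|_{\mathcal{H}_j}:\mathcal{H}_j\to\mathcal{H}_j$. This gives a direct sum decomposition
\[
\mathcal{H}_j = \bigoplus_{l=1}^{m_j} \mathcal{C}_{j,l},
\]
where each $\mathcal{C}_{j,l}$ is a cyclic subspace of dimension $n_{j,l}$ on which $\bA^\omega-\xi_j I$ acts as a single nilpotent Jordan block. Choosing a cyclic vector $\bm{\Psi}_{j,l,n_{j,l}}$ in each $\mathcal{C}_{j,l}$ and setting $\bm{\Psi}_{j,l,k}:=(\bA^\omega-\xi_j I)^{n_{j,l}-k}\bm{\Psi}_{j,l,n_{j,l}}$ for $1\leq k\leq n_{j,l}$ produces the desired basis, and reassembling the blocks into a single matrix gives precisely the block-diagonal representation claimed in the statement. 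The indexing conventions $1\leq l\leq m_j$ and $1\leq k\leq n_{j,l}$ then match exactly the form of the matrices $J_{j,l}$ displayed in the lemma, completing the proof.
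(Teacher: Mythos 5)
Your proof follows the same route as the paper: compactness of the single-layer blocks together with polynomial compactness of the N-P operators gives a finite essential spectrum for $\bA^\omega$, Fredholm theory then yields finite-dimensional generalized eigenspaces, and the finite-dimensional Jordan canonical form theorem finishes the argument. One small inaccuracy, harmless to your conclusion: since the entire first row of $\bA^\omega$ is compact, modulo compacts $\bA^\omega$ is block lower-triangular with diagonal blocks $0$ and $-\tfrac12 I - \bK_{\partial\Omega}^{\omega,*}$, so in 2D the essential spectrum sits at $\{0,\,-\tfrac12\pm k_0\}$ rather than at $\pm k_0$ as you wrote; your Fredholm step only needs this set to be finite, so nothing changes.
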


The following theorem presents the existence of resonances in generic scenarios. 

\begin{thm}\label{th:reg}
If the parameters are properly chosen such that one of the eigenvalues in Lemma~\ref{lem:ja} satisfies $\xi_j\ll1$, then resonance occurs in the sense of Definition~\ref{def:1}. Denote by $p_{j}=\max\{n_{j,l} \}_{l=1}^{m_j}$ with $n_{j,l}$ defined in Lemma \ref{lem:ja}, then the resonance occurs at the order of $M=\mathcal{O}\left({\xi_j^{-2p_j}} \right)$ (cf. \eqref{con:res}).
\end{thm}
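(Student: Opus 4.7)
The plan is to invert the boundary integral system \eqref{eq:ma1} via the Jordan decomposition of $\bA^\omega$ provided by Lemma~\ref{lem:ja}, identify the leading singularity in the densities $\bpsi_1,\bpsi_2$ as $\xi_j\to 0$, and then convert this blow-up into a lower bound on the dissipation $\mathscr{E}(\bu)$.

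First I would fix a generic source $\bff$ and expand the Newtonian data in the Jordan basis as $[\mathbf{F},\partial_\bnu\mathbf{F}]^t = \sum_{j,l,k} c_{j,l,k}\,\bm{\Psi}_{j,l,k}$, then seek $[\bpsi_1,\bpsi_2]^t = \sum_{j,l,k} d_{j,l,k}\,\bm{\Psi}_{j,l,k}$. The relations $\bA^\omega \bm{\Psi}_{j,l,k} = \xi_j \bm{\Psi}_{j,l,k} + \bm{\Psi}_{j,l,k-1}$ (with the convention $\bm{\Psi}_{j,l,0}:=0$) give the triangular recursion $\xi_j d_{j,l,k} + d_{j,l,k+1} = c_{j,l,k}$, whose bottom-of-the-chain solution reads
\[
d_{j,l,1} = \sum_{s=0}^{n_{j,l}-1}(-1)^s \xi_j^{-(s+1)} c_{j,l,1+s}.
\]
For a generic $\bff$, the top coefficient $c_{j,l,p_j}$ is nonzero for some block of length $n_{j,l} = p_j$, so the leading term dominates and one obtains $\|\bpsi_1\|_{L^2(\partial\Omega)^N} + \|\bpsi_2\|_{L^2(\partial\Omega)^N} = \mathcal{O}(|\xi_j|^{-p_j})$.

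Next I would transfer this singular size to the field $\bu$ inside $\Omega$, where $\bu = \hat{\bS}^\omega_{\partial\Omega}[\bpsi_1]$. Under the generic assumption that $\omega^2$ avoids the Dirichlet spectrum of $-\Lcal_{\hat\lambda,\hat\mu}$ in $\Omega$, the single-layer operator $\hat{\bS}^\omega_{\partial\Omega}\colon L^2(\partial\Omega)^N \to H^1(\Omega)^N$ admits a bounded left inverse via the interior trace, hence $\|\bu\|_{H^1(\Omega)^N}\gtrsim \|\bpsi_1\|_{L^2(\partial\Omega)^N}\sim |\xi_j|^{-p_j}$. Invoking \eqref{eq:func1}, a Korn-type inequality, and the positivity $\Im\hat\lambda,\Im\hat\mu>0$ then gives
\[
\mathscr{E}(\bu) \geq 2\,\Im\hat\mu \int_\Omega |\nabla^s\bu|^2\,d\bx \gtrsim |\xi_j|^{-2p_j},
\]
which establishes \eqref{con:res} with $M = \mathcal{O}(\xi_j^{-2p_j})$; the matching upper bound is a continuity estimate of $\hat{\bS}^\omega_{\partial\Omega}$ applied to a density of size $|\xi_j|^{-p_j}$.

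The main obstacle I anticipate is twofold. First, one must secure the genericity requirement on $\bff$: the source must generate a nonzero coefficient $c_{j,l,p_j}$ at the top of a maximal-length Jordan chain, and the corresponding basis vector $\bm{\Psi}_{j,l,p_j}$ must have a nontrivial $\bpsi_1$-component, since otherwise the blow-up is trapped in $\bpsi_2$ and does not feed into $\bu|_\Omega$. Second, the inequality $\|\bu\|_{H^1(\Omega)}\gtrsim\|\bpsi_1\|_{L^2(\partial\Omega)}$ requires excluding a discrete set of frequencies and material configurations at which $\hat{\bS}^\omega_{\partial\Omega}$ loses injectivity; within the regime $\omega\sim 1$ of \eqref{eq:qs3} this is a mild but nontrivial constraint that must be folded into the notion of \emph{properly chosen parameters} appearing in the statement.
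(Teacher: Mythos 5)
Your proposal follows essentially the same route as the paper: expand the data and the densities in the Jordan basis of $\bA^{\omega}$, solve the resulting bidiagonal recursion $\xi_j d_{j,l,k}+d_{j,l,k+1}=c_{j,l,k}$ from the top of the chain down, observe that the bottom coefficient blows up like $c_{j,l,p_j}\xi_j^{-p_j}$, and then lower-bound the dissipation by the square of this leading coefficient. The paper fixes the source to live entirely in a single maximal Jordan chain (taking $p_j=n_{j,1}$) and writes out the closed form $g_k=\xi_j^{-(p_j-k+1)}\sum_{i=k}^{p_j}(-1)^{i-k+1}f_i\xi_j^{p_j-i}$, which agrees with your formula for $d_{j,l,1}$ up to an overall sign.

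The only genuine divergence is in the final energy estimate. The paper applies Green's identity to write $\mathscr{E}(\bu)=\Im\bigl(\omega^2\int_\Omega|\bu|^2+\int_{\partial\Omega}\bu\cdot\partial_{\hat\bnu}\overline{\bu}\bigr)$ and then keeps only the dominant $|g_1|^2\,\Im\int_{\partial\Omega}\hat\bS^\omega_{\partial\Omega}[\bm{\Psi}_{j,1,1}]\cdot\overline{\partial_{\hat\bnu}\hat\bS^\omega_{\partial\Omega}[\bm{\Psi}_{j,1,1}]}\big|_-$ term. You instead invoke a bounded left inverse of $\hat\bS^\omega_{\partial\Omega}$ together with a Korn inequality and the positivity of $\Im\hat\lambda,\Im\hat\mu$. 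Both routes implicitly rely on a nondegeneracy assumption (the paper needs that boundary pairing to have strictly positive imaginary part; you need $\hat\bS^\omega_{\partial\Omega}$ injective and a lower bound on the Korn constant), so they are comparable in rigor. One minor slip in your list of obstacles: the vector whose $\bpsi_1$-component must be nontrivial in order for the blow-up to register in $\bu|_\Omega$ is the \emph{bottom} chain vector $\bm{\Psi}_{j,l,1}$ (this is the one multiplied by the most singular coefficient $d_{j,l,1}$), not the top vector $\bm{\Psi}_{j,l,p_j}$; the requirement on $\bm{\Psi}_{j,l,p_j}$ is rather that the data $[\mathbf{F},\partial_\bnu\mathbf{F}]^t$ have a nonzero projection onto it, i.e.\ $c_{j,l,p_j}\neq 0$, which you do state correctly.
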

\begin{proof}
Without loss of generality, we assume that $p_j=n_{j,1}$ and consider the source term only containing the following term
\[
 \left[
  \begin{array}{c}
    \mathbf{F} \\
    \partial_{\bnu}\mathbf{F} \\
  \end{array}
\right]=
\sum_{k=1}^{p_j} f_k \bm{\Psi}_{j,1,k},
\]
where $f_k$ are the coefficients. Thus the density function can be written as
\[
 \left[
   \begin{array}{c}
     \bpsi_1 \\
     \bpsi_2 \\
   \end{array}
 \right]=
 \sum_{k=1}^{p_j} g_k \bm{\Psi}_{j,1,k},
\]
where $g_k$ are the coefficients to be determined. With the help of \eqref{eq:ma1} and Lemma \ref{lem:ja}, one has that 
\[
g_k = \frac{1}{\xi_j^{p_j-k+1}}\sum_{i=k}^{p_j}(f_{i}\xi_j^{p_j-i} (-1)^{i-k+1}).
\]
Thus $g_1$ has the following expression
\[
 g_1= \frac{f_{p_j} }{\xi_j^{p_j}} (-1)^{p_j} + \mathcal{O}(\xi_j^{1-p_j}).
\]
{Then we have the following estimate for the dissipation energy:} 
\[
\begin{split}
E(\bu) &=  \Im P_{\hat{\lambda},\hat{\mu}}(\bu,\bu)  = \Im \left( \omega^2\int_{\Omega} |\bu|^2 d\bx + \int_{\partial \Omega} \bu \cdot \partial_{\bnu} \overline{\bu}  \right) \\
 & \geq |g_1|^2 \Im \left(  \int_{\partial \Omega} \hat{\bS}_{\partial\Omega}^{\omega}[ \bm{\Psi}_{j,1,1}] \cdot \overline{ \partial_{\hat{\bnu}} \hat{\bS}_{\partial\Omega}^{\omega}[ \bm{\Psi}_{j,1,1}]} |_- \right)\\
 & \geq  \frac{f_{p_j}^2 }{\xi_j^{2p_j}}  \Im \left(  \int_{\partial \Omega} \hat{\bS}_{\partial\Omega}^{\omega}[ \bm{\Psi}_{j,1,1}] \cdot \overline{ \partial_{\hat{\bnu}} \hat{\bS}_{\partial\Omega}^{\omega}[ \bm{\Psi}_{j,1,1}]} |_- \right).
\end{split}
\]
This completes the proof.

\end{proof}

\begin{rem}
We would like to emphasize that the resonance within the finite regime is highly enhanced compared with that in the static case. In fact, in the static situation, the N-P operator $\hat{\bK}_{\partial\Omega}^{0, *}$ is symmetric in a certain Hilbert space. Thus the parameter $p_j$ in Theorem \ref{th:reg} is always $1$ and the resonance always blows up at the rate $\mathcal{O}\left({\xi_j^{-2}} \right)$ \cite{AJ}. However, {in our current case} beyond the quasi-static approximation, the resonance would blow up at the rate $\mathcal{O}\left({\xi_j^{-2p_j}} \right)$ as proved in the last theorem with $p_j\geq 1$, which is caused by the fact that the N-P operator $\hat{\bK}_{\partial\Omega}^{\omega, *}$ is no longer symmetric 
in any inner product space. 
\end{rem}

\begin{rem}
The condition $\xi_j\ll1$ generally can be satisfied. In fact, since the Lam\'e parameters $(\hat{\lambda}, \hat{\mu})$ in the domain $\Omega$ can break the strong convexity conditions in \eqref{eq:con}, hence the system \eqref{eq:nocore} is allowed to lose the ellipticity. Thus there exists a certain eigenvalue satisfying the condition $\xi_j\ll1$. Next, we choose the domain $\Omega$ to be a circle to strictly verify the statement in Theorem \ref{th:reg} in two dimensions. For the three dimensions, readers may refer to the paper \cite{DLL2}.
\end{rem}

\subsection{Resonance and its quantitative behavior for circular domain}
In this subsection, we consider the specific case that the domain $\Omega$ is a circle $B_R$. In such a case, we can have a deep understanding of the occurrence of the resonance as well as its quantitative behaviours. Since the source term $\bff$ is supported outside $B_R$, there exists $\epsilon>0$ such that when $\bx\in B_{R+\epsilon}$, the Newtonian potential $\bF$ defined in \eqref{eq:newpf} satisfies 
\[
 \Lcal_{\lambda,\mu}\bF + \omega^2\bF=0.
\]
Thus $\bF$ can be written as 
\begin{equation}\label{eq:FF}
  \bF= \sum_{n=-\infty}^{\infty} \left(  \frac{ \kappa_{1,n} k_s R}{n J_n(k_s R)}   \mathbf{Q}_n^i  +  \frac{ \kappa_{2,n} k_p R}{n J_n(k_p R)} \mathbf{P}_n^i \right),
\end{equation}
where $\kappa_{1,n}, \kappa_{2,n}$ are the coefficients, and the functions $\mathbf{Q}_n^i $ and $\mathbf{P}_n^i$ are defined in \eqref{eq:qi} and \eqref{eq:pi}. Here $\frac{ k_s R}{n J_n(k_s R)}$ and $\frac{ k_p R}{n J_n(k_p R)}$ are the normalization constants. From the expressions for the functions $\mathbf{Q}_n^i $ and $\mathbf{P}_n^i$ in \eqref{eq:qi} and \eqref{eq:pi}, one has that on the boundary $\partial B_R$:
\[
  \bF= \sum_{n=-\infty}^{\infty}  \bb_n^t \bff_n,
\]
where 
\[
\bb_n= \left(
   \begin{array}{c}
     e^{\rmi n \theta} \bnu \\
     e^{\rmi n \theta} \bt \\
   \end{array}
 \right), \quad 
 \bff_n =  \left(
   \begin{array}{c}
    f_{1,n}  \\
    f_{2,n} \\
   \end{array}
 \right)=  \left(
   \begin{array}{c}
     \kappa_{1,n} \eta_{1,n} +  \kappa_{2,n} \eta_{3,n}  \\
     \kappa_{1,n} \eta_{2,n} +  \kappa_{2,n} \eta_{4,n} \\
   \end{array}
 \right),
\]
with 
\[
\eta_{1,n}=2, \quad \eta_{2,n}= \frac{2\rmi k_s R J_{n}^{\prime}(k_s R)}{n J_n(k_s R)}, \quad  \eta_{3,n}=   \frac{ 2k_p R J_{n}^{\prime}(k_p R)}{n J_n(k_p R)}  , \quad \eta_{4,n}=   2 \rmi .
\]
Moreover, from \eqref{eq:tpi}, one has that on the boundary $\partial B_R$:
\[
 \partial_{\bnu} \bF= \sum_{n=-\infty}^{\infty}  \bb_n^t \tilde{\bff}_n,
\]
where
\[
 \tilde{\bff}_n =  \left(
   \begin{array}{c}
    \tilde{f}_{1,n}  \\
    \tilde{f}_{2,n} \\
   \end{array}
 \right)= \left(
   \begin{array}{c}
     \frac{  \kappa_{1,n} \gamma_{1,n}  k_s R}{n J_n(k_s R)} +  \frac{ \kappa_{2,n} \gamma_{3,n} k_p R}{n J_n(k_p R)}  \\
    \frac{   \kappa_{1,n} \gamma_{2,n}  k_s R}{n J_n(k_s R)} + \frac{  \kappa_{2,n} \gamma_{4,n} k_p R}{n J_n(k_p R)} \\
   \end{array}
 \right),
\]
with $\gamma_{i,n}$, $1\leq i \leq 4$ given in \eqref{eq:tpi}.

From Lemmas \ref{lem:eisin} and \ref{lem:pnu}, one has that under the basis $\left(e^{\rmi n \theta}\bnu, e^{\rmi n \theta}\bt \right)$, the operators $ \bS_{\partial\Omega}^{\omega}, \hat{\bS}_{\partial\Omega}^{\omega}, \partial_{\bnu}\bS_{\partial\Omega}^{\omega}[\bpsi_2]|_+, \partial_{\hat{\bnu}}\hat{\bS}_{\partial\Omega}^{\omega}[\bpsi_1]|_-$ have the following expressions:
\[
 \bS_{\partial\Omega}^{\omega}= \Tcal_{1n}, \quad  \hat{\bS}_{\partial\Omega}^{\omega} =  \widehat{\Tcal}_{1n}, \quad \partial_{\bnu}\bS_{\partial\Omega}^{\omega}[\bpsi_2]|_+= \Tcal_{2n},\quad  \partial_{\hat{\bnu}}\hat{\bS}_{\partial\Omega}^{\omega}[\bpsi_1]|_- = \widehat{\Tcal}_{2n},
\]
where 
\[
{\Tcal}_{1n} =
 \left(
   \begin{array}{cc}
      \alpha_{1, n} &  \alpha_{3, n} \\
      \alpha_{2, n} &  \alpha_{4, n} \\
   \end{array}
 \right), \quad
 \widehat{\Tcal}_{1n} =
 \left(
   \begin{array}{cc}
      \hat{\alpha}_{1, n} &  \hat{\alpha}_{3, n} \\
      \hat{\alpha}_{2, n} &  \hat{\alpha}_{4, n} \\
   \end{array}
 \right),  
\]
\[
{\Tcal}_{2n} =
 \left(
   \begin{array}{cc}
     g_{1, n} & g_{3, n} \\
     g_{2, n} &  g_{4, n} \\
   \end{array}
 \right), \quad
 \widehat{\Tcal}_{2n} =
 \left(
   \begin{array}{cc}
     \hat{g}_{1, n}-1 & \hat{g}_{3, n} \\
     \hat{g}_{2, n} & \hat{g}_{4, n}-1 \\
   \end{array}
 \right).  
\]
In the last equation, $\alpha_{in}$ and $g_{i,n}$ with $i=1,2,3,4$ are given in Lemmas \ref{lem:eisin} and \ref{lem:pnu}, and $\hat{\alpha}_{in}$ and $\hat{g}_{i,n}$ with $i=1,2,3,4$ are also given in Lemmas \ref{lem:eisin} and \ref{lem:pnu} with $(\mu, \lambda)$ replaced by $(\hat{\mu}, \hat{\lambda})$.

Hence, the density functions $\bpsi_1$ and $\bpsi_2$ can be expressed by, 
\begin{equation}\label{eq:denc}
\bpsi_1 =\sum_{n=-\infty}^{\infty} \bb_n^t \bpsi_{1,n}, \qquad  \bpsi_2 =\sum_{n=-\infty}^{\infty} \bb_n^t \bpsi_{2,n},
\end{equation}
where 
\[
\bb_n= \left(
   \begin{array}{c}
     e^{\rmi n \theta} \bnu \\
     e^{\rmi n \theta} \bt \\
   \end{array}
 \right), \quad 
  \bpsi_{1,n} = \left(
   \begin{array}{c}
      \psi_{1,1,n} \\
      \psi_{1,2,n} \\
   \end{array}
 \right), \quad 
 \bpsi_{2,n} = \left(
   \begin{array}{c}
      \psi_{2,1,n} \\
      \psi_{2,2,n} \\
   \end{array}
 \right),
\]
and the coefficients $\psi_{i,j,n}$, $1\leq i,j\leq 2$ are needed to be determined. Thus the system \eqref{eq:sol1} can be written as for $-\infty<n<\infty$:
\begin{equation}\label{eq:tm}
  \left\{
    \begin{array}{ll}
      \widehat{\Tcal}_{1n} \bpsi_{1,n} = \Tcal_{1n} \bpsi_{2,n}  + \bff_n , \\
      \widehat{\Tcal}_{2n} \bpsi_{1,n} = \Tcal_{2n} \bpsi_{2,n}  + \tilde{\bff}_n.
    \end{array}
  \right.
\end{equation}
Directly solving the equation \eqref{eq:tm} gives that
\begin{equation}\label{eq:sona}
\psi_{1,1,n} =\frac{c_{1,n}}{d_n},  \quad \psi_{1,2,n} =\frac{c_{2,n}}{d_n},
\end{equation}
where
\[
\begin{split}
c_{1,n} =& ( f_{2,n} \hat{\alpha}_{3, n} - f_{1,n} \hat{\alpha}_{4, n})( g_{3, n} g_{2, n} - g_{1, n} g_{4, n}) +
	 ( f_{2,n} (\hat{g}_{4, n}-1) - \tilde{f}_{2,n} \hat{\alpha}_{4, n})( g_{1, n} \alpha_{3, n} - g_{3, n} \alpha_{1, n}) + \\
	&  ( f_{2,n} \hat{g}_{3, n} - \tilde{f}_{1,n} \hat{\alpha}_{4, n})( g_{4, n} \alpha_{1, n} - g_{2, n} \alpha_{3, n}) +
	 ( f_{1,n} (\hat{g}_{4, n}-1) - \tilde{f}_{2,n} \hat{\alpha}_{3, n})( g_{3, n} \alpha_{2, n} - g_{1, n} \alpha_{4, n}) + \\
	 & ( f_{1,n} \hat{g}_{3, n} - \tilde{f}_{1,n} \hat{\alpha}_{3, n})( g_{2, n} \alpha_{4, n} - g_{4, n} \alpha_{2, n}) +
	 ( \tilde{f}_{1,n} (\hat{g}_{4, n}-1) - \tilde{f}_{2,n} \hat{g}_{3, n})( \alpha_{1, n} \alpha_{4, n} - \alpha_{3, n} \alpha_{2, n}) ,
\end{split}
\]
\[
\begin{split}
c_{2,n} =&( f_{2,n} \hat{\alpha}_{1, n} - f_{1,n} \hat{\alpha}_{2, n})( g_{1, n} g_{4, n} - g_{3, n} g_{2, n}) +
	 ( \tilde{f}_{2,n} (\hat{g}_{1, n}-1) - \tilde{f}_{1,n} \hat{g}_{2, n} )( \alpha_{1, n} \alpha_{4, n} -  \alpha_{3, n} \alpha_{2, n} ) + \\
	&( f_{1,n} \hat{g}_{2, n} - \tilde{f}_{2,n} \hat{\alpha}_{1, n})( g_{1, n} \alpha_{4, n} - g_{3, n} \alpha_{2, n}) + 
	 ( f_{2,n} (\hat{g}_{1, n}-1) - \tilde{f}_{1,n} \hat{\alpha}_{2, n})( g_{2, n} \alpha_{3, n} - g_{4, n} \alpha_{1, n}) +\\
	& ( f_{2,n} \hat{g}_{2, n} - \tilde{f}_{2,n} \hat{\alpha}_{2, n})( g_{3, n} \alpha_{1, n} - g_{1, n} \alpha_{3, n}) +
	 ( f_{1,n} (\hat{g}_{1, n}-1) - \tilde{f}_{1,n} \hat{\alpha}_{1, n})( g_{4, n} \alpha_{2, n} - g_{2, n} \alpha_{4, n}),
\end{split}
\]
and
\[
\begin{split}
d_n = &( \hat{\alpha}_{1, n} \hat{\alpha}_{4, n} - \hat{\alpha}_{3, n} \hat{\alpha}_{2, n})( g_{1, n} g_{4, n} - g_{3, n} g_{2, n}) +  \\
     & ( \hat{g}_{3, n} \hat{\alpha}_{2, n} - \hat{\alpha}_{4, n} (\hat{g}_{1, n}-1) )( g_{4, n} \alpha_{1, n} - g_{2, n} \alpha_{3, n}) + \\ 
	& ( \hat{g}_{3, n} \hat{\alpha}_{2, n} - \hat{\alpha}_{4, n} (\hat{g}_{1, n}-1) )( \alpha_{1, n} g_{4, n} - g_{2, n} \alpha_{3, n}) + \\
	&( \hat{\alpha}_{3, n} \hat{g}_{2, n} - \hat{\alpha}_{1, n} (\hat{g}_{4, n}-1))( g_{1, n} \alpha_{4, n} - g_{3, n} \alpha_{2, n}) +\\
	&(  \hat{\alpha}_{1, n} \hat{g}_{3, n} - \hat{\alpha}_{3, n} (\hat{g}_{1, n}-1) )( g_{2, n} \alpha_{4, n} -  g_{4, n} \alpha_{2, n}) + \\
	&( \hat{g}_{2, n} \hat{g}_{3, n} (\hat{g}_{4, n}-1) (\hat{g}_{1, n}-1) )( \alpha_{3, n} \alpha_{2, n} - \alpha_{1, n} \alpha_{4, n}) .
\end{split}
\]

\begin{thm}\label{thm:reson}
Consider the configuration $\mathbf{C}_0$ with $D=\emptyset$ defined in \eqref{eq:pa1} and a source term $\bff$ supported outside the domain $\Omega$.
If the Lam\'e parameters $(\hat{\lambda}, \hat{\mu})$ inside the domain $\Omega$ is chosen such that for any $M\in\mathbb{R}_+$:
\begin{equation}\label{eq:renoc}
 |\psi_{1,1,n_0}| >M,
\end{equation}
for some $n_0\in\mathbb{N}$, where $\psi_{1,1,n_0}$ is defined in \eqref{eq:sona}, then the elastic resonance occurs. 
\end{thm}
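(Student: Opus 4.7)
The plan is to rewrite $\mathscr{E}(\bu)$ as a manifestly positive volume functional in $B_R$, decouple it by angular Fourier modes, and extract a lower bound from the $n_0$-th mode alone.

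Since $D=\emptyset$ and $(\hat\lambda,\hat\mu)$ are constant on $\Omega=B_R$, the sesquilinear form \eqref{eq:func1} reduces to $P_{\hat\lambda,\hat\mu}(\bu,\bu)=\int_{B_R}\hat\lambda|\nabla\cdot\bu|^2+2\hat\mu|\nabla^s\bu|^2\,d\bx$, so that
\[
\mathscr{E}(\bu)=\Im\hat\lambda\int_{B_R}|\nabla\cdot\bu|^2\,d\bx+2\Im\hat\mu\int_{B_R}|\nabla^s\bu|^2\,d\bx.
\]
Under the assumption $\Im\hat\lambda,\Im\hat\mu>0$ of Remark~\ref{rem:s3} (supplemented by $2\Im\hat\lambda+N\Im\hat\mu>0$), the pointwise trace--deviator splitting of $\nabla^s\bu$ yields the coercivity bound $\mathscr{E}(\bu)\geq c_0\int_{B_R}|\nabla^s\bu|^2\,d\bx$ for some $c_0>0$.

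Next I would use $\bu=\hat{\bS}^{\omega}_{\partial B_R}[\bpsi_1]$ in $B_R$, insert the Fourier expansion \eqref{eq:denc}, and apply Proposition~\ref{pro:sinin} to obtain the Navier expansion
\[
\bu(\bx)=\sum_{n\in\mathbb{Z}}\bigl(\alpha_n\mathbf{Q}_n^i+\beta_n\mathbf{P}_n^i\bigr)(\bx),\qquad\begin{pmatrix}\alpha_n\\ \beta_n\end{pmatrix}=\mathcal{M}_n\begin{pmatrix}\psi_{1,1,n}\\ \psi_{1,2,n}\end{pmatrix},
\]
where $\mathcal{M}_n$ is the explicit $2\times 2$ matrix consisting of the Bessel--Hankel coefficients from Proposition~\ref{pro:sinin} evaluated with the metamaterial wavenumbers $\omega/\sqrt{\hat\mu}$ and $\omega/\sqrt{\hat\lambda+2\hat\mu}$. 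Because every summand carries the angular factor $e^{\rmi n\theta}$ acting on the rotating frame $(\bnu,\bt)$, rotation equivariance makes distinct Fourier indices $L^2(B_R)^2$-orthogonal, both for the fields themselves and for their symmetric gradients. Consequently the energy decouples,
\[
\int_{B_R}|\nabla^s\bu|^2\,d\bx=\sum_{n\in\mathbb{Z}}\mathcal{Q}_n(\alpha_n,\beta_n),\qquad \mathcal{Q}_n(\alpha,\beta):=\int_{B_R}\bigl|\nabla^s(\alpha\mathbf{Q}_n^i+\beta\mathbf{P}_n^i)\bigr|^2\,d\bx.
\]

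To conclude, I isolate the $n_0$-th mode. Since $\mathbf{Q}_{n_0}^i$ (pure shear) and $\mathbf{P}_{n_0}^i$ (pure pressure) are linearly independent non-rigid solutions of the homogeneous Lam\'e system in $B_R$, the Gram matrix of their symmetric gradients is positive definite, so $\mathcal{Q}_{n_0}(\alpha,\beta)\geq c_{n_0}(|\alpha|^2+|\beta|^2)$ for some $c_{n_0}>0$. Moreover, via the recursion relations \eqref{eq:re} the determinant of $\mathcal{M}_{n_0}$ collapses to an explicit Bessel--Hankel Wronskian-type expression that is non-vanishing for generic parameters, so $\mathcal{M}_{n_0}$ is invertible and $|\alpha_{n_0}|^2+|\beta_{n_0}|^2\geq\|\mathcal{M}_{n_0}^{-1}\|^{-2}|\psi_{1,1,n_0}|^2$. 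Chaining the estimates,
\[
\mathscr{E}(\bu)\geq c_0\,\mathcal{Q}_{n_0}(\alpha_{n_0},\beta_{n_0})\geq C_{n_0}|\psi_{1,1,n_0}|^2>C_{n_0}M^2\geq M
\]
for $M\gg 1$, which is precisely the resonance condition \eqref{con:res}. The main obstacle is the uniform quantitative control in the last step: one has to rule out accidental vanishing of $\det\mathcal{M}_{n_0}$ and of the Gram determinant defining $c_{n_0}$, both of which reduce to explicit non-vanishing conditions on Bessel/Hankel combinations at $(\omega R/\sqrt{\hat\mu},\omega R/\sqrt{\hat\lambda+2\hat\mu})$; these are generic conditions that hold away from a discrete exceptional set of $(\omega,R,\hat\lambda,\hat\mu)$.
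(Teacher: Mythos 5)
Your proof is essentially correct but takes a genuinely different route from the paper's. The paper applies Green's formula, reducing $\mathscr{E}(\bu)=\Im P_{\hat\lambda,\hat\mu}(\bu,\bu)$ to $\Im\int_{\partial\Omega}\bu\cdot\partial_{\hat\bnu}\overline{\bu}\,ds$ (the $\omega^2\int_\Omega|\bu|^2$ piece is real and drops out), then decouples the boundary integral by Fourier modes and keeps the $n_0$-th mode's $\bnu$-component, giving directly $\mathscr{E}(\bu)\geq|\psi_{1,1,n_0}|^2\,\Im\bigl(\int_{\partial\Omega}\hat{\bS}^\omega_{\partial\Omega}[e^{\rmi n_0\theta}\bnu]\cdot\overline{\partial_{\hat\bnu}\hat{\bS}^\omega_{\partial\Omega}[e^{\rmi n_0\theta}\bnu]}\bigr|_-\bigr)$. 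You instead stay with the volume form, use $\Im\hat\lambda,\Im\hat\mu>0$ to obtain coercivity of $\mathscr{E}$ in $\|\nabla^s\bu\|_{L^2(B_R)}^2$, decompose by $SO(2)$-isotypic components (your rotation-equivariance argument for orthogonality of distinct $n$'s is correct), and then relate the $n_0$-th mode coefficients to $\psi_{1,1,n_0}$ via the $2\times 2$ transfer matrix $\mathcal{M}_{n_0}$. What the paper's route buys is avoiding the two extra verifications your route needs: (i) positive-definiteness of the Gram matrix of $\nabla^s\mathbf{Q}_{n_0}^i$, $\nabla^s\mathbf{P}_{n_0}^i$, and (ii) invertibility of $\mathcal{M}_{n_0}$. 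In exchange, your version makes the positivity structure fully explicit, whereas the paper's single-line inequality silently discards the $\psi_{1,2,n}$ and cross terms in the mode-$n_0$ quadratic form, which also requires a positivity argument that the paper does not spell out. One point you should address more carefully: the lower bound $C_{n_0}|\psi_{1,1,n_0}|^2$ carries a constant $C_{n_0}=c_0\, c_{n_0}\,\|\mathcal{M}_{n_0}^{-1}\|^{-2}$ that depends on $(\hat\lambda,\hat\mu)$, which are precisely the parameters being tuned so that $|\psi_{1,1,n_0}|\to\infty$; the chain only yields a blow-up of $\mathscr{E}$ if $C_{n_0}$ stays bounded away from zero along the chosen parameter sequence. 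This is consistent with the paper's Remark~\ref{rem:nn1}, which records that beyond the quasi-static regime one requires $\Im\hat\mu\to c^*\neq 0$ (so $c_0$ does not vanish), but it is a uniformity hypothesis you should state explicitly rather than leave implicit in the word ``generic.''
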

\begin{proof}
With the help of the Green's formula, the dissipation energy defined in \eqref{def:E} can be written as 
\[
\begin{split}
E(\bu) &=  \Im P_{\hat{\lambda},\hat{\mu}}(\bu,\bu)  = \Im \left( \omega^2\int_{\Omega} |\bu|^2 d\bx + \int_{\partial \Omega} \bu \cdot \partial_{\bnu} \overline{\bu}  \right)  = \Im \left(  \int_{\partial \Omega} \bu \cdot \partial_{\bnu}\overline{\bu} \right) \\
 & \geq |\psi_{1,1,n}|^2 \Im \left(  \int_{\partial \Omega} \hat{\bS}_{\partial\Omega}^{\omega}[e^{\rmi n \theta} \bnu] \cdot \overline{ \partial_{\hat{\bnu}} \hat{\bS}_{\partial\Omega}^{\omega}[e^{\rmi n \theta} \bnu]} |_- \right),
\end{split}
\]
which shows that the resonance occurs thanks to \eqref{eq:renoc} and completes the proof. 
\end{proof}

\begin{rem}
If the Lam\'e parameters $(\hat{\lambda}, \hat{\mu})$ inside the domain $\Omega$ are chosen as follows:
\begin{equation}\label{eq:spa}
(\hat{\lambda}, \hat{\mu}) = c (\lambda, \mu),
\end{equation}
where $(\lambda, \mu)$ are the Lam\'e parameters in $\mathbb{R}^2\backslash \overline{\Omega}$. For a large order $n$ such that the asymptotic expansions \eqref{eq:asn} hold, the parameter $c$ should have the following asymptotic expansion such that the condition \eqref{eq:renoc} holds:
\[
 c=-\frac{\lambda + 3\mu}{\lambda + \mu} + \vartheta_n, 
\] 
where $\vartheta_n =\Ocal(1/n)$. In fact, for a large order $n$, the solutions of the equation \eqref{eq:tm} have the following asymptotic expansions:
\begin{equation*}\label{eq:so1}
\begin{split}
\psi_{1,1,n} =\frac{\varepsilon_{1,n}}{((1+c)\lambda + (3+c)\mu) \varrho +  \Ocal(1/n)}, \ \
\psi_{1,2,n} =\frac{\varepsilon_{2,n}}{((1+c)\lambda + (3+c)\mu) \varrho + \Ocal(1/n)},
\end{split}
\end{equation*}
where $\varrho$ is a constant not depending on $n$ and
\[
\begin{split}
\varepsilon_{1,n} = &  \left((c-1) \omega^2 R^2 (\lambda+\mu) (\lambda+3 \mu)-8 c \mu (\lambda+2 \mu) ((c+3) \lambda+(c+7) \mu)\right)\times \\
& 256 c \mu^2 n (c_1+c_2) (\lambda+2 \mu)^3\left( 1+ \Ocal\left(\frac{1}{n}\right) \right),
\end{split}
\]
\[
\begin{split}
\varepsilon_{2,n} = & -\left((c-1) \omega^2 R^2 (\lambda+\mu) (\lambda+3 \mu)+8 c \mu (\lambda+2 \mu) (3 c \lambda+(3 c+5) \mu+\lambda)\right)\times \\
& 256 \rmi c \mu^2 n (c_1+c_2) (\lambda+2 \mu)^3\left( 1+ \Ocal\left(\frac{1}{n}\right) \right).
\end{split}
\]
Thus one can readily conclude that the parameter $c$ should have the following asymptotic expansion 
\[
 c=-\frac{\lambda + 3\mu}{\lambda + \mu} + \Ocal\left( \frac{1}{n} \right), 
\] 
such that the condition \eqref{eq:renoc} holds.
\end{rem}

Next, we show that the condition \eqref{eq:renoc} can be achieved. The Lam\'e parameters inside the domain $\Omega$ are chosen as those in \eqref{eq:spa}. The other parameters are chosen as follows:
\[
 n=5, \quad \lambda=\mu=\omega=R=1, \quad \Im c=2.08\times10^{-9}.
\]
This is the case beyond the quasi-static approximation from the values of $\omega$ and $R$. The absolute value of $\psi_{1,1,n}$ given in \eqref{eq:sona} with respect to the real part of $c$, i.e. $\Re c$, is plotted in Fig.\ref{fig:reno1}. This clearly shows that the condition \eqref{eq:renoc} is fulfilled and thus the resonance occurs.
\begin{figure}[t]
\centering
\includegraphics[width=0.3\textwidth]{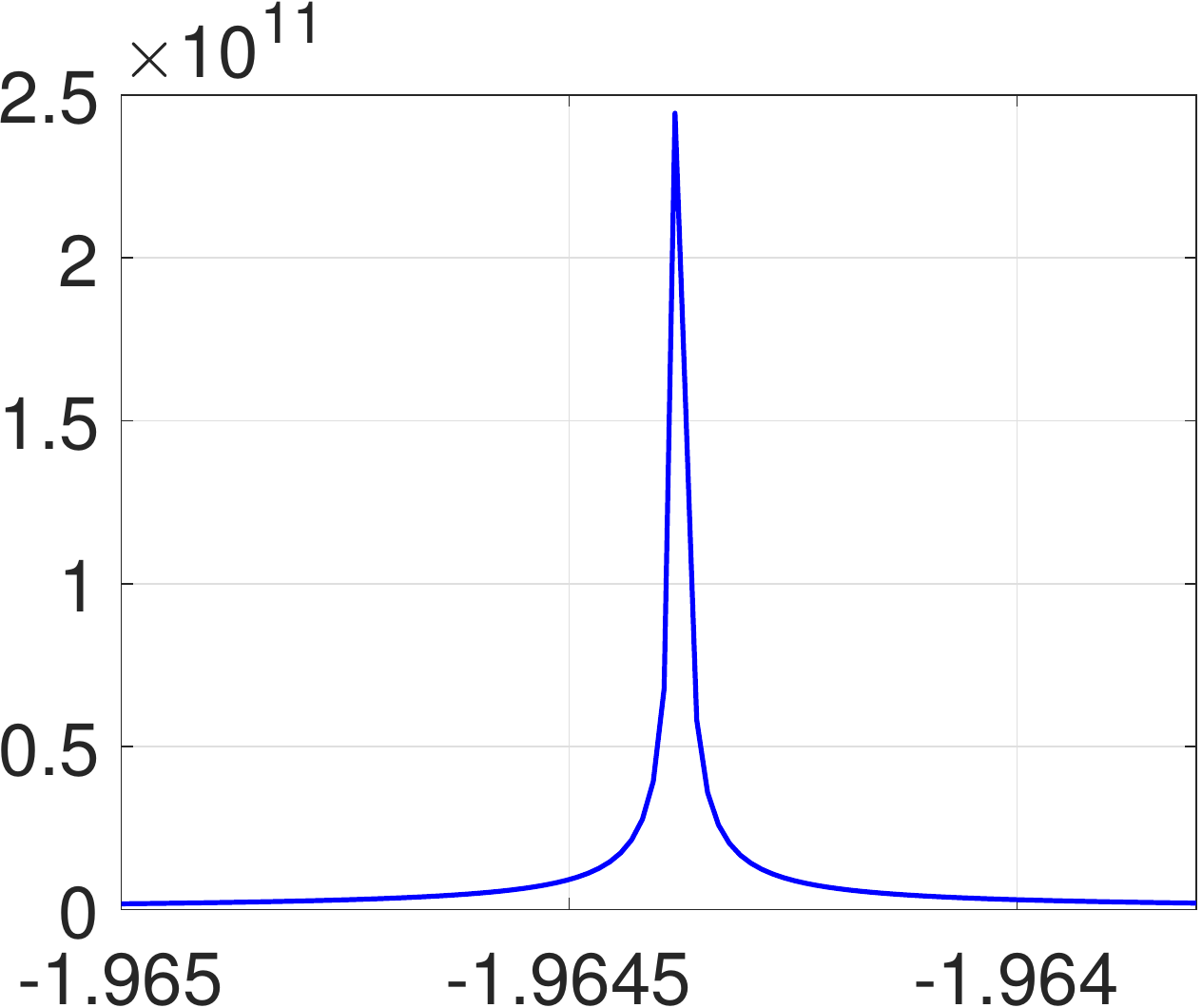}
\caption{\label{fig:reno1}  The absolute value of $\psi_{1,1,n}$ given in \eqref{eq:sona} with respect to $\Re c$. Horizontal axis: value of $\Re c$; Vertical axis: absolute value of $\psi_{1,1,n}$.}
\end{figure}

\begin{rem}\label{rem:nn1}
To ensure the occurrence of the resonance, in the quasi-static case, the condition $\Im c\rightarrow 0$ is required (cf. \cite{DLL, LLL}). However, in our current case beyond the quasi-static regime, one usually requires $\Im c\rightarrow c^*$ with $c^*\neq 0$. This is a sharp difference from the quasi-static case. Next, we conduct a numerical simulation to verify this statement. The parameters are chosen as follows
\[
 n=5, \quad \lambda=\mu=\omega=R=1, \quad \Re c=-1.9643,
\]
which is the case beyond the quasi-static approximation from the values of $\omega$ and $R$. The absolute value of $\psi_{1,1,n}$ given in \eqref{eq:sona} with respect to the imaginary part of $c$, i.e. $\Im c$, is plotted in Fig.\ref{fig:reno2}. This clearly shows that the resonance occurs and the critical value $\Im c\neq 0$.
\begin{figure}[t]
\centering
\includegraphics[width=0.3\textwidth]{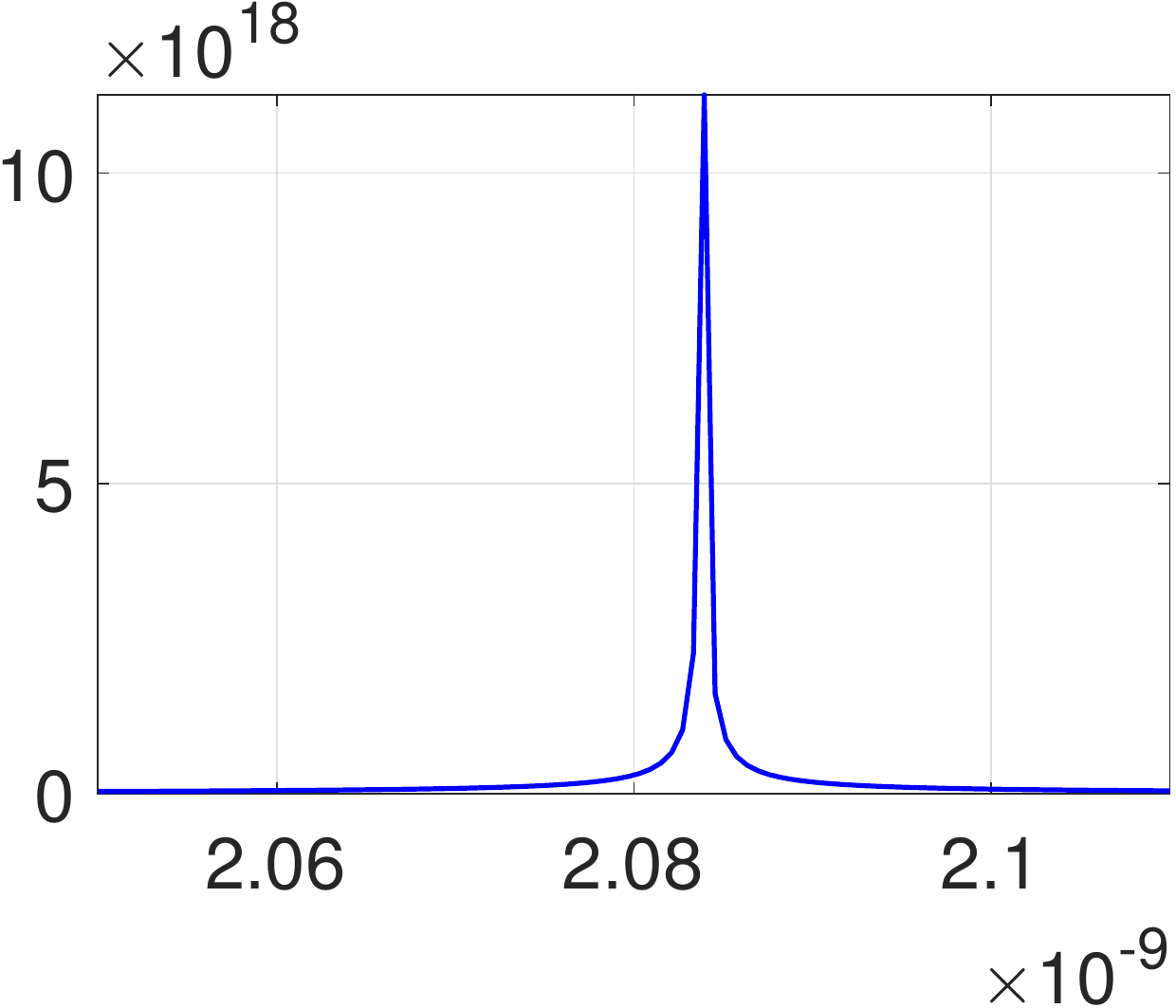}
\caption{\label{fig:reno2}  The absolute value of $\psi_{1,1,n}$ given in \eqref{eq:sona} with respect to $\Im c$. Horizontal axis: value of $\Im c$; Vertical axis: absolute value of $\psi_{1,1,n}$.}
\end{figure}
\end{rem}

Finally, we consider the quantitative behaviours of the resonant fields when resonance occurs. It is recalled that in the static/quasi-static regime, the plasmon/polariton resonances are localized around the metamaterial interface. However, we shall show that in the frequency regime beyond the quasi-static approximation, the resonant oscillation outside the material structure is localized around the metamaterial interface, but inside the material structure it is not localized around the interface, which is in sharp contrast to the subwavelength resonance. In fact, from the expression of the solution in \eqref{eq:sol} and the density functions in \eqref{eq:denc}, it is sufficient to analyze the properties of single layer potentials  $\mathbf{S}_{\partial B_{R}}^{\omega}\left[e^{\mathrm{i} n \theta} \boldsymbol{\nu}\right](\bx)$ and $\mathbf{S}_{\partial B_{R}}^{\omega}\left[e^{\mathrm{i} n \theta} \mathbf{t}\right](\bx)$ expressed in Theorem \ref{thm:sing} and Proposition \ref{pro:sinin} for $\bx$ lying in different regions. Here, we only take the term  $\mathbf{S}_{\partial B_{R}}^{\omega}\left[e^{\mathrm{i} n \theta} \boldsymbol{\nu}\right](\bx)$ to illustrate the phenomenon {as the discussion 
is the same for the term $\mathbf{S}_{\partial B_{R}}^{\omega}\left[e^{\mathrm{i} n \theta} \mathbf{t}\right](\bx)$.} The parameters are chosen as follows:
\begin{equation}\label{eq:pa1_n}
n=5, \quad \lambda=\mu=R=1, \quad \omega=20,
\end{equation}
which is the case beyond the quasi-static approximation from the values of $\omega$ and $R$. The amplitude of the single layer potential $\mathbf{S}_{\partial B_{R}}^{\omega}\left[e^{\mathrm{i} n \theta} \boldsymbol{\nu}\right](\bx)$ for $|\bx|\leq 1$ and $1<|\bx|<3$ are plotted in Fig.\,\ref{sin1}(a) and (b), respectively. From the plot, one can conclude that the field outside $B_1$ is localized around the surface $\partial B_1$, while the field inside $B_1$ is not localized around the boundary. 
\begin{figure}[t]
\centering
\subfigure[]{
\includegraphics[width=0.4\textwidth]{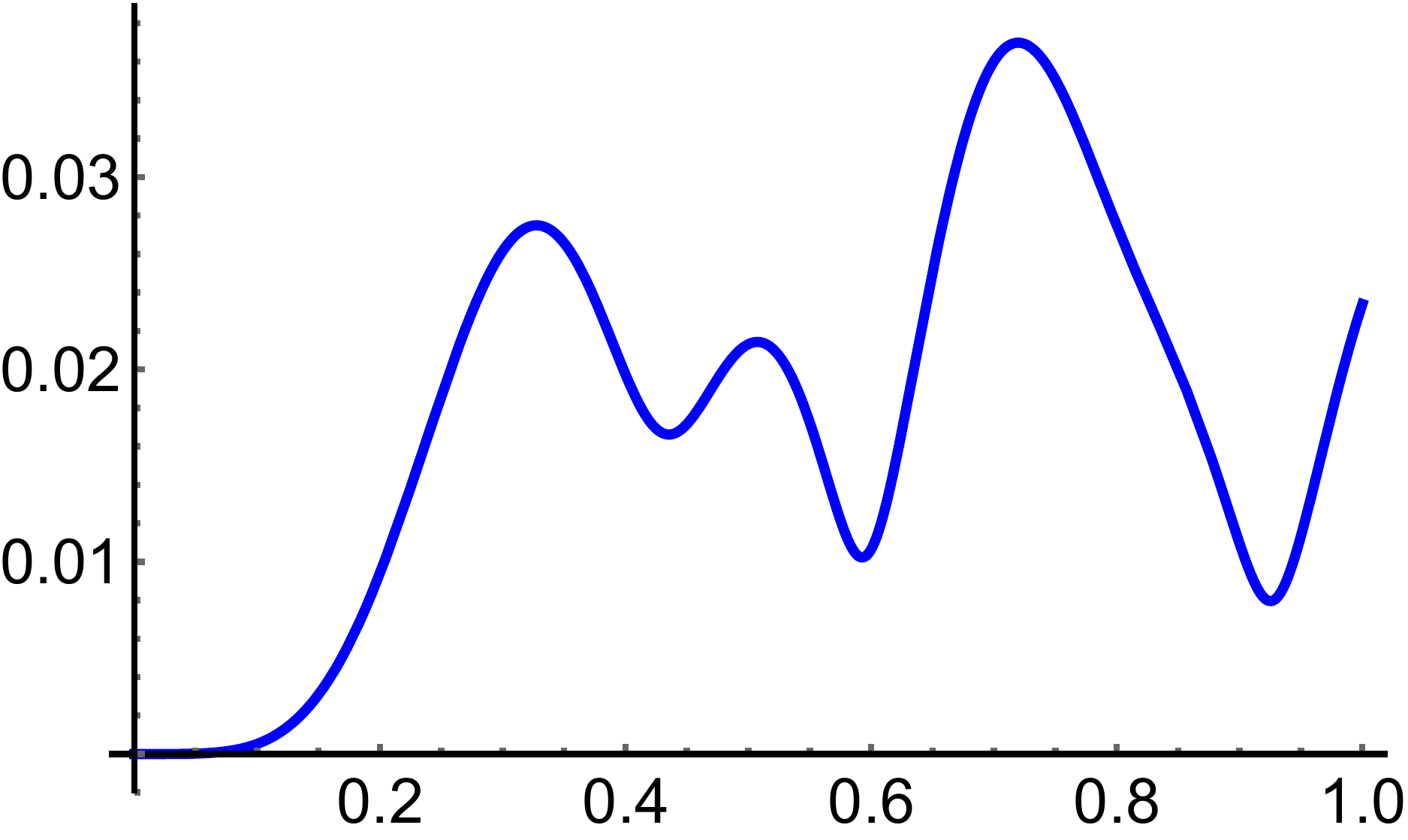}}
\hspace{1cm}
\subfigure[]{
\includegraphics[width=0.4\textwidth]{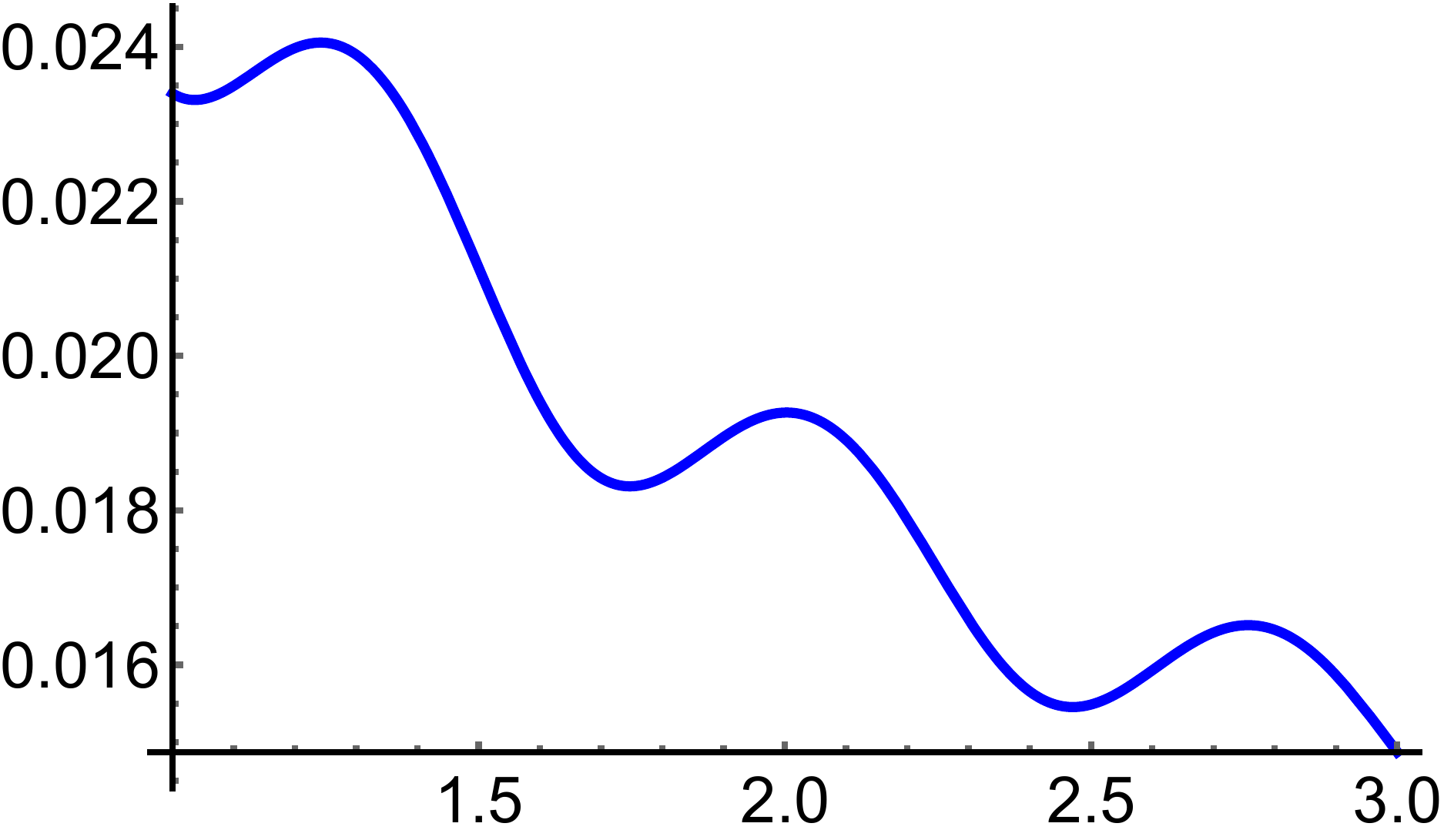}}\\
\caption{\label{sin1} The amplitude of the single layer potential $\mathbf{S}_{\partial B_{R}}^{\omega}\left[e^{\mathrm{i} n \theta} \boldsymbol{\nu}\right](\bx)$ with parameters chosen in \eqref{eq:pa1_n} for (a) ~$|\bx|\leq 1$; (b) ~$1<|\bx|\leq 3$.  }
\end{figure}
If we choose the parameters as follows:
\begin{equation}\label{eq:pa2}
 n=5, \quad \lambda=\mu=R=1, \quad \omega=0.1,
\end{equation}
which is the case of the quasi-static approximation. The amplitude of the single layer potential $\mathbf{S}_{\partial B_{R}}^{\omega}\left[e^{\mathrm{i} n \theta} \boldsymbol{\nu}\right](\bx)$ for $|\bx|\leq 1$ and $1<|\bx|<2$ are plotted in 
Fig.\,\ref{sin2}(a) and (b), respectively. From the plot, one can conclude that the fields both inside and outside $B_1$ are localized around the surface $\partial B_1$. Finally, we would like to remark that by using the relevant results in \cite{DLL}, one can show that the elastodynamical resonances in 3D reveal similar behaviours as the 2D case discussed above. 
\begin{figure}[t]
\centering
\subfigure[]{
\includegraphics[width=0.4\textwidth]{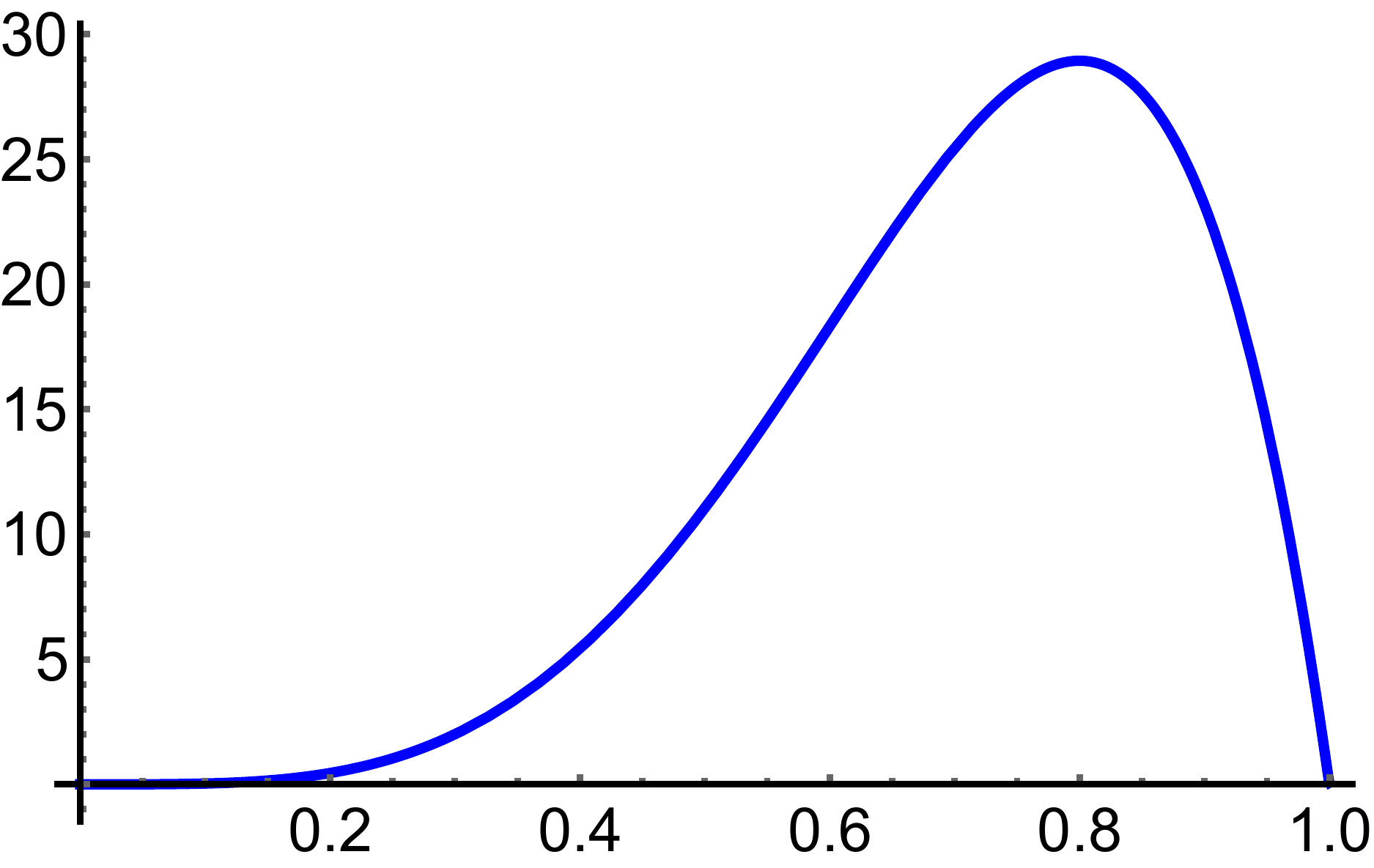}}
\hspace{1cm}
\subfigure[]{
\includegraphics[width=0.4\textwidth]{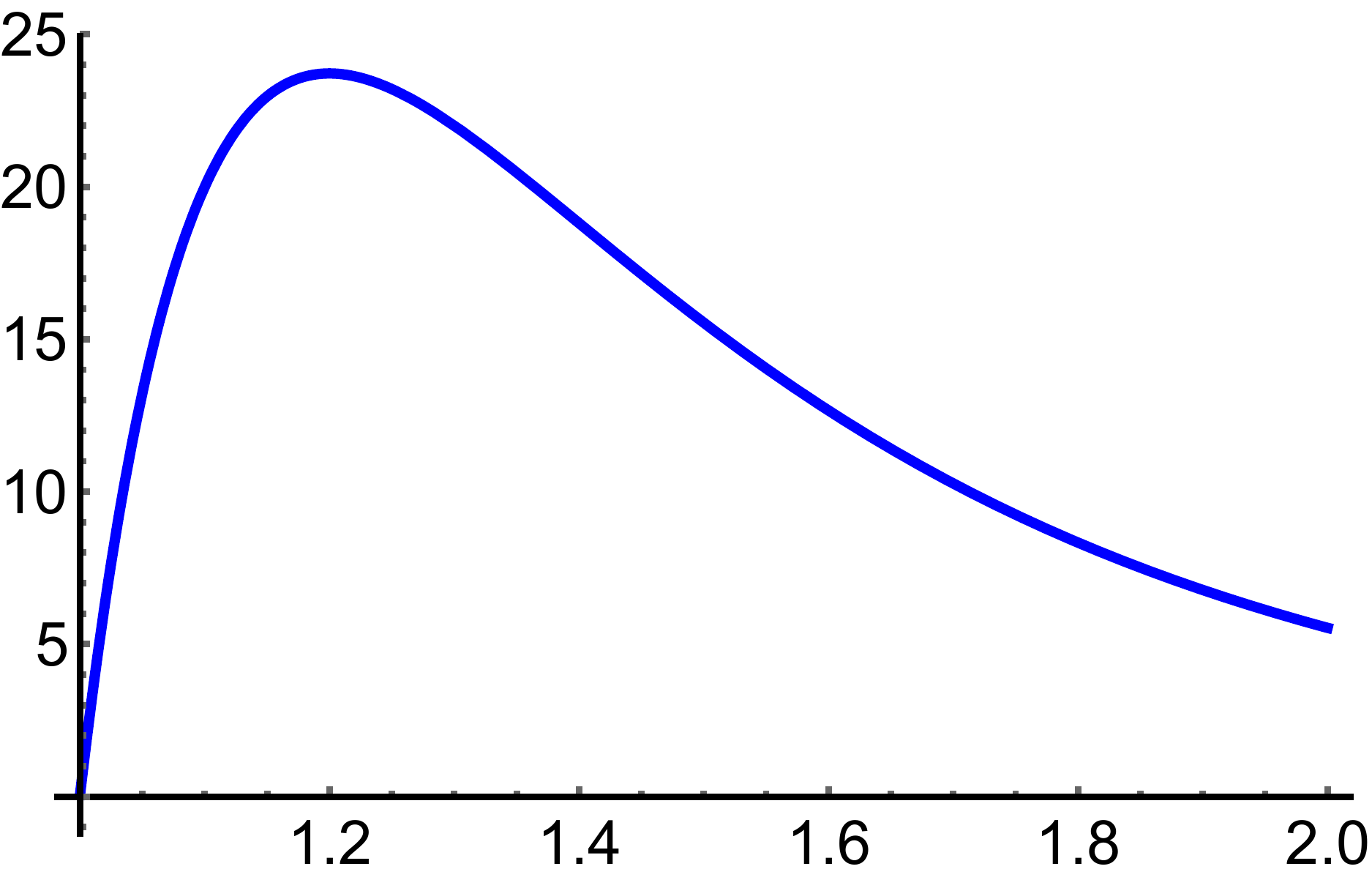}}\\
\caption{\label{sin2} The amplitude of the single layer potential $\mathbf{S}_{\partial B_{R}}^{\omega}\left[e^{\mathrm{i} n \theta} \boldsymbol{\nu}\right](\bx)$ with parameters chosen in \eqref{eq:pa2} for (a) ~$|\bx|\leq 1$; (b) ~$1<|\bx|\leq 2$.  }
\end{figure}

 \section{CALR for a core-shell structure beyond the quasi-static approximation}
 
In this section, we construct a core-shell elastic structure that can induce anomalous localized resonance; see Definition~\ref{def:1}. We confine our study in two dimensions and as mentioned earlier, we refer to \cite{DLL2} for related studies in the three-dimensional case. In what follows, we let $D=B_{r_i}$ and $\Omega=B_{r_e}$, $r_e>r_i$.  Moreover, we let $\mathcal{L}_{\breve{\lambda}, \breve{\mu}}$, $\partial_{\breve{\bnu}}$, $\breve{\bS}_{\partial D}$ and $(\breve{\bK}_{\partial D}^{\omega})^* $, respectively, denote the Lam\'e operator, the associated conormal derivative, the single layer potential operator and the N-P operator associated with the Lam\'e parameters $(\breve{\lambda}, \breve{\mu})$. 

 Assume that the source $\bff$ is supported outside $\Omega$. Associated with the material structure $\mathbf{C}_0$ in \eqref{eq:pa1} with $D$ and $\Omega$ given above, the elastic system \eqref{eq:lame1} becomes
\begin{equation}\label{eq:calr1}
  \left\{
    \begin{array}{ll}
      \mathcal{L}_{\breve{\lambda},\breve{\mu}}\bu(\bx) + \omega^2\bu(\bx)  =0 , & \mbox{in} \ \ D , \medskip  \\
      \mathcal{L}_{\hat{\lambda},\hat\mu}\bu(\bx)+ \omega^2\bu(\bx)  =0 , & \mbox{in} \ \ \Omega\backslash \overline{D} ,\medskip   \\
      \mathcal{L}_{\lambda, \mu}\bu(\bx) + \omega^2\bu(\bx)   =\bff, &  \mbox{in} \ \ \mathbb{R}^2\backslash \overline{\Omega},\medskip \\
      \bu|_- = \bu|_+, \quad \partial_{\breve{\bnu}}\bu|_- = \partial_{\hat{\bnu}}\bu|_+   & \mbox{on} \ \ \partial D ,\medskip\\
      \bu|_- = \bu|_+, \quad  \partial_{\hat{\bnu}}\bu|_- = \partial_{\bnu}\bu|_+ & \mbox{on} \; \partial \Omega.
    \end{array}
  \right.
\end{equation}

With the help of the potential theory, the solution to the equation system \eqref{eq:calr1} can be represented by 
\begin{equation}\label{eq:sc1}
  \bu(\bx)=
  \left\{
    \begin{array}{ll}
      \breve{\bS}^\omega_{\partial D}[\bvarphi_1](\bx), & \bx\in D,\medskip \\
      \hat{\bS}^\omega_{\partial D}[\bvarphi_2](\bx) + \hat{\bS}^\omega_{\partial\Omega}[\bvarphi_3](\bx), & \bx\in \Omega\backslash \overline{D},\medskip \\
      \bS^\omega_{\partial\Omega}[\bvarphi_4](\bx) + \mathbf{F}(\bx), & \bx\in \mathbb{R}^2\backslash \overline{\Omega},
    \end{array}
  \right.
\end{equation}
where $\bvarphi_1, \bvarphi_2, \bvarphi_3, \bvarphi_4\in L^2(\partial D)^2$ and $\mathbf{F}$ is the Newtonian potential of the source $\bff$ defined in \eqref{eq:newpf}. One can easily see that the solution given \eqref{eq:sc1} satisfies the first three condition in \eqref{eq:calr1} and the last two conditions on the boundary yield that 
\begin{equation}\label{eq:sc2}
  \left\{
    \begin{array}{ll}
      \breve{\bS}^\omega_{\partial D}[\bvarphi_1]=\hat{\bS}^\omega_{\partial D}[\bvarphi_2] + \hat{\bS}^\omega_{\partial \Omega}[\bvarphi_3], & \mbox{on} \quad \partial D,\medskip \\
       \partial_{\breve{\bnu}}\breve{\bS}^\omega_{\partial D}[\bvarphi_1|_- = \partial_{\hat{\bnu}}(\hat{\bS}^\omega_{\partial D}[\bvarphi_2] + \hat{\bS}^\omega_{\partial \Omega}[\bvarphi_3])|_+ , & \mbox{on} \quad \partial D, \medskip \\
      \hat{\bS}^\omega_{\partial D}[\bvarphi_2] + \hat{\bS}^\omega_{\partial \Omega}[\bvarphi_3]= \bS^\omega_{\partial \Omega}[\bvarphi_4] + \mathbf{F}, & \mbox{on} \quad \partial\Omega, \medskip \\
      \partial_{\hat{\bnu}}(\hat{\bS}^\omega_{\partial D}[\bvarphi_2] + \hat{\bS}^\omega_{\partial \Omega}[\bvarphi_3])|_- = \partial_{\bnu}(\bS^\omega_{\partial \Omega}[\bvarphi_4] + \mathbf{F})|_+ , & \mbox{on} \quad \partial \Omega.
    \end{array}
  \right.
\end{equation}
With the help of the jump formual in \eqref{eq:jump}, the equation system \eqref{eq:sc2} further yields the following integral system,
\begin{equation}\label{eq:cma}
  \left[
    \begin{array}{cccc}
        \breve{\bS}^\omega_{\partial D} & -\hat{\bS}^\omega_{\partial D,i} & -\hat{\bS}^\omega_{\partial \Omega, i} & 0 \\
      -\frac{1}{2} +(\breve{\bK}_{\partial D}^{\omega})^*   & -\frac{1}{2} - (\hat{\bK}_{\partial D}^{\omega})^*  &-\partial_{\hat{\bnu}_i} \hat{\bS}^\omega_{\partial\Omega} & 0 \\
      0 &  \hat{\bS}^\omega_{\partial D, e} & \hat{\bS}^\omega_{\partial \Omega, e} & - \bS^\omega_{\partial \Omega} \\
      0 & \partial_{\hat\bnu_e}\hat{\bS}^\omega_{\partial D} & -\frac{1}{2} + (\hat{\bK}_{\partial\Omega}^{\omega})^* & -\frac{1}{2} -({\bK}_{\partial\Omega}^{\omega})^* \\
    \end{array}
  \right]
\left[
  \begin{array}{c}
    \bvarphi_1 \\
    \bvarphi_2 \\
    \bvarphi_3 \\
    \bvarphi_4 \\
  \end{array}
\right]=
\left[
  \begin{array}{c}
    0 \\
    0 \\
    \mathbf{F} \\
    \partial_{\bnu}\mathbf{F} \\
  \end{array}
\right],
\end{equation}
where $\partial_{\hat{\bnu}_i}$ and $ \partial_{\hat\bnu_e}$ signify the conormal derivatives on the boundaries of $D$ and $\Omega$, respectively.

Following similar arguments as those in the previous section, there exists $\epsilon>0$ such that when $\bx\in B_{r_e + \epsilon}$ the Newtonian potential $\bF$ can be written as: 
\begin{equation}\label{eq:FF2}
  \bF= \sum_{n\geq N} \left(  \frac{ \kappa_{1,n} k_s r_e}{n J_n(k_s r_e)}   \mathbf{Q}_n^i   \right),
\end{equation}
where $\kappa_{1,n}$ are the coefficients, the functions $\mathbf{Q}_n^i $ are defined in \eqref{eq:qi}, and $N$ is large enough such the the spherical Bessel and Hankel functions, $J_n(t)$ and $H_n(t)$, fulfil the asymptotic expansions shown in \eqref{eq:asj}. 
We would like to remark that the Newtonian potential $\bF$ only contains the term $\mathbf{Q}_n^i$. Indeed, one can also include the term $\mathbf{P}_n^i $ and the analysis will be similar. To ease the exposition, we only consider the case that the Newtonian potential $\bF$ contains the term $ \mathbf{Q}_n^i $ only.
From the expressions for the functions $\mathbf{Q}_n^i $ in \eqref{eq:qi}, one has that on the boundary $\partial B_R$: 
\begin{equation}\label{eq:f}
 \bF= \sum_{n\geq N}  \bb_n^t \bff_n,
\end{equation}
where 
\[
\bb_n= \left(
   \begin{array}{c}
     e^{\rmi n \theta} \bnu \\
     e^{\rmi n \theta} \bt \\
   \end{array}
 \right), \quad 
 \bff_n =  \left(
   \begin{array}{c}
    f_{1,n}  \\
    f_{2,n} \\
   \end{array}
 \right)=  \left(
   \begin{array}{c}
     \kappa_{1,n} \eta_{1,n}  \\
     \kappa_{1,n} \eta_{2,n}\\
   \end{array}
 \right),
\]
with 
\[
 \eta_{1,n}=2, \quad \eta_{2,n}= \frac{2\rmi k_s r_e J_{n}^{\prime}(k_s r_e)}{n J_n(k_s r_e)}.
\]
Moreover, from the identities in \eqref{eq:tpi}, one has that on the boundary $\partial B_{r_e}$:
\begin{equation}\label{eq:tf}
\partial_{\bnu} \bF=\sum_{n\geq N}  \bb_n^t \tilde{\bff}_n,
\end{equation}
where
\[
 \tilde{\bff}_n =  \left(
   \begin{array}{c}
    \tilde{f}_{1,n}  \\
    \tilde{f}_{2,n} \\
   \end{array}
 \right)= \left(
   \begin{array}{c}
     \frac{  \kappa_{1,n} \gamma_{1,n}  k_s r_e}{n J_n(k_s r_e)}   \\
    \frac{   \kappa_{1,n} \gamma_{2,n}  k_s r_e}{n J_n(k_s r_e)} \\
   \end{array}
 \right),
\]
with $\gamma_{i,n}$, $1\leq i \leq 2$ given in \eqref{eq:tpi} with $R$ replaced by $r_e$.

Lemmas \ref{lem:eisin} and \ref{lem:pnu} show that based on the basis $\left(e^{\rmi n \theta}\bnu, e^{\rmi n \theta}\bt \right)$, the operators in the system \eqref{eq:cma} have the following expressions:
\[
\begin{split}
&\breve{\bS}^\omega_{\partial D}= \Rcal_{11n}, \quad  \hat{\bS}^\omega_{\partial D,i} =  \Rcal_{12n}, \quad  \hat{\bS}^\omega_{\partial \Omega, i}=  \Rcal_{13n}, \quad  (\breve{\bK}_{\partial D}^{\omega})^* =  \Rcal_{21n}, \\
&(\hat{\bK}_{\partial D}^{\omega})^*=  \Rcal_{22n}, \quad  \partial_{\hat{\bnu}_i} \hat{\bS}^\omega_{\partial\Omega} =  \Rcal_{23n}, \quad  \hat{\bS}^\omega_{\partial D, e}=  \Rcal_{32n}, \quad  \hat{\bS}^\omega_{\partial \Omega, e} =  \Rcal_{33n}, \\
&\bS^\omega_{\partial \Omega}=  \Rcal_{34n}, \quad   \partial_{\hat\bnu_e}\hat{\bS}^\omega_{\partial D}  =  \Rcal_{42n}, \quad  (\hat{\bK}_{\partial\Omega}^{\omega})^* =  \Rcal_{43n}, \quad  ({\bK}_{\partial\Omega}^{\omega})^*  =  \Rcal_{44n}, \\
\end{split}
\]
where 
\[
\Rcal_{11n} =
 \left(
   \begin{array}{cc}
      \breve{\alpha}_{1ni} &  \breve{\alpha}_{3ni} \\
      \breve{\alpha}_{2ni} &  \breve{\alpha}_{4ni} \\
   \end{array}
 \right), \quad
\Rcal_{12n} =
 \left(
   \begin{array}{cc}
      \hat{\alpha}_{1ni} &  \hat{\alpha}_{3ni} \\
      \hat{\alpha}_{2ni} &  \hat{\alpha}_{4ni} \\
   \end{array}
 \right), \quad
 \Rcal_{13n} =
 \left(
   \begin{array}{cc}
      \hat{\eta}_{1ni} &  \hat{\eta}_{3ni} \\
      \hat{\eta}_{2ni} &  \hat{\eta}_{4ni} \\
   \end{array}
 \right),  
\]
\[
\Rcal_{21n} =
 \left(
   \begin{array}{cc}
      \breve{a}_{1ni} &  \breve{b}_{1ni} \\
      \breve{a}_{2ni} &  \breve{b}_{2ni} \\
   \end{array}
 \right), \quad
\Rcal_{22n} =
 \left(
   \begin{array}{cc}
      \hat{a}_{1ni} &  \hat{b}_{1ni} \\
      \hat{a}_{2ni} &  \hat{b}_{2ni} \\
   \end{array}
 \right), \quad
 \Rcal_{23n} =
 \left(
   \begin{array}{cc}
      \hat{\zeta}_{1ni} &  \hat{\zeta}_{3ni} \\
      \hat{\zeta}_{2ni} &  \hat{\zeta}_{4ni} \\
   \end{array}
 \right),  
\]

\[
\Rcal_{32n} =
 \left(
   \begin{array}{cc}
      \hat{\eta}_{1ne} &  \hat{\eta}_{3ne} \\
      \hat{\eta}_{2ne} &  \hat{\eta}_{4ne} \\
   \end{array}
 \right), \quad
\Rcal_{33n} =
 \left(
   \begin{array}{cc}
      \hat{\alpha}_{1ne} &  \hat{\alpha}_{3ne} \\
      \hat{\alpha}_{2ne} &  \hat{\alpha}_{4ne} \\
   \end{array}
 \right), \quad
\Rcal_{34n} =
 \left(
   \begin{array}{cc}
      \alpha_{1ne} &  \alpha_{3ne} \\
      \alpha_{2ne} &  \alpha_{4ne} \\
   \end{array}
 \right),  
\]

\[
 \Rcal_{42n} =
 \left(
   \begin{array}{cc}
      \hat{\zeta}_{1ne} &  \hat{\zeta}_{3ne} \\
      \hat{\zeta}_{2ne} &  \hat{\zeta}_{4ne} \\
   \end{array}
 \right) , \quad
\Rcal_{43n} =
 \left(
   \begin{array}{cc}
      \hat{a}_{1ne} &  \hat{b}_{1ne} \\
      \hat{a}_{2ne} &  \hat{b}_{2ne} \\
   \end{array}
 \right), \quad
\Rcal_{44n} =
 \left(
   \begin{array}{cc}
      a_{1ne} &  b_{1ne} \\
      a_{2ne} &  b_{2ne} \\
   \end{array}
 \right).
\]
In the above epxressions, $\breve{\alpha}_{jni}$ with $j=1,2,3,4$ are given in Lemmas \ref{lem:eisin} with $R$ replaced by $r_i$ and with $(\mu, \lambda)$ replaced by $(\breve{\mu}, \breve{\lambda})$, and $\breve{a}_{jni}$ $\breve{b}_{jni}$ with $j=1,2$ are given in Lemma \ref{lem:npr} with $R$ replaced by $r_i$ and with $(\mu, \lambda)$ replaced by $(\breve{\mu}, \breve{\lambda})$. The same principle holds for parameters $\hat{\alpha}_{jni}, \hat{a}_{jni}, \hat{b}_{jni}, \hat{\alpha}_{jne}, \hat{a}_{jne}, \hat{b}_{jne}, {\alpha}_{jne}, {a}_{jne}, {b}_{jne}$ and the other parameters are given as follows:
\[
  \hat{\eta}_{1 ni} = -\frac{\rmi \pi}{2\omega^2 r_i} \left( n^2 J_n(\hat{k}_s r_i) H_n(\hat{k}_s r_e) + \hat{k}_p^2 r_i r_e J^{\prime}_n(\hat{k}_p r_i) H^{\prime}_n(\hat{k}_p r_e)  \right),
\]
\[
  \hat{\eta}_{2 ni} = \frac{n \pi}{2\omega^2 r_i } \left(\hat{k}_s r_i J^{\prime}_n(\hat{k}_s r_i) H_n(\hat{k}_s r_e) + \hat{k}_p r_e J_n(\hat{k}_p r_i) H^{\prime}_n(\hat{k}_p r_e)  \right),
\]
\[
  \hat{\eta}_{3 ni} = -\frac{n \pi }{2\omega^2 r_i} \left(\hat{k}_s r_e J_n(\hat{k}_s r_i) H^{\prime}_n(\hat{k}_s r_e) + \hat{k}_p r_i J^{\prime}_n(\hat{k}_p r_i) H_n(\hat{k}_p r_e)  \right),
\]
\[
  \hat{\eta}_{4 ni} = -\frac{\rmi \pi}{2\omega^2 r_i} \left( \hat{k}_s^2 r_e r_i J^{\prime}_n(\hat{k}_s r_i) H^{\prime}_n(\hat{k}_s r_e) + n^2 J_n(\hat{k}_p r_i) H_n(\hat{k}_p r_e)  \right);
\]

\[
  \hat{\eta}_{1 ne} = -\frac{\rmi \pi}{2\omega^2 r_e} \left( n^2 J_n(\hat{k}_s r_i) H_n(\hat{k}_s r_e) + \hat{k}_p^2 r_i r_e J^{\prime}_n(\hat{k}_p r_i) H^{\prime}_n(\hat{k}_p r_e)  \right),
\]
\[
  \hat{\eta}_{2 ne} = \frac{n \pi}{2\omega^2 r_e } \left(\hat{k}_s r_e J_n(\hat{k}_s r_i) H^{\prime}_n(\hat{k}_s r_e) + \hat{k}_p r_i J^{\prime}_n(\hat{k}_p r_i) H_n(\hat{k}_p r_e)  \right),
\]
\[
  \hat{\eta}_{3 ne} = -\frac{n \pi }{2\omega^2 r_e} \left(\hat{k}_s r_i J^{\prime}_n(\hat{k}_s r_i) H_n(\hat{k}_s r_e) + \hat{k}_p r_e J_n(\hat{k}_p r_i) H^{\prime}_n(\hat{k}_p r_e)  \right),
\]
\[
  \hat{\eta}_{4 ne} = -\frac{\rmi \pi }{2\omega^2 r_e} \left( \hat{k}_s^2 r_e r_i J^{\prime}_n(\hat{k}_s r_i) H^{\prime}_n(\hat{k}_s r_e) + n^2 J_n(\hat{k}_p r_i) H_n(\hat{k}_p r_e)  \right);
\]

\[
\begin{split}
\hat{\zeta}_{1ni} = &\frac{\mathrm{i} \pi}{2 \omega^{2} r_2^{2}}\left(2 \hat{\mu} n^{2} J_{n}(\hat{k}_{s} r_1)\left(H_{n}(\hat{k}_{s} r_2 )-\hat{k}_{s} r_2 H_{n}^{\prime} (\hat{k}_{s} r_2 )\right)+  \right. \\
&\left. J_{n}^{\prime}(\hat{k}_{p} r_1) \hat{k}_{p} r_1 \left(H_{n} (\hat{k}_{p} r_2 )\left(\omega^{2} r_2^{2}-2 \hat{\mu} n^{2}\right)+2 \hat{k}_{p} \hat{\mu} r_2 H_{n}^{\prime} (\hat{k}_{p} r_2 )\right)\right),
\end{split}
\]
\[
\begin{split}
\hat{\zeta}_{2ni} = &\frac{n \hat{\mu} \pi}{2 \omega^{2} r_2^{2}}\left(H_n(\hat{k}_s r_2)  J_n(\hat{k}_s r_1) \left(2 n^2- \hat{k}_s^2  {r_2}^2\right)-  2 J_n^{\prime}(\hat{k}_p r_1)  \hat{k}_p  {r_1}\times \right. \\
&\left.   \left( H_n(\hat{k}_p r_2)- H_n^{\prime}(\hat{k}_p r_2)  \hat{k}_p  {r_2}\right) - 2 H_n^{\prime}(\hat{k}_s r_2)  J_n(\hat{k}_s r_1)  \hat{k}_s  {r_2} \right),
\end{split}
\]
\[
\begin{split}
\hat{\zeta}_{3ni} = &\frac{-n \pi}{2 \omega^{2} r_2^{2}}\left(H_n(\hat{k}_p r_2)  J_n(\hat{k}_p r_1) \left(2 \hat{\mu}  n^2- \omega^2  {r_2}^2\right) -  2\hat{\mu}  J_n^{\prime}(\hat{k}_s r_1)  \hat{k}_s  {r_1}\times \right. \\
&\left.   \left( H_n(\hat{k}_s r_2)- H_n^{\prime}(\hat{k}_s r_2)  \hat{k}_s  {r_2}\right) - 2\hat{\mu}  H_n^{\prime}(\hat{k}_p r_2)  J_n(\hat{k}_p r_1)  \hat{k}_p  {r_2} \right),
\end{split}
\]
\[
\begin{split}
\hat{\zeta}_{4ni} = &\frac{\mathrm{i} \hat{\mu} \pi}{2 \omega^{2} r_2^{2}}\left(2 n^{2} J_{n}(\hat{k}_{p} r_1)\left(H_{n}(\hat{k}_{p} r_2 )-\hat{k}_{p} r_2 H_{n}^{\prime} (\hat{k}_{p} r_2 )\right)+\right. \\
&\left.J_{n}^{\prime}(\hat{k}_{s} r_1) \hat{k}_{s} r_1\left(H_{n} (\hat{k}_{s} r_2 )\left(k_s^{2} r_2^{2}-2 n^{2}\right)+2 \hat{k}_{s}  r_2 H_{n}^{\prime} (\hat{k}_{s} r_2 )\right)\right);
\end{split}
\]

\[
\begin{split}
\hat{\zeta}_{1ne} = &\frac{\mathrm{i} \pi}{2 \omega^{2} r_1^{2}}\left(2 \hat{\mu} n^{2} H_{n}(\hat{k}_{s} r_2)\left(J_{n}(\hat{k}_{s} r_1 )-\hat{k}_{s} r_1 J_{n}^{\prime} (\hat{k}_{s} r_1 )\right)+\right. \\
&\left. H_{n}^{\prime}(\hat{k}_{p} r_2) \hat{k}_{p} r_2\left(J_{n} (\hat{k}_{p} r_1 )\left(\omega^{2} r_1^{2}-2 \hat{\mu} n^{2}\right)+2 \hat{k}_{p} \hat{\mu} r_1 H_{n}^{\prime} (\hat{k}_{p} r_1 )\right)\right),
\end{split}
\]
\[
\begin{split}
\hat{\zeta}_{2ne} = &\frac{n \hat{\mu} \pi}{2 \omega^{2} r_2^{2}}\left(J_n(\hat{k}_s r_1)  H_n(\hat{k}_s r_2) \left(2 n^2- \hat{k}_s^2  {r_1}^2\right) - 2 H_n^{\prime}(\hat{k}_p r_2)  \hat{k}_p  {r_2} \times \right. \\
&\left.  \left( J_n(\hat{k}_p r_1)- J_n^{\prime}(\hat{k}_p r_1)  \hat{k}_p  {r_1}\right) - 2 J_n^{\prime}(\hat{k}_s r_1)  H_n(\hat{k}_s r_2)  \hat{k}_s  {r_1} \right),
\end{split}
\]
\[
\begin{split}
\hat{\zeta}_{3ne} = &\frac{-n \pi}{2 \omega^{2} r_1^{2}}\left(J_n(\hat{k}_p r_1)  H_n(\hat{k}_p r_2) \left(2 \hat{\mu}  n^2- \omega^2  {r_1}^2\right) - 2\hat{\mu}  H_n^{\prime}(\hat{k}_s r_2)  \hat{k}_s  {r_2}\times \right. \\
&\left.   \left( J_n(\hat{k}_s r_1)- J_n^{\prime}(\hat{k}_s r_1)  \hat{k}_s  {r_1}\right) - 2\hat{\mu}  J_n^{\prime}(\hat{k}_p r_1)  H_n(\hat{k}_p r_2)  \hat{k}_p  {r_1} \right),
\end{split}
\]
\[
\begin{split}
\hat{\zeta}_{4ne} = &\frac{\mathrm{i} \hat{\mu} \pi}{2 \omega^{2} r_1^{2}}\left(2 n^{2} J_{n}(\hat{k}_{p} r_2)\left(H_{n}(\hat{k}_{p} r_1 )-\hat{k}_{p} r_1 H_{n}^{\prime} (\hat{k}_{p} r_1 )\right)+\right. \\
&\left.J_{n}^{\prime}(\hat{k}_{s} r_2) \hat{k}_{s} r_2\left(H_{n} (\hat{k}_{s} r_1 )\left(k_s^{2} r_1^{2}-2 n^{2}\right)+2 \hat{k}_{s}  r_1 H_{n}^{\prime} (\hat{k}_{s} r_1 )\right)\right).
\end{split}
\]
The density functions $\bvarphi_i$ with $i=1,2,3,4$ can be written as, 
\begin{equation}\label{eq:den}
 \bvarphi_i=\sum_{n=-\infty}^{\infty} \bb_n^t \bvarphi_{i,n},
\end{equation}
where 
\[
\bb_n= \left(
   \begin{array}{c}
     e^{\rmi n \theta} \bnu \\
     e^{\rmi n \theta} \bt \\
   \end{array}
 \right), \quad 
  \bvarphi_{i,n} = \left(
   \begin{array}{c}
      \psi_{i,1,n} \\
      \psi_{i,2,n} \\
   \end{array}
 \right), 
\]
and the coefficients $\psi_{i,j,n}$, $1\leq i\leq 4, j=1,2$ are needed to be determined from the system \eqref{eq:cma}. Based on the discussion above, the system \eqref{eq:cma} is equivalent to solving the system 
\begin{equation}\label{eq:cma1}
\mathbf{M}
\left[
  \begin{array}{c}
    \bvarphi_{1,n} \\
    \bvarphi_{2,n} \\
    \bvarphi_{3,n} \\
    \bvarphi_{4,n} \\
  \end{array}
\right]=
\left[
  \begin{array}{c}
    0 \\
    0 \\
    {\bff}_n \\
    \tilde{\bff}_n \\
  \end{array}
\right]
\end{equation}
where $\bff$ and $\tilde{\bff}$ are given in \eqref{eq:f} and \eqref{eq:tf}, respectively, and the matrix $\mathbf{M}$ is given by 
\[
\mathbf{M}=
\left[
    \begin{array}{cccc}
        \Rcal_{11n} & \Rcal_{12n} & \Rcal_{13n} & 0 \\
      \Rcal_{21n} & \Rcal_{22n} & \Rcal_{23n} & 0 \\
      0 & \Rcal_{32n} & \Rcal_{33n} & \Rcal_{34n} \\
      0 & \Rcal_{42n} & \Rcal_{43n} & \Rcal_{44n} \\
    \end{array}
  \right].
\]

\begin{thm}\label{thm:CALR}
Consider the configuration $(\mathbf{C}_0, \bff)$ where $\mathbf{C}_0$ is given in \eqref{eq:pa1} and the Newtonian potential $\bF$ of the source term $\bff$ has the expression shown in \eqref{eq:FF2}. If the parameters in $\mathbf{C}_0$ are chosen as follows:
\begin{equation}\label{con:002}
\breve{\mu}=\mu, \quad \hat{\mu}=-\frac{\lambda + \mu}{\lambda + 3\mu} + \rmi\delta + p_{n_0}, \quad\mbox{and} \quad\delta=\rho^{n_0},
\end{equation}	
for some $n_0$, where $p_{n_0}=\mathcal{O}(1/n_0)$ is chosen such that 
\begin{equation}\label{eq:con1}
\det \mathbf{M}=\Ocal \left( \rho^{4n_0} \right),
\end{equation}
 then anomalous localized resonance occurs if the source $\mathbf{f}$ is supported inside the critical radius $r_*=\sqrt{r_e^3/r_i}$. Moreover, if the source is supported outside $B_{r_*}$, then no resonance occurs.
\end{thm}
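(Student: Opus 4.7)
The plan is to analyze the layer potential system \eqref{eq:cma1} asymptotically as $n_0\to\infty$ (equivalently as $\delta=\rho^{n_0}\to 0$) under the parameter choice \eqref{con:002}. First, I would expand every entry of the $8\times 8$ coefficient matrix $\mathbf{M}$ using the large-order asymptotic expansions \eqref{eq:asj} for $J_n,H_n$ and their derivatives via \eqref{eq:re}. The key structural observation is that $\hat\mu=-(\lambda+\mu)/(\lambda+3\mu)$ is precisely the ``plasmon'' value at which the quasi-static elastic N-P operator on the core-shell geometry fails to be invertible. Adding the small correction $\rmi\delta+p_{n_0}$ and carefully tracking cancellations across both interfaces $\partial D$ and $\partial\Omega$ should yield a determinant whose leading order is $\delta^4=\rho^{4n_0}$, since each interface contributes a factor of $\delta$ in both the p-wave and the s-wave channels. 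The role of $p_{n_0}=\mathcal O(1/n_0)$ is precisely to eliminate the next-order correction so that \eqref{eq:con1} holds.

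Second, solving \eqref{eq:cma1} by Cramer's rule for mode $n=n_0$, each density coefficient $\psi_{i,j,n_0}$ inherits a denominator $\det\mathbf{M}\sim\rho^{4n_0}$, while the numerator is a determinant built from $\bff_{n_0}$ and $\tilde\bff_{n_0}$. Using \eqref{eq:FF2} together with the large-order asymptotics of Hankel functions evaluated at the source support $d_s:=\mathrm{dist}(\mathrm{supp}\,\bff,0)$, one obtains $|\kappa_{1,n_0}|\sim(r_e/d_s)^{n_0}$ up to polynomial factors in $n_0$. Substituting into the numerators and tracking how the inner-core reflections at $\partial D$ propagate through the shell should produce, for each $\psi_{i,j,n_0}$, an algebraic scaling of the form $(r_e/d_s)^{n_0}(r_i/r_e)^{n_0}\rho^{-4n_0}$ times slowly varying prefactors.

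Third, I would estimate the dissipation energy via Green's identity as $\mathscr{E}(\bu)\gtrsim\delta\bigl\|\hat\bS^\omega_{\partial\Omega}[\bvarphi_{3,n_0}]\bigr\|_{L^2(\Omega\setminus\overline D)}^2$ to leading order in $\rho$. Combining with the scaling from the previous step reduces the blow-up condition $\mathscr{E}(\bu)\gg 1$ to a net exponent inequality in $\rho$ that is equivalent to $d_s<r_*:=\sqrt{r_e^3/r_i}$: the critical radius emerges naturally from the ratio $(r_i/r_e)^{n_0}\cdot(r_e/d_s)^{n_0}$ being balanced against $\rho^{-3n_0}$ inside the dissipation integral. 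Conversely, for $d_s>r_*$ the source decay $(r_e/d_s)^{n_0}$ dominates the denominator's blow-up and $\mathscr{E}(\bu)$ remains uniformly bounded, so no resonance occurs. For the boundedness half of the ALR statement, I would evaluate $\bu(\bx)=\bS^\omega_{\partial\Omega}[\bvarphi_{4,n_0}](\bx)+\bF(\bx)$ for $|\bx|>r_*$ using Theorem \ref{thm:sing}; the extra Hankel decay $|H_{n_0}(k|\bx|)|\sim n_0!/(k|\bx|/2)^{n_0}$ reverses the growth and yields $|\bu|\le C$ outside $B_{r_*}$ after normalization by $\sqrt{\mathscr{E}(\bu)}$, establishing CALR.

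The main obstacle will be the two-interface cancellation analysis certifying $\det\mathbf{M}=\mathcal O(\rho^{4n_0})$. Unlike the single-interface no-core setting of Section 3, the core-shell geometry couples p-waves and s-waves through the off-diagonal blocks $\Rcal_{13n},\Rcal_{23n},\Rcal_{42n},\Rcal_{43n}$, and the cancellation relies on a chain of identities among products such as $J_n(\hat k_s r_i)H_n(\hat k_s r_e)$ and their derivatives. Extracting the leading-order asymptotics requires patient use of Wronskian identities and the recurrence \eqref{eq:re}, and identifying the correct correction $p_{n_0}$ is the most delicate bookkeeping step. Once this is under control, the remaining arguments reduce to clean exponent-matching in $\rho$ and $r_e/d_s$, which produces the critical radius $r_*=\sqrt{r_e^3/r_i}$ unambiguously.
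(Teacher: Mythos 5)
Your proposal follows the paper's overall architecture — represent the solution via single-layer potentials, reduce to the block integral system \eqref{eq:cma1}, solve each Fourier mode by Cramer's rule, estimate the dissipation energy, and match exponents to extract the critical radius. However, the exponent bookkeeping in your density estimates is off in a way that, if followed literally, would not reproduce $r_*=\sqrt{r_e^3/r_i}$.

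You claim that each density coefficient scales like
\[
\psi_{i,j,n_0}\sim \left(\frac{r_e}{d_s}\right)^{n_0}\left(\frac{r_i}{r_e}\right)^{n_0}\rho^{-4n_0}\sim\kappa_{1,n_0}\,\rho^{-3n_0},
\]
i.e.\ that the full factor $\det\mathbf{M}\sim\rho^{4n_0}$ lands in the denominator with only one power $(r_i/r_e)^{n_0}=\rho^{n_0}$ of compensation from the Cramer numerator. The paper's estimates in \eqref{eq:co2} show a substantially smaller blow-up: for instance $|\varphi_{3,1,n_0}|\lesssim |\kappa_{1,n_0}|\,\delta/(\delta^2+\rho^{2n_0})\sim|\kappa_{1,n_0}|\,\rho^{-n_0}$ and $|\varphi_{2,1,n_0}|\lesssim|\kappa_{1,n_0}|\,n_0^2\rho^{n_0}/(\delta^2+\rho^{2n_0})\sim|\kappa_{1,n_0}|\,n_0^2\rho^{-n_0}$. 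Your claimed scaling is a factor $\rho^{-2n_0}$ too large. This matters: inserting your scaling into the energy lower bound $\mathscr{E}(\bu)\gtrsim\delta\,\|\hat\bS^\omega_{\partial\Omega}[\bvarphi_{3,n_0}]\|^2$ gives $\mathscr{E}(\bu)\sim\kappa_{1,n_0}^2\rho^{-5n_0}\sim\bigl(r_e^7/(d_s^2 r_i^5)\bigr)^{n_0}$, whose blow-up threshold is $d_s<r_e^{7/2}/r_i^{5/2}$, not $r_*=\sqrt{r_e^3/r_i}$; the paper's $\mathscr{E}(\bu)\gtrsim\kappa_{1,n_0}^2\,\delta/(\delta^2+\rho^{2n_0})\sim\bigl(r_e^3/(d_s^2 r_i)\bigr)^{n_0}$ yields the correct $r_*$. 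The missing ingredient is that the Cramer numerators for $\psi_{i,j,n_0}$ do not merely pick up one factor $\rho^{n_0}$: the determinant $\det\mathbf{M}=\mathcal{O}(\rho^{4n_0})$ effectively factorizes (to leading order) as $(\delta^2+\rho^{2n_0})^2$ times a nonvanishing quantity, and the relevant minors vanish with one compensating factor of $(\delta^2+\rho^{2n_0})$ plus additional powers of $\rho$, leaving the net $\kappa_{1,n_0}\,\rho^{-n_0}$ rate — this two-interface cancellation is exactly the delicate step you flag, but your stated intermediate exponents do not incorporate it.

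A smaller point: Definition~\ref{def:1} requires the un-normalized field $\bu$ to satisfy $|\bu|\le C$ outside a fixed ball, and the paper proves this directly for $|\bx|>r_e^2/r_i$ using the decay $|\varphi_{4,j,n}|\lesssim|\kappa_{1,n}|\,n/\rho^n$ and the explicit Hankel asymptotics. Phrasing this as ``boundedness after normalization by $\sqrt{\mathscr{E}(\bu)}$'' conflates the ALR boundedness condition \eqref{con:bou} with the cloaking normalization of Remark~\ref{rem:s2}; the former must hold before any normalization is applied.
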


\begin{proof}
 If the parameters in $\mathbf{C}_0$ are chosen as in \eqref{con:002}, 
 solving the system \eqref{eq:cma1} and applying the asymptotic expansion in \eqref{eq:asn} yield that the coefficients $\bvarphi_{i,j,n} $, $2\leq i \leq 4$, $j=1, 2$ have the following asymptotic expression,
\begin{equation}\label{eq:co2}
\begin{split}
 \abs{\varphi_{2,1,n} }& \leq \abs{\frac{ \kappa_{1,n} \rho^n n^2}{(\delta^2 + \rho^{2n})} }\left(1 + \mathcal{O}\left(\frac{1}{n}\right)\right), \quad 
\abs{ \varphi_{2,2,n}} \leq \abs{\frac{\rmi\kappa_{1,n} \rho^n n^2}{(\delta^2 + \rho^{2n})} }\left(1 + \mathcal{O}\left(\frac{1}{n}\right)\right), \\
\abs{ \varphi_{3,1,n} }& \leq \abs{\frac{\rmi\kappa_{1,n} \delta }{(\delta^2 + \rho^{2n})} }\left(1 + \mathcal{O}\left(\frac{1}{n}\right)\right), \quad 
 \abs{\varphi_{3,2,n}} \leq \abs{\frac{\kappa_{1,n} \delta n }{(\delta^2 + \rho^{2n})} }\left(1 + \mathcal{O}\left(\frac{1}{n}\right)\right), \\
 \abs{ \varphi_{4,1,n} }& \leq \abs{ \frac{\rmi\kappa_{1,n} \delta n}{(\delta^2 + \rho^{2n})} }\left(1 + \mathcal{O}\left(\frac{1}{n}\right)\right), \quad 
 \abs{\varphi_{4,2,n} }\leq \abs{\frac{\kappa_{1,n} \delta n }{(\delta^2 + \rho^{2n})} }\left(1 + \mathcal{O}\left(\frac{1}{n}\right)\right),
\end{split}
\end{equation}
where if $n=n_0$, the equalities hold, and if $n\neq n_0$, the inequalities hold. 


First, we show that the polarition resonance could occur when the source is located inside the critical radius $r^*$. 
From \eqref{eq:sc1}, the displacement field $\bu$ to the system \eqref{eq:calr1} in the shell $\Omega\backslash\overline{D}$ can be represented as 
\begin{equation}
\begin{split}
\bu&=  \sum_{n\geq N} \left( \hat{\bS}^\omega_{\partial D}[\varphi_{2,1,n} e^{\rmi n \theta} \bnu + \varphi_{2,2,n} e^{\rmi n \theta} \bt ](\bx) + \hat{\bS}^\omega_{\partial\Omega}[\varphi_{3,1,n} e^{\rmi n \theta} \bnu + \varphi_{3,2,n} e^{\rmi n \theta} \bt](\bx)  \right),
\end{split}
\end{equation}
where  the coefficients $\varphi_{i,j,n_0} $, $2\leq i \leq 4$, $j=1, 2$ satisfy the asymptotic expansions in \eqref{eq:co2}.
Thus with the help of Green's formula, the dissipation energy $E(\bu)$ defined in \eqref{def:E} can be written as 
\begin{equation}\label{eq:E2}
\begin{split}
 E(\bu) =&  \Im P_{\hat{\lambda},\hat{\mu}}(\bu, \bu) =  \Im\left(  \int_{\partial\Omega} \partial_{\hat{\bnu}} \bu \overline{\bu}ds  -   \int_{\partial D} \partial_{\hat{\bnu}} \bu \overline{\bu}ds \right)\\
 \geq & \frac{\kappa_{1,n_0}^2 \delta } {\delta^2+\rho^{2n_0}}  \geq \kappa_{1,n_0}^2  \left( \frac{r_e}{r_i} \right)^{n_0}.
 \end{split}
\end{equation}
If the source $\bff$ is supported inside the critical radius $r_*=\sqrt{r_e^3/r_i}$, by \eqref{eq:FF2} and the asymptotic properties of $J_n(t)$ and $H_n(t)$ in \eqref{eq:asn}, one can verify that there exists $\tau_1\in\mathbb{R}_+$ such that
\begin{equation}\label{eq:lll2}
 \limsup_{n\rightarrow\infty} \left(\frac{\kappa_{1,n}}{r_e^n} \right)^{1/n}=\sqrt{\frac{r_i}{r_e^3}+\tau_1}.
\end{equation}
Combining  \eqref{eq:E2} and \eqref{eq:lll2}, one can obtain that
\begin{equation}\label{eq:ipr}
E(\bu) \geq  \left(\frac{r_i}{r_e}+\tau_1 r_e^2\right)^{n_0} \left( \frac{r_e}{r_i} \right)^{n_0},
\end{equation}
which exactly shows that the polariton resonance occurs, namely the condition \eqref{con:res} is fulfilled. 

Then we prove the boundedness of the solution $\bu$ when $|x|>r_e^2/r_i$; that is, the bounded condition \eqref{con:bou} is satisfied. From \eqref{eq:sc1} and \eqref{eq:den}, the displacement field $\bu$ in $\mathbb{R}^2\backslash \overline{\Omega}$  can be represented as 
\begin{equation}
\begin{split}
\bu=  &  \sum_{|n|\geq N} \left(  {\bS}^\omega_{\partial\Omega}[\bvarphi_{4,1,n} e^{\rmi n \theta} \bnu + \bvarphi_{4,2,n} e^{\rmi n \theta} \bt](\bx)  \right) + \bF(\bx).
\end{split}
\end{equation}
Moreover, from \eqref{eq:co2} and Theorem \ref{thm:sing}, one can obtain that
\begin{equation}\label{bound1}
 \begin{split}
   |\bu|  \leq  \sum_{|n|\geq N} \abs{\kappa_{1,n}}  \frac{r_e^{2n}}{r_i^{n}}  \frac{1}{r^{n}} + |\bF|\leq C,
 \end{split}
\end{equation}
when $|x|>r_e^2/r_i$. Thus from \eqref{eq:ipr} and \eqref{bound1}, one can directly conclude that the CALR could occur when the source is located inside the radius $r_*=\sqrt{r_e^3/r_i}$.

Next we consider the case when the source is supported outside the critical radius $r_*$. 
From \eqref{eq:FF2} and the asymptotic properties of $J_n(t)$ and $H_n(t)$ in \eqref{eq:asn}, one can show that there exists $\tau_2>0$ such that
\[
 \limsup_{n\rightarrow\infty} \left(\frac{\kappa_{1,n}}{r_e^n} \right)^{1/n}\leq\frac{1}{r_*+\tau_2},
\]
and the dissipation energy $E(\bu)$ can be estimated as follows
\[
 \begin{split}
   E(\bu) & \leq \sum_{n\geq N} \kappa_{1,n}^2  \left( \frac{r_e}{r_i} \right)^{n} \leq \sum_{n\geq N}  \left( \frac{1}{(r_*+\tau_2)^2} \frac{r_e}{r_i} \right)^{n}  \leq \sum_{n\geq N}  \left( \frac{1}{\left(\sqrt{r_e^3/r_i}+\tau_2\right)^2} \frac{r_e^3}{r_i} \right)^{n} \leq C, 
 \end{split}
\]
which means that the polariton resonance does not occur. This completes the proof.
\end{proof}

\begin{rem}
We can verify the condition  \eqref{eq:con1} numerically. For this, we choose the following parameters:
 \[
 \begin{split}
& n_0=25, \ \ \omega=5, \ \  r_i=0.8, \ \  r_e=1, \ \  \breve{\mu}=\breve{\lambda}=\lambda=\mu=1, \ \  \delta=(r_i/r_e)^{n_0}=0.0038.
\end{split}
 \]
From the values of the parameters $\omega$ and $r_e$, one can readily verify that this is the case beyond quasi-static approximation. The  value of $\abs{\det \mathbf{M}}$ given in \eqref{eq:con1} in terms of the parameter $p_{n_0}$ is plotted in Fig.~ \ref{fig:CALR}, which apparently demonstrates that the condition \eqref{eq:con1} is satisfied. 
\begin{figure}[t]
  \centering
 {\includegraphics[width=5cm]{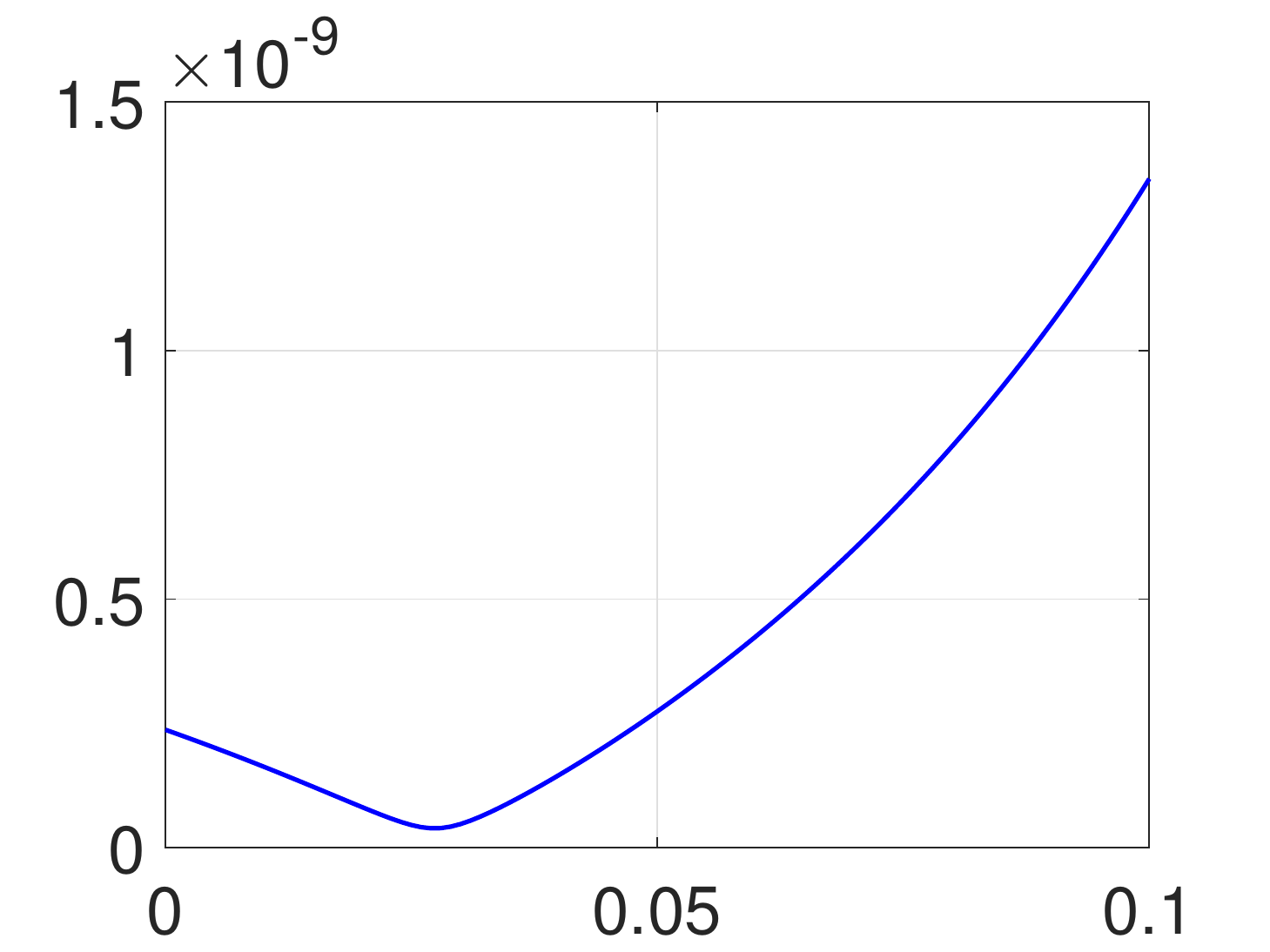}}
  \caption{The absolute value of $(\det \mathbf{M})$ given in \eqref{eq:con1} with respect to $p_{n_0}$. Horizontal axis:  value of $\abs{\det \mathbf{M}}$; Vertical axis: absolute value of $p_{n_0}$. }
  \label{fig:CALR}
\end{figure} 
\end{rem}

\section*{Acknowledgements}

The work of H. Li was supported by  Direct Grant for Research (CUHK).
The work of H. Liu was supported by the Hong Kong RGC General Research Fund (projects 12301420, 12302919, 11300821) and NSFC/RGC Joint Research Grant (project  N\_CityU101/21). 
The work of J. Zou was
substantially supported by Hong Kong RGC General Research Fund (projects 14306921 and
14306719).

\end{document}